\renewcommand\@oddfoot{}
\DeclareFontShape{T1}{lmss}{bx}{sc} { <-> ssub * phv/bx/sc }{}
\DeclareFontShape{T1}{lmss}{m}{sc} { <-> ssub * phv/m/sc }{}
\g@addto@macro\bfseries{\boldmath}
\theoremstyle{plain}
\newcolumntype{H}{>{\setbox0=\hbox\bgroup}c<{\egroup}@{}}
\tikzset{
    marrow/.style={decoration={markings,mark=at position 0.65 
    with 
    {\arrow{#1}}}, postaction=decorate}
}
\algrenewcommand\algorithmicindent{1.6em}
\NewDocumentCommand{\LeftComment}{s m}{%
  \IfBooleanF{#1}{\hspace*{\ALG@thistlm}}\(\triangleright\) #2}
\renewcommand{\setminus}{\smallsetminus}
\DeclareMathOperator{\argmin}{arg\,min}
\newcommand{\In}{\textbf{in}\xspace}
\newcommand{\Next}{\textbf{next}\xspace}
\newcommand{\Continue}{\textbf{continue}\xspace}
\newcommand{\Pop}{\textrm{pop}}
\renewcommand{\And}{\textbf{and}\xspace}
\newcommand{\R}{\ensuremath{\text{\textbf{R}}}}
\newcommand{\eps}{\ensuremath{\varepsilon}}
\newcommand{\winzig}{\textnormal{tiny}}
\newcommand{\klein}{\textnormal{small}}
\newcommand{\mittel}{\textnormal{medium}}
\newcommand{\gross}{\textnormal{big}}
\newcommand{\riesig}{\textnormal{huge}}
\newcommand{\kleinlich}{\textnormal{little}}
\newcommand{\groesslich}{\textnormal{large}}
\newcommand{\paltry}{\textnormal{paltry}}
\newcommand{\precious}{\textnormal{precious}}
\newcommand{\spread}{\textnormal{spread}}
\newcommand{\round}{\textnormal{round}}
\newcommand{\limit}{\textnormal{limit}}
\renewcommand{\split}{\textnormal{split}}
\newcommand{\actual}{\textnormal{actual}}
\newcommand{\purelypaltry}{\textnormal{small}}
\newcommand{\levelup}{\textnormal{levelUp}}
\newcommand{\maxpaltry}{\textnormal{valueLimit}}
\newcommand{\temp}{\textnormal{temp}}
\newcommand{\matches}{\textnormal{matches}}
\newcommand{\lost}{\textnormal{lost}}
\newcommand{\seen}{\textnormal{seen}}
\newcommand{\backup}{\textnormal{copy}}
\let\leftold\left
\let\rightold\right
\renewcommand{\left}{\mathopen{}\mathclose\bgroup\leftold}
\renewcommand{\right}{\aftergroup\egroup\rightold}
\newcommand{\inst}{\ensuremath{I}}
\newcommand{\instset}{\ensuremath{\mathcal{I}}}
\newcommand{\size}{\ensuremath{s}}
\newcommand{\val}{\ensuremath{v}}
\newcommand{\appsol}{\ensuremath{T}}
\newcommand{\appalg}{\ensuremath{\mathcal{A}}}
\newcommand{\opt}{\ensuremath{\textnormal{\texttt{opt}}}}
\newcommand{\app}{\ensuremath{\textnormal{\texttt{alg}}}}
\newcommand{\Knap}{\textnormal{\textsc{Knap}}}
\newcommand{\PropKnap}{\textnormal{\textsc{Prop\-Knap}}\xspace}
\newcommand{\GenRemKnap}{\textnormal{\textsc{Rem\-Knap}}\xspace}
\newcommand{\PropRemKnap}{\textnormal{\textsc{Prop\-Rem\-Knap}}\xspace}
\title{Removable Online Knapsack and Advice}
\titlerunning{Removable Online Knapsack and Advice}
\author{Hans-Joachim Böckenhauer}{Department of Computer Science, ETH Zürich}{hjb@inf.ethz.ch}{https://orcid.org/0000-0001-9164-3674}{}
\author{Fabian Frei}{CISPA Helmholtz Center for Information Security}{fabian.frei@cispa.de}{https://orcid.org/0000-0002-1368-3205}{Work done in part while at ETH Zürich.}
\author{Peter Rossmanith}{Department of Computer Science, RWTH Aachen}{rossmani@cs.rwth-aachen.de}{https://orcid.org/0000-0003-0177-8028}{}
\keywords{Removable Online Knapsack, Competitive Ratio, 
Advice Analysis, Advice Applications, Randomized Algorithms, Machine Learning and AI}
\authorrunning{H.-J.\ Böckenhauer, F.\ Frei, P.\ Rossmanith}
\begin{document}
\maketitle
\begin{abstract}
In the \emph{proportional knapsack} problem, we are 
given a 
knapsack of some capacity and a set of variably 
sized items. 
The goal is to pack a selection of these items that fills the knapsack as much as possible. 
The \emph{online} version of this problem reveals the items and their sizes not all at once but one by one. 
For each item, the algorithm has to decide immediately whether to pack it or not. 
We consider a natural variant of this online 
knapsack problem, 
which has been coined \emph{removable knapsack}.
It differs from the classical variant by allowing the removal of any packed item from the knapsack. 
Repacking is impossible, however: Once an item is removed, it is gone for good. 
We analyze the \emph{advice complexity} of this problem. 
It measures how many \emph{advice bits} an omniscient oracle needs to provide for an online algorithm to reach any given \emph{competitive ratio}, which is---understood in its strict sense---just the algorithm's approximation factor. 
The online knapsack problem is known for its peculiar advice behavior involving three jumps in competitivity.
We show that the advice complexity of the 
version with removability 
is quite different but just as interesting: 
The competitivity starts from the golden ratio when no advice is given. 
It then drops down to $1+\eps$ 
for a constant 
amount of advice already, which requires 
logarithmic advice in the 
classical version. 
Removability comes as no relief to the 
perfectionist, however: 
Optimality still requires linear advice as 
before.
These results are particularly noteworthy from a 
structural viewpoint for the exceptionally slow transition 
from near-optimality to optimality. 

Our most important and demanding result shows 
that the \emph{general} knapsack 
problem, which allows 
an item's value to differ from its size, exhibits 
a similar behavior 
for removability, but with an even more 
pronounced jump from an 
unbounded competitive ratio to near-optimality 
within just 
constantly many advice bits. This is a unique behavior among the 
problems considered in the literature so far. 

An advice analysis is interesting in its own right, as it 
allows us to measure the information content of a problem and leads 
to structural insights. But it also provides insurmountable 
lower bounds, applicable to any kind of additional 
information about the instances, including predictions provided by 
machine-learning algorithms and artificial intelligence. 
Unexpectedly, advice algorithms are useful in various 
real-life situations, too. For example, they provide smart 
strategies for cooperation in winner-take-all competitions, where 
several participants pool together to implement different strategies 
and share the obtained prize. 
Further illustrating the versatility of our advice-complexity bounds, 
our results automatically improve some of the best
known lower bounds on the competitive ratio 
for removable knapsack with randomization. 
The presented advice algorithms also automatically yield 
deterministic 
algorithms 
for established deterministic models such as knapsack with a resource 
buffer 
and various problems with more than one knapsack. 
In their seminal paper introducing removability to the knapsack 
problem, Iwama and Taketomi have indeed proposed a multiple knapsack 
problem for which we can establish 
a one-to-one correspondence with the advice model; 
this paper therefore even provides a comprehensive analysis 
for this up until now neglected problem. 
\end{abstract}
\newpage
\section{Introduction}
In this first section, we briefly summarize what online algorithms 
and advice are, 
then informally present the problem whose advice complexity we will be analyzing, 
and finally describe several applications of such advice complexity results. 

\subsection{Online Algorithms and Advice Complexity}
\emph{Online} algorithms receive their input piece by piece and have to determine parts of the solution before knowing the entire instance. 
This often leaves them unable to compete with offline algorithms, which know the entire input in advance, in a meaningful way. 
In the \emph{advice} model, 
we assume an omniscient oracle that provides the online algorithm with some information on how to solve the upcoming instance best. 
If the oracle can communicate to the algorithm an unlimited amount of such advice, it will of course be able to lead the algorithm to an optimal solution for every instance. 
The \emph{advice complexity} measures the minimum amount of information necessary for the online algorithm to achieve any given approximation ratio, which is commonly called strict \emph{competitive ratio} or \emph{competitivity} in this context. 

Advice complexity is a well-established tool to gauge the
information content of an online  problem~\cite{BKKKM2009,EFKR2011,HKK2010}. 
For a detailed and careful introduction to the theory, we refer to the textbook by Komm~\cite{Kom2016}. 
Another classical textbook on online problems is 
written by Borodin 
and Yaniv~\cite{BE1998}.
The trade-off between a low number of transmitted advice bits on 
the one hand and achieving a good competitive ratio on the other 
hand has been examined for a wealth of problems---see the survey by 
Boyar et al.~\cite{BFKLM12017}---but one stands out for its 
peculiar behavior: the knapsack problem. 

\subsection{Knapsack and Removability}
A knapsack instance presents the online algorithm with a sequence of items of different sizes. 
Upon the arrival of each item, the algorithm has to decide whether to pack it into a knapsack or discard it. 
The goal is to fill the knapsack as much as 
possible without ever 
exceeding the knapsack's given capacity. 
This problem is sometimes also referred to as 
the 
\emph{proportional} or \emph{simple} knapsack 
problem, as opposed 
to the \emph{general} knapsack problem, in which 
every item has not 
only a size but also a value.\footnote{It is also 
quite common for 
the proportional and general knapsack problem to 
be called 
unweighted and weighted, respectively. The notion 
\emph{weight} is 
ambiguous, however, as some authors~\cite{BKKR2014} 
use it for what 
is called size here, while others~\cite{IZ2010} use 
it for what is 
called the value here or profit elsewhere. For the 
sake of clarity, 
we are well advised to avoid the term weight altogether.} In the 
generalized version, 
the goal is to maximize the total value of all 
packed items. 
With no further specification given, we are 
always referring to 
the proportional case.

A variant of the knapsack problem has been proposed by Iwama and
Taketomi~\cite{IT2002} under the name of \emph{removable} knapsack.
In this model, we can 
discard 
an item not 
only when
it is first presented to us; we may also remove a packed item from the
knapsack at any point. This is possible only once for each item, however;
once removed, an item cannot be repacked. As for the classical problem
without removability, the capacity of the knapsack may not be exceeded
at any point in time.   
Recently, Rossmanith has introduced a similar relaxed online setting for graph problems where decisions are taken only when constraints make it inevitable~\cite{Rossmanith2018}. 

This model is arguably just as natural a way to 
translate the 
knapsack problem into the online setting as the 
more well-examined 
variant without removability. In many cases, it 
will not be hard to 
discard items at regular intervals, only the chance 
of obtaining 
specific objects is subject to special 
circumstances. 
For a practical example, consider a storage room 
in which you can 
store all kinds of objects that you come across 
over time. In the 
beginning you can just keep collecting everything, 
but by doing so 
you inevitably run out of space before too long. 
Then you will have 
to start disposing of some of your possessions 
to make room for 
new, potentially more interesting acquisitions. 
Your goal is to end up with a selection of just 
the most meaningful 
and useful items that you could have. 
This paper analyzes the advice complexity of 
both the 
proportional and general removable knapsack 
problem. It is telling 
you how much information about upcoming 
opportunities you need to 
ensure an outcome that is either optimal or off 
by at most a given factor. 

\subsection{Advice Applications}
Besides inherently interesting insights into the 
information complexity of the knapsack problem, 
our advice algorithms also offer more 
concrete applications. 
Any algorithm reading a bounded number of advice bits can be implemented by running a bounded number of 
deterministic algorithms in parallel and selecting 
the best result. 
An advice analysis thus tells us, for example, 
how to optimally organize a betting pool in a
winner-take-all scenario. 
Our main result in particular provides 
a smart selection of strategies to be assigned 
to a mere constant number of actors 
such that one is guaranteed to be as close 
to optimality as we desire, 
no matter how difficult the instances of the 
general knapsack problem with removability may become. 

A further advantage of analyzing the advice complexity 
of a problem is that the resulting bounds are very versatile.  
The lower bounds are particularly strong. They show that 
a certain competitivity cannot be achieved with a given amount 
of additional information, regardless of the form this advice may take. 
The oracle is indeed able to convey to the algorithm all kinds of 
structural information about the adversarial instance; 
for example, in the case of our knapsack problem, 
whether items smaller than a given 
threshold should or must not be ignored, 
whether replacing packed items by later ones will ever be beneficial, 
whether the values span more than a certain range, 
whether an optimal solution fills the knapsack completely, 
whether there are multiple optimal solutions, and so on. 
Lower bounds on the competitivity of advice algorithms 
imply lower bounds for randomized algorithms, 
and our results indeed improve upon the best bounds known 
for randomization; \cref{thm:general_one_bit_lower} even completely 
closes the remaining gap in the analysis of barely random 
algorithms for the general knapsack problem.

There are also interesting implications for deterministic algorithms. 
Consider the multiple knapsack problem in which every item is either 
rejected or packed into one of $k>1$ knapsacks; 
the goal is for the algorithm to have in the end one knapsack that is as full as possible. 
This problem has been analyzed with removability by Iwama and Taketomi in 
the proportional case. 
In the conclusion of their paper~\cite{IT2002}, they pose it as an open 
problem to analyze this model if we are allowed to copy items and pack them 
into arbitrarily many of the available knapsacks. 
It turns out that deterministic algorithms for this problem 
with different $k$s are equivalent to advice algorithms: 
An advice algorithm restricted to $\log k$ advice bits can read up to $k$ 
different advice strings. Even if the algorithm reads the entire advice string 
right at the beginning, before taking any decision, it will thus implement 
one of $k$ deterministic strategies. 
Having $k$ knapsacks and being able to pack each item into several of them 
at the same time means that we can just simulate each possible strategy 
in one of the knapsacks and see which one leads to the best result in the end. 
Conversely, the oracle in our model already knows the optimal choice and 
can communicate to an advice algorithm which of the knapsacks it should be simulating. 
Knowing the advice complexity of our problem with only one knapsack therefore 
automatically yields a comprehensive competitive analysis for this deterministic 
problem for any $k>1$. 
All of this remains true for the general knapsack problem; thus our results 
provide a comprehensive picture for the proposed model in both the proportional 
and non-proportional case. 
We remark that algorithms for the resource buffer model are generally not 
applicable here. Algorithm~$2$ by Han et al.~\cite[Thm.~9]{HKMY2019}, for example, 
keeps regrouping items in every step and thus crucially relies on having 
a single large buffer instead of multiple standard-sized knapsacks without an 
option to shuffle items between them.

\section{Preliminaries}
Throughout this paper, $\log$ denotes the binary logarithm. 
We formally define the removable knapsack problem as follows. 
\begin{definition}[Removable Knapsack Problem]
\GenRemKnap is an online maximization problem. 
An instance $\inst$ is a sequence 
$(\size_1,\val_1),\dots,(\size_n,\val_n)$ of $n$ items, 
each of 
which is a pair of some real positive \emph{size} 
$\size_i$ and 
\emph{value} $\val_i$. 
Where useful, we denote size and value of an item 
$i$ functionally 
by $\size(i)= \size_i$ and $\val(i)=\val_i$. 
The domain of this function naturally extends to 
arbitrary subsets 
$\appsol\subseteq\{1,\dots,n\}$ of items by defining 
$\size(\appsol)=\sum_{i\in\appsol}\size(i)$ and 
$\val(\appsol)=\sum_{i\in\appsol}\val(i)$. 
The knapsack has a maximum size capacity, which we 
normalize to be 
$1$. 

The instance is presented item by item to an online algorithm $\appalg$ that has to maintain a \emph{packing}, a set of packed items. We call the total size of the currently packed items the current \emph{filling} of the knapsack. 
The algorithm starts out with an empty knapsack, represented by the empty set $\appsol_0=\emptyset$. 
When presented with item $i$, the algorithm may 
first remove any of 
the items $\appsol_{i-1}$ packed so far; then it 
may pack the new 
item if this does not exceed the knapsack 
capacity. In other words, 
the algorithm selects a subset 
$\appsol_i\subseteq\appsol_{i-1}\cup\{i\}$ with 
$\size(\appsol_i)\le 1$ in step $i$. 
The algorithm learns the size of item $i$ only 
once it is presented 
and only learns the total number $n$ of items 
after selecting 
$\appsol_n$. 
The final packing computed by $\appalg$ is denoted 
by 
$\appsol=\appsol_n$. 
The \emph{gain} that we aim to maximize is the 
total value 
$\val(\appsol)$ of the final packing. 

The proportional variant \PropRemKnap additionally 
satisfies 
$\size_i=\val_i$ for each item $i$.
\end{definition}

\begin{definition}[Competitive Ratio]
Let an online maximization problem with instance set $\instset$ be given and let $\appalg$ be an online algorithm solving it. 
For any instance $\inst\in\instset$, denote by $\app(\inst)$ the gain that $\appalg$ achieves on $\inst$ and by $\opt(\inst)$ the gain of an optimal solution to $\inst$ computed offline. 
The \emph{competitive performance} of $\appalg$ on an instance $\inst\in\instset$ is $\opt(\inst)/\app(\inst)$. 
For any $\rho\in\R$, the algorithm $\appalg$ is called \emph{strictly $\rho$-competitive} if it performs $\rho$-competitively across all instances, that is, if 
$\forall\,\inst\in\instset\colon\ \opt(\inst)/\app(\inst)\le \rho$.
The infimal competitivity $\textnormal{inf}\{\,\rho\in\R\mid \appalg\text{ is strictly $\rho$-competitive}\,\}$
is called \emph{strict competitive ratio} of $\appalg$.
We can weaken the defining inequality above so that it only needs to hold asymptotically in the sense of 
$\exists\,\alpha\in\R_+\colon\ \forall\,\inst\in\instset\colon\  
\opt(\inst)\le \rho\cdot\app(\inst)+\alpha$.
If this condition is met, we call $\appalg$ \emph{nonstrictly $\rho$-competitive}. 
\end{definition}

Note that strict $\rho$-competitivity implies nonstrict $\rho$-competitivity but not vice versa, making it harder to prove lower bounds for nonstrict competitivity. 
For the knapsack problem, however, it makes sense to always analyze competitivity in the strict sense: 
On the one hand, we obtain a nonstrict lower bound from a strict one by scaling up the knapsack capacity and all item sizes in a hard instance set such that the smallest item is strictly larger than $\alpha$. 
If, on the other hand, scaling is impossible due to the problem being defined with a fixed knapsack capacity of $1$, for example, then choosing $\alpha=1$ shows any online algorithm is $1$-competitive in the nonstrict sense.

\section{Related Work}
Knapsack is one of the 21 \textnormal{NP}-complete decision problem in Karp's famous list~\cite{Kar1972}. 
An algorithm based on dynamic programming solves both the proportional and the general version in pseudo-polynomial time; see Bellman~\cite[Section~1.4]{B57} for the general technique and Dantzig~\cite[p.~275]{D57} for a concrete description of its application to the knapsack problem.
The pseudo-polynomial time algorithm can be adapted to the optimization version, 
yielding a fully 
polynomial-time approximation 
scheme~\cite{IK1975}. 
In the following two subsections, we list the known 
results on the 
advice complexity of the proportional knapsack 
problem, first for 
the classical version and then for the variant 
allowing the removal 
of packed items. 

\subsection{Knapsack without Removability}\label{sec:relatedworkproportional}
Marchetti-Spaccamela and Vercellis were the 
first to consider the 
classical online version of the knapsack problem 
in 1995. They 
called it the \emph{$\{0,1\}$ knapsack problem} to 
distinguish it 
from the \emph{fractional} knapsack problem, which 
allows for 
packing items partially. 
They proved that both versions have an unbounded competitive ratio if items are allowed to have sizes different from their values~\cite[Thm.~2.1]{MV1995}. 
We denote the classical 
problem with 
neither fractional items nor removability by 
\Knap and its 
proportional variant by \PropKnap. 

The concept of advice emerged much later. When it did, \Knap quickly became one of the prime examples of a problem with an interesting advice complexity. 

First, just a single advice bit brings with it a jump from non-competitivity to a {$2$-competitive} algorithm~\cite[Thm.~4]{BKKR2014}.
More advice bits do not help however, as long as the number stays below $\lfloor\log(n-1)\rfloor$~\cite[Thm.~5]{BKKR2014}.
Once this threshold is surpassed, logarithmic advice allows for a competitive ratio that is arbitrarily close to 1~\cite[Thm.~6]{BKKR2014}. 
Achieving optimality, finally, requires at least $n-1$ advice bits~\cite[Thm.~3]{BKKR2014}.  

The situation for the general variant is simpler: 
Any algorithm 
reading less than $\log n$ advice bits has an 
unbounded competitive 
ratio, but $\mathcal{O}(\log n)$ advice suffices for 
a near-optimal 
solution~\cite[Thms.~11~and~12]{BKKR2014}. 
A schematic plot of the advice complexity 
behaviors just described 
can be found in 
Figure~\ref{fig:prop_schematics_combined} in light gray.

\subsection{Online Knapsack Variants}\label{sec:relatedworkgeneral}
Iwama and Taketomi~\cite{IT2002} proposed the online knapsack model with removability as it is examined in the present paper.  They proved that the competitive ratio for the proportional variant of this problem, which we denote by \PropRemKnap, is exactly 
the golden ratio. 
Iwama and Zhang later considered the problem with resource augmentation, that is, for online algorithms that may use a larger knapsack than the offline algorithm~\cite{IZ2010}. 

Later still, Han et al.~\cite{HKMG2014} proved an upper bound of $5/3$ on the competitive ratio for a variant of \PropRemKnap where the value $\val$ of an item is not necessarily proportional to its size $\size$ but not arbitrary either; instead, the value is given by a convex function $\val=f(\size)$ known to the algorithm. They also proved the golden ratio to be optimal if $f$ has some further technical properties.
Han et al.~\cite{HKM2014} considered online knapsack with removal costs, a variant of \PropRemKnap where items can be removed, but not for free. 

Noga and Sarbua~\cite{NS05} considered a knapsack 
variant, where it 
is possible to split each arriving item in two 
parts of not 
necessarily equal size, and combine this with 
resource 
augmentation. 
Han and Makino~\cite{HM2010} considered another partially fractional variant of \PropRemKnap where each item can be split a constant number of times at any time. 
Most importantly in our context, Han et al.~\cite{HKM2015} examined randomized algorithms for \PropRemKnap, proving an upper bound of $10/7$ and a lower bound of $5/4$ on the expected competitivity. 
Cygan et al.~\cite{CJS2016} extended the 
study of randomization for \PropRemKnap to a variant with 
multiple knapsacks. 
Recently, Böckenhauer et al.~\cite{BBHLR21} have 
introduced a new 
model for the online proportional knapsack 
problem in which items 
can be stored outside of the knapsack until the 
instance ends after 
paying a reservation fee that is a fixed 
fraction $\alpha$ of the 
item's value. 

\section{Results for Proportional Removable 
Knapsack}
In Section~\ref{sec:prop_opt_lower}, we consider how much---or rather, how little---removability helps when trying to obtain an optimal solution. 
In Section~\ref{sec:prop_one_bit}, we prove upper and lower bounds on what is possible with a single advice bit. 
Finally, we prove in Section~\ref{sec:prop_near_opt_upper} 
that a constant amount of 
advice is sufficient to achieve a competitive 
ratio of $1+\eps$, for an arbitrary 
$\eps>0$, and a constant depending on $\eps$. 
See Figure~\ref{fig:prop_schematics_combined} for a rough 
representation of these 
results in dark gray.

\begin{figure}
\begin{tikzpicture}[yscale=.9,xscale=1]
\begin{axis}[
x tick label style={rotate=0,anchor=center,yshift=-7.2},
y tick label style={rotate=0,anchor=east,xshift=1},
axis x line = bottom,
axis y line = left,
xmin=0,ymin=0,
xmax=25,ymax=11,
enlargelimits = false,
xtick align = outside,
ytick align = outside,
width=\textwidth-2em,
height=4.8cm,
xlabel={Advice Bits},
x label style={rotate=0,at={(axis description cs:1.005,-0.04)},anchor=south},
ylabel={Competitivity},
y label style={rotate=-90+0,at={(axis description cs:0.2,1.1)},anchor=near ticklabel},
ytick={0,2,6,8},
yticklabels = {$1$,$1+\eps$,$\Phi$,$2$},
xtick={0,2,12,22},
extra x ticks={9,21},
xticklabels = {$\phantom(0\hphantom)$,$O(1)$,$O(\log n)$,$\hphantom(n\hphantom)$},
extra x tick labels = {$\hphantom(\log 
n\hphantom)$,$\hphantom(n-\log n\hphantom)$}, 
extra x tick style = {
x tick label style={rotate=0,anchor=east,xshift=8.6,yshift=.35},
},
]
\addplot [black,fill=black,opacity=0.15] coordinates {
(0,10.2) (0.2,9.6) (0.6,10.4) (1,9.6) (1.4,10.4) (1.8,9.6) (2,10.2) (2,8) (9,8) (12,2) (21,2) (22,0)
}
|- (0,0) -- cycle;
\addplot [black,fill=black,opacity=0.3] coordinates {
(0,6) (2,2) (21,2) (22,0)
}
|- (0,0) -- cycle;
\end{axis}
\node[opacity=0.3] at (4.9,2.6) {Proportional 
Knapsack 
without 
Removability \dots};
\node[opacity=0.6] at (7.8,0.85) {\dots{}\;and with 
Removability};
\end{tikzpicture}
\begin{tikzpicture}[yscale=.9,xscale=1]
\begin{axis}[
x tick label style={rotate=0,anchor=center,yshift=-7.2},
y tick label style={rotate=0,anchor=east,xshift=1},
axis x line = bottom,
axis y line = left,
xmin=0,ymin=0,
xmax=25,ymax=11,
enlargelimits = false,
xtick align = outside,
ytick align = outside,
width=\textwidth-2em,
height=4.8cm,
xlabel={Advice Bits},
x label style={rotate=0,at={(axis description 
cs:1.005,-0.04)},anchor=south},
ylabel={Competitivity},
y label style={rotate=-90+0,at={(axis description 
cs:0.2,1.1)},anchor=near ticklabel},
ytick={0,2,8},
yticklabels = {$1$,$1+\eps$,$2$},
xtick={0,2,12,22},
extra x ticks={9,21},
xticklabels = {$\phantom(0\hphantom)$,$O(1)$,$O(\log 
n)$,$\hphantom(n\hphantom)$},
extra x tick labels = {$\hphantom(\log 
n\hphantom)$,$\hphantom(n-\log n\hphantom)$}, 
extra x tick style = {
x tick label 
style={rotate=0,anchor=east,xshift=8.6,yshift=.35},
},
]
\addplot [black,fill=black,opacity=0.15] coordinates {
(0,10.2) (0.2,9.6) (0.6,10.4) (1,9.6) (1.4,10.4) (1.8,9.6) 
(2.2,10.4) (2.6,9.6) 
(3.0,10.4) (3.4,9.6) (3.8,10.4) (4.2,9.6) (4.6,10.4) (5.0,9.6) 
(5.4,10.4) (5.8,9.6) 
(6.2,10.4) (6.6,9.6) (7.0,10.4) (7.4,9.6) (7.8,10.4) (8.2,9.6) 
(8.6,10.4) (9.0,9.6) 
(9.4,10.4) (9.8,9.6) (10.2,10.4) (10.6,9.6) (12,2) 
(21,2) (22,0)
}
|- (0,0) -- cycle;
\addplot [black,fill=black,opacity=0.3] coordinates {
(0,10.2) (0.2,9.6) (0.6,10.4) (2,2) (21,2) (22,0)
}
|- (0,0) -- cycle;
\end{axis}
\node[opacity=0.3] at (7.6,2.6) {General Knapsack 
without \dots};
\node[opacity=0.6] at (7.8,0.85) {\dots{}\;and with 
Removability};
\end{tikzpicture}
\caption{A schematic plot of the advice complexity behavior of the classical online knapsack problem in light gray and the relaxed variant with removability in dark gray. 
For the proportional version without 
removability there are two 
large plateaus; removability collapses to a single 
vast expanse.
For the general version, in which an item's value 
may differ from 
its size, there is only one but a more extreme 
jump directly from 
an unbounded competitive ratio to near 
optimality; with 
removability, this jump is occurring earlier and 
even steeper. 
}
\label{fig:prop_schematics_combined}
\end{figure}

\subsection{Achieving Optimality}\label{sec:prop_opt_lower}
We begin by briefly considering \PropKnap, the classical proportional knapsack problem without removability. 
Solving it optimally is trivial with $n$ advice bits: The algorithm reads one bit per item, telling it whether to  accept or reject.  
\Cref{{thm:classic_prop_opt_lower_tight}} proves this to be tight by lifting the best known lower bound from $n-1$ advice bits~\cite[Thm.~3]{BKKR2014} to $n$ advice bits. 
\begin{theorem}\label{thm:classic_prop_opt_lower_tight}
Any algorithm for \PropKnap reading less than $n$ 
advice bits is 
suboptimal.
\end{theorem}
\begin{proof}
For every $n$, we consider the $2^n$ instances that all begin with the same $n-1$ items, namely one for each of the sizes $s_1=2^{-1},s_2=2^{-2},\dots,s_{n-1}=2^{-(n-1)}$.
The size of the final item is either $s_n=2^{-n}$ or any $s_n'\in\{\,1-\sum_{i\in I}s_i\mid I\subsetneq\{1,2,\dots,n-1\}\,\}$. 
We remark that this hard instance family is almost identical to the one used by Böckenhauer et al.~\cite[Thm.~3]{BKKR2014} for proving the lower bound of $n-1$ advice bits; 
the only tiny modification is changing the item size $s_n'=1-\sum_{i\in I}s_i$ for $I=\{1,2,\dots,n-1\}\}$ to $s_n=2^{-n}$. 
There are $2^{n-1}$ options for the final item, and each one requires an optimal algorithm to have packed another one of the $2^{n-1}$ possible subsets of the previous $n-1$ items: 
If the last item turns out to have size $s_n'=\sum_{i\in I}s_i$, for any given $I\subsetneq \{1,2,\dots,n-1\}$, then the knapsack can be filled completely if and only if exactly the items $I$ have been packed before it. 
And if the last item has size $s_n=2^{-n}$, then the algorithm needs to pack all items for optimality, leaving a gap of $2^{-n}$. 
Now, an advice string of length $n-1$ should already enable us to distinguish between the $2^{n-1}$ possible sizes of the last item, allowing the algorithm to select the optimal subset of the first $n-1$ items. 
An advice algorithm reading fewer than $n-1$ advice bits on instances of length $n$ inevitably packs the same subset of the first $n-1$ items for two different sizes of the last item, leading to a suboptimal performance for one of the two. 

This is essentially what led to the previously known lower bound of $n-1$ advice bits for optimality. 
The key point to note now is that the algorithm has to read these $n-1$ advice bits, proved necessary for optimality on instances of length $n$, before the last item is presented.
Thus it is in fact reading $n-1$ advice bits while processing the first $n-1$ items of the instance, which---thanks to our modification---also constitute a complete instance on their own. 
Thus $n'=n-1$ advice bits are consumed on an instance of length $n'$, concluding the proof. 
\end{proof}
Having determined \PropKnap's 
advice complexity for optimality, we now do the same for \PropRemKnap, the variant with removability. It turns out that the option to remove items hardly helps at all in achieving optimality.
We begin the upper bound, which is simple but instructive as to what is possible with removability. 
\begin{theorem}\label{thm:prop_opt_upper}
There is an optimal algorithm for \PropRemKnap reading $n-1$ advice bits.
\end{theorem}

\begin{proof}
Consider an algorithm that packs the first item without reading any advice bits. 
For each subsequent item, it reads one advice bit, telling it whether the new item is part of a fixed optimal solution. 
If so, then the new item is packed; otherwise, it is rejected. 
The first item, which has been packed without advice, is kept in the knapsack as long as there is enough room for it. 
If the first item is part of the fixed optimal solution, then it will always fit in beside the other items being packed; 
otherwise, it will be discarded at some point. 
Thus the algorithm is able to reproduce the fixed optimal solution exactly. 
\end{proof}

\begin{theorem}\label{thm:prop_opt_lower}
Solving \PropRemKnap optimally requires more than $n-\log n$ advice bits.
\end{theorem}

\begin{proof}
Let any positive $\eps<1/3$ of the form $\eps=1/(2j)$ for an odd integer $j$ be given. Now, we can choose an arbitrarily large odd integer $m$ such that $2\eps m+1$ is a power of two, namely an appropriate multiple of $j$. 
Since $2\eps m+1$ is even and $m$ is odd, we know that $m-2\eps m$ is even and thus $m/2-\eps m$ is an integer. We denote it by $k=m/2-\eps m$ and note that $m-2k=2\eps m$.
We consider a 
family of instances that are all identical, with 
exception of the final item. 
The capacity of the knapsack shall be $m-k+1$. We 
could of course 
normalize this to $1$ by scaling down the capacity 
and all item sizes. 

First, for each $i\in\{0,1,\dots,\log(2\eps m+1)-1\}$, 
an item of size $2^i$ is presented. 
Note that these $\log(\eps m +1)$ items have a total size of $\sum_{i=0}^{\log(2\eps m+1)-1}2^i=2^{\log(2\eps m+1)}-1=2\eps m$. 
Moreover, for every $j\in\{0,1,\dots,2\eps m\}$, there is exactly one item subset of total size $j$. 

The instance then continues with $m$ more items, one 
of each size in 
\[M=\Bigl\{1+\frac12,\ldots,1+\frac1{2^m}\Bigr\}=\Bigl\{\,1+\frac1{2^i}\Bigm|
 1\le i\le m\,\Bigr\}.\] 

Finally, a single item of size $1-\sum_{s\in S}(s-1)$ is 
presented, where $S$ 
may be any subset of $M$ whose cardinality satisfies $k\le |S|\le m-k$. 
There are $\sum_{i=k}^{m-k}\binom mi$ such subsets 
and we have one 
instance for each possible choice. 
A straightforward tail bound, which we derive 
at the end of this proof, 
shows that we have 
$\sum_{i=0}^{k}\binom mi\le 2^{mH(1/2-\eps)+\log m-1}$, 
where 
$H(p)=-p\log p-(1-p)\log (1-p)$ is the binary 
entropy function.
We also know that $\sum_{i=0}^m\binom mi=2^m$ and 
therefore obtain 
\begin{align*}
\sum_{i=k}^{m-k}\binom mi&\ge 2^m-2\cdot 
2^{mH(1/2-\eps)+\log 
m-1}
=2^m(1-2^{-m(1-H(1/2-\eps))+\log m})
\end{align*} as a lower 
bound on the 
number of instances. 

Consider the instance whose final item has size 
$1-\sum_{s\in S_0}(s-1)$ for 
any given $S_0\subseteq M$. 
The knapsack of capacity $m-k+1$ can be filled completely 
with the items of 
this instance as follows: 
Pack the last item, all items of sizes in $S_0$---which brings us to a total size of exactly $|S_0|+1$---and finally the unique subset of the first $\log(2\eps m+1)$ items with a total size of $m-k+1-(|S_0|+1)=m-(k+|S_0|)$. 
This is in fact the only way to fill the knapsack entirely for the following reason. There is exactly one selection of the items with non-integer sizes such that their fractional parts sum up to an integer---namely exactly $1$---which is necessary to fill the knapsack of integer capacity completely. 
The remaining integer gap has to be filled by the remaining items with integer sizes, and we have already noted before that there is exactly one way to do this for every integer between $0$ and $m-2k=2\eps m$. 
The instances are all identical until the 
last item is 
presented. 
Therefore, any online algorithm must have packed exactly the right selection of items when the final one is presented to realize the optimal solution. 
The number of advice bits necessary to 
distinguish the possible instances and guarantee optimality is therefore  
\begin{align*}
\log\sum_{i=k}^{m-k}\binom mi&\ge 
m+\log(1-2^{-m(1-H(1/2-\eps))+\log m}).
\end{align*} 
The following straightforward calculations---using the Taylor expansion for the logarithm and Stirling bounds---show that this can be bounded from below by $n-\log(5\eps n)>n-\log(n)+\log(1/\eps)-3$ for sufficiently large $m$.

Using the Taylor expansion for the natural logarithm around $1$, we have 
\begin{align*}
\log(1-x)
&=\frac{-1}{\log e}\sum_{i=1}^\infty \frac{x^i}i\\
&>-x\sum_{i=1}^\infty \frac{x^{i-1}}{i}\\
&=-x\sum_{i=0}^\infty \frac{x^i}{i+1}\\
&>-x\sum_{i=0}^\infty x^i\\
&>-x\frac 1{1-x}\\
&\ge-2x
\end{align*} 
for $0<x<1/2$. 
Applying this to $x=2^{-m(1-H(1/2-\eps))+\log m}$ we can thus further bound the number of required advice from below by 
\begin{align*}
\log\sum_{i=k}^{m-k}\binom mi&\ge
m-2\cdot2^{-m(1-H(1/2-\eps))+\log m}.
\end{align*} 
Each of the described instances contains $n=
\log(2\eps m+1)+m+1$ items. 
This is greater than the number of required advice bits by at most 
\[\log(2\eps m +1)+1+2\cdot 2^{-m(1-H(1/2-\eps))+\log m}.\]
For any fixed, positive $\eps<1/2$, we have $0<H(1/2-\eps)<1$ and thus the exponent $-m(1-H(1/2-\eps))+\log m$ growing to arbitrarily large negative numbers for increasing $m$. This means that the difference between the number of items $n$ and the number of required advice bits asymptotically coincides with $\log(2\eps m+1)+1=n-m$.
 
We have $m=n-\log(2\eps m +1)-1> n-\log(2\eps n+1)-1$ and thus $n-m< \log(2\eps n+1)+1<\log (5\eps n)$ 
for sufficiently large $m$ and thus $n$. 
Hence more than $n-\log(5\eps n)>n-\log(n)+\log(1/\eps)-3$ advice bits are necessary for optimality on instances of length $n$ for an arbitrarily small positive $\eps<1/2$.

It remains to prove the mentioned tail bound, for which we use the standard Stirling 
bounds~\cite{R55}, which---in a simple form---are $\sqrt{2\pi n}(n/e)^n\le n!\le e\sqrt{n}(n/e)^n$.

We obtain 
\begin{align*}
\sum_{i=0}^k\binom mi
&= 1+\sum_{i=1}^k\binom mi\\
&\le 1+k\binom mk\\
&\le1+\frac{m!k}{k!(m-k)!}\\
&\le 
1+\frac{ek\sqrt{m}(m/e)^m}{\sqrt{2\pi k}(k/e)^{k}\sqrt{2\pi(m-k)}((m-k)/e)^{m-k}}\\
&= 
1+\frac {ek}{2\pi}\frac{\sqrt{m}}{\sqrt{k(m-k)}}\frac{m^m}{k^{k}(m-k)^{m-k}}\frac{e^ke^{m-k}}{e^m}\\
&= 
1+\frac {ek}{2\pi}\sqrt{\frac{m}{k(m-k)}}\left(\frac{m}{k}\right)^{k}\left(\frac{m}{m-k}\right)^{m-k}\\
\intertext{Since $k=m/2-\eps m=m(1/2-\eps)$, we have $m-k=m(1/2+\eps)$. Using this and the binomial theorem $(1/2-\eps)(1/2+\eps)=1/4-\eps^2$ twice, we obtain the following bound.}
\sum_{i=0}^k\binom mi&\le 
1+\frac {ek}{2\pi}\frac1{\sqrt{m(1/4-\eps^2)}}\left(\frac1{1/2-\eps}\right)^{m(1/2-\eps)}\left(\frac1{1/2+\eps}\right)^{m(1/2+\eps)}\\
\intertext{Since $H(p)=p\log(1/p)+(1-p)\log(1/ (1-p))$, we have  
$2^{H(p)}=\left(\frac1p\right)^p\left(\frac1{1-p}\right)^{1-p}$ and thus }
\sum_{i=0}^k\binom mi&\le 
1+\frac {ek}{2\pi}\frac1{\sqrt{m(1/4-\eps^2)}}2^{mH(1/2-\eps)}.\\
\intertext{Since $e/(2\pi)>1/3$ and $\eps\le 1/3$, we can continue as follows.}
\sum_{i=0}^k\binom mi&
\le1+3k\frac{1}{\sqrt m/9}2^{mH(1/2-\eps)}\\
&\le1+\frac{9k}{\sqrt m}2^{mH(1/2-\eps)}\\
&=1+9\sqrt{m}(1/2-\eps)2^{mH(1/2-\eps)}\\
&\le1+(9/2)\sqrt{m}\cdot 2^{mH(1/2-\eps)}-9\eps\sqrt{m}\cdot 2^{mH(1/2-\eps)}\\
\intertext{For any given, positive $\eps<1/2$, we have $H(1/2-\eps)>0$. 
Choosing $m\ge9^2$ sufficiently large yields the desired tail bound:}
\sum_{i=0}^k\binom mi&\le 1+(m/2)\cdot 2^{mH(1/2-\eps)}-9\eps\sqrt{m}\cdot 2^{mH(1/2-\eps)}\\
&\le (m/2)\cdot 2^{mH(1/2-\eps)}\\
&=2^{mH(1/2-\eps)+\log m-1}
\end{align*}
This concludes the proof of the theorem.
\end{proof}

\subsection{A Single Advice Bit}\label{sec:prop_one_bit}
The previous section covered the upper end of the advice spectrum, showing that, asymptotically, reading one advice bit for each item in the instance is necessary and sufficient for ensuring an optimal solution. 
We now turn to the other extreme and ask what can be done with the least nonzero amount of advice, one single bit for the entire instance. 

First, we describe a very simple $3/2$-competitive advice algorithm where a single advice bit indicates whether there is an optimal solution containing more than one item from the interval $[1/3,2/3]$: 
If the answer is yes, the algorithm maintains the smallest item in this interval until a second item fits in, while ignoring all items outside of the interval. As soon as a second item fits, it is packed and all remaining items are rejected. 
If the answer is no, the algorithm maintains in the knapsack the largest item of size at least $1/3$ seen so far while packing all items smaller than $1/3$ as long as they fit. 
If the knapsack capacity is never exceeded, the solution is optimal. 
If the knapsack capacity is exceeded at some point, all packed items but possibly one are smaller than $1/3$. Discard these items one by one, in arbitrary order, until we are within the capacity of the knapsack again. The remaining gap is at most $1/3$.

Han et al.~\cite[Thm.~6]{HKM2015} have presented a randomized algorithm that relies on a partition of the items into six size classes.  It is rather involved and hard to analyze, yet yields an expected competitive ratio of $10/7\approx1.428571$. Because it uses only a single random bit, it provides an upper bound for our case of one advice bit as well. 
We undercut this bound with a more manageable $\sqrt{2}$-competitive algorithm that needs only five classes.  We then complement this with a lower bound of $(1+\sqrt{17})/4=4/(\sqrt{17}-1)\approx 1.2808$.
\begin{theorem}\label{thm:prop_one_bit_upper}
There is a $\sqrt{2}$-competitive algorithm for \PropRemKnap reading only one advice bit. 
\end{theorem}
\tikzset{every picture/.style={thick,rounded corners=0pt,line cap=round}}
\begin{figure}[H]
\centering
\begin{tikzpicture}
\newcommand{\myheight}{.22}
\newcommand{\mywidth}{*13.8}
\newcommand{\myadjacent}{.3}
\newcommand{\mygap}{.0072\mywidth}
\newcommand{\posA}{0.29289\mywidth}
\newcommand{\posB}{0.41421\mywidth}
\newcommand{\posC}{0.5\mywidth}
\newcommand{\posD}{0.70711\mywidth}
\newcommand{\posE}{1\mywidth}
\newcommand{\closedopen}[2]{
\draw (#2-\mygap,-\myheight) -- (#1,-\myheight) -- (#1,\myheight) -- (#2-\mygap,\myheight);
\draw (#2-\mygap,-\myheight) arc (-atan(\myheight/\myadjacent):atan(\myheight/\myadjacent):{sqrt(\myheight^2+\myadjacent^2)});
}
\newcommand{\openclosed}[2]{
\draw (#1+\mygap,-\myheight) -- (#2,-\myheight) -- (#2,\myheight) -- (#1+\mygap,\myheight);
\draw (#1+\mygap,\myheight) arc (180-atan(\myheight/\myadjacent):180+atan(\myheight/\myadjacent):{sqrt(\myheight^2+\myadjacent^2)});
}
\openclosed{0}{\posA}
\openclosed{\posA}{\posB}
\openclosed{\posB}{\posC}
\openclosed{\posC}{\posD}
\openclosed{\posD}{\posE}

\node at (\posA/2,0) {\footnotesize tiny\vphantom{tinysmallmediumbighuge}};
\node at (\posA/2+\posB/2,0) {\footnotesize small\vphantom{tinysmallmediumbighuge}};
\node at (\posB/2+\posC/2,0) {\footnotesize medium\vphantom{tinysmallmediumbighuge}};
\node at (\posC/2+\posD/2,0) {\footnotesize big\vphantom{tinysmallmediumbighuge}};
\node at (\posD/2+\posE/2,0) {\footnotesize huge\vphantom{tinysmallmediumbighuge}};

\draw[decoration={brace}, decorate] (\posA+\mygap/2,.3) node {} -- (\posC,.3);
\draw[decoration={brace}, decorate] (\posC+\mygap/2,.3) node {} -- (\posE,.3);

\node at (\posA/2+\mygap/4+\posC/2,.6) {\footnotesize little\vphantom{littlelarge}};
\node at (\posC/2+\mygap/4+\posE/2,.6) {\footnotesize large\vphantom{littlelarge}};

\node at (0+\mygap,-.5) {$0$};
\node at (\posA+\mygap/4,-.5) {\vphantom{$0abcd1$}$a$};
\node at (\posB+\mygap/4,-.5) {\vphantom{$0abcd1$}$b$};
\node at (\posC+\mygap/4,-.5) {\vphantom{$0abcd1$}$c$};
\node at (\posD+\mygap/4,-.5) {\vphantom{$0abcd1$}$d$};
\node at (\posE-\mygap/4,-.5) {\vphantom{$0abcd1$}$1$};
\end{tikzpicture}
\caption{The partition of the interval $(0,1]$ of possible sizes into the five subintervals used in the proof of Theorem~\ref{thm:prop_one_bit_upper}---namely $(0,a]$, $(a,b]$, $(b,c]$, $(c,d]$, and $(d,1]$---plus the corresponding class names. 
The values are $a=1-1/\sqrt2\approx0.293$, and $b=\sqrt2-1\approx0.414$, and $c=1/2$, and $d=1/\sqrt2\approx0.707$.}
\label{fig:classes_prop}
\end{figure}

\begin{proof}
We split the interval $(0,1]$ of possible sizes into subintervals at four points $a<b<c<d$. We will call the items with sizes in one of these five intervals \emph{tiny}, \emph{small}, \emph{medium}, \emph{big}, and \emph{huge}, respectively. 
Formally, we partition the items into the five 
classes 
\begin{align*}
P_\winzig{}&=\{\,i\mid 0<\size(i)\le a\,\},& 
P_\klein{}&=\{\,i\mid a<\size(i)\le b\,\},&
P_\mittel{}&=\{\,i\mid b<\size(i)\le c\,\},\\
P_\gross{}&=\{\,i\mid c<\size(i)\le d\,\},\text{}&
P_\riesig{}&=\{\,i\mid d<\size(i)\le 1\,\},
\end{align*} 
where
$a=1-1/\sqrt2\approx 0.29289$, $b=\sqrt2-1\approx0.41421$, $c=1/2$, and $d=1/\sqrt2\approx0.70711$.

We will call the small and medium items the \emph{little} ones collectively and refer to the big and huge items as the \emph{large} ones. 
Accordingly, we let $P_\kleinlich=P_\klein\cup P_\mittel$ and $P_\groesslich=P_\gross\cup P_\riesig$. See Figure~\ref{fig:classes_prop} for an illustration of the subintervals and class names.

The oracle uses the one available advice bit to tell the algorithm which of the two strategies described below to apply. 
For the decision, the oracle picks an arbitrary optimal solution $S$ to the given instance. 
If $S$ contains a large item, the first strategy will be chosen, with one exception: If the instance contains no huge item but a little and a big item that fit into the knapsack together, then the first strategy is chosen only if a minimal big item appears in the instance before a minimal small item. 
In all other cases, the second strategy is implemented. 

\begin{description}
\item[Strategy One]
If at any point a huge item appears, the algorithm packs it and keeps it until the end, discarding everything else. 

Otherwise, the algorithm operates with two slots, a 
primary one and a secondary one. 
In the primary slot, it maintains the minimal big item and in the secondary slot it maintains the minimal little item. 
The primary slot takes precedence; that is, in case of a conflict where a new minimal item for one slot is presented that does not fit with the minimal item in the other slot, we discard the little item. 

While maintaining the slot contents, tiny items are always packed greedily. If at any point a presented tiny item does not fit, the current contents of the knapsack are frozen and kept as they are until the instance has ended. The same happens after a step in which only tiny items have been discarded.

\item[Strategy Two]
This strategy manages not only two but three slots, all of which maintain minimal items of some class. In order of precedence, the primary slot maintains two medium items, the secondary slot up to three small items, and the tertiary one big one. 
As an exception, if at any point a big item appears that can be packed alongside a currently packed small item by discarding everything else, then this is done and these two items are kept till the end. 
The tiny items are handled as before: They are packed greedily and if either a presented tiny item does not fit or only tiny items have been discarded in one step, then the current knapsack configuration is kept up to the very end. 
\end{description}

We now need to carefully work through a case 
distinction according to 
the conditions listed in 
Table~\ref{tab:caseconditions} and show 
that the algorithm's competitivity is indeed 
bounded from above by 
$\max\{1/d,d/c,$ 
$1/2b,1/(a+b),1/(1-a),b/a\}=\sqrt2$. 

\begin{table}[ht]
\caption{The mutually exclusive cases considered in Theorem~\ref{thm:prop_one_bit_upper}.}
\label{tab:caseconditions}
\begin{tabular}{ccccccc}
\toprule
Case &Strategy&Competitivity&\multicolumn{4}{c}{Case Conditions}\\
\cmidrule(rl){1-1}\cmidrule(rl){2-2}\cmidrule(rl){3-3}\cmidrule(lr){4-7}
A    &One&$1/d$&$|P_\riesig|>0$&      &         &     \\
B    &One/Two&$d/c$&$|P_\riesig|=0$& $|S\cap P_\gross|>0$& $|P_\mittel|\le1$ &     \\
C    &Two&$1/2b$&$|P_\riesig|=0$& $|S\cap P_\gross|\ge0$& $|P_\mittel|>1$ &     \\
D    &Two&$1/(a+b)$&$|P_\riesig|=0$& $|S\cap P_\gross|=0$& $|P_\mittel|=1$ &$|P_\klein|>0$ \\
E    &Two&$b/a$&$|P_\riesig|=0$& $|S\cap P_\gross|=0$& $|P_\mittel|=0$ &$|P_\klein|>0$ \\
F    &Two&$1/(1-a)$&$|P_\riesig|=0$& $|S\cap P_\gross|=0$& $|P_\mittel|\le1$ &$|P_\klein|=0$ \\
\bottomrule
\end{tabular}
\end{table}

\subparagraph*{\textit{Case A.}}
This case is trivial: If there are huge items in 
the instance, the 
first one will be packed and kept in the knapsack, 
yielding a 
competitive performance of $1/d$ or better.

\subparagraph*{\textit{Case B.}}
The case condition $|S\cap P_\gross|>0$ tells us 
that the optimal 
solution contains at least one big item. 
Since big items have a size above $c=1/2$, it 
contains exactly one. 
Moreover, there is at most one medium item in the 
entire instance. 
We now consider two subcases.

\emph{Subcase B1:} 
Assume first that the instance contains no pair 
of a little and a 
big item that can be packed at the same time. 
This means that the first strategy---which gives 
preference to big 
items over little ones---is operative. The 
algorithm will thus only 
discard a possibly packed little item when the 
first appearing big 
item appears to replace it. 
The strategy guarantees that one big item of 
size greater than $c$ 
is contained in the knapsack in the end. 
If there are no tiny items, both the online 
solution and the 
optimal solution $S$ contain exactly one big 
item and nothing else, 
implying a competitive performance of $c/d$. For 
the case that 
there are tiny items, recall that they are always 
packed greedily. 
Moreover, if any tiny item does not fit or is 
dismissed in some 
step, the algorithm conserves the current state 
of the knapsack, 
guaranteeing a filling of at least $1-a$. 
We may therefore assume that every tiny item is 
packed and kept in 
the online solution computed here. Whatever tiny 
items are 
contained in the optimal solution $S$ are thus 
in the online 
solution as well. If they have a total size $t$, 
the competitive 
performance is thus bounded from above by 
$(d+t)/(c+t)\le d/c$. 

\emph{Subcase B2:} Assume that there are a little 
and a big item 
that can be packed alongside each other. Let $i$ 
be the first 
minimal little item and $j$ the first minimal 
big item. Clearly, 
$i$ and $j$ fit into the knapsack together. If the 
knapsack 
contains these two items in the end, the 
competitive performance is 
$1/(a+c)\le 1/(a+b)=d/c$ or better. 

\emph{Subcase B2a:} Assume that $i$ appears after 
$j$. In this 
case, the first strategy is used. Since it 
maintains the smallest 
big item seen so far in the primary slot, it will 
have $j$ packed 
when $i$ is presented, allowing for $i$ to be 
packed alongside it. 

\emph{Subcase B2b:} Assume that $i$ appears before 
$j$, letting the 
algorithm implement the second strategy. 
The chosen strategy maintains up to two minimal 
medium items in the 
primary slot and up to three minimal small items 
in its secondary 
slot. 
However, there is by the global assumption of case 
$B$ at most one 
medium item in the entire instance and 
a small item will always fit in beside a medium 
item because 
$b+c\le 1$.  Hence, overall, a minimal little item is 
maintained in 
the knapsack, meaning that $i$ is packed already 
when $j$ appears, 
allowing for a little and big item to be packed 
together according 
to the strategy's stated exception.

\subparagraph*{\textit{Case C.}}
This simple case is covered by the second 
strategy. Maintaining two 
medium items in the slot with highest precedence 
guarantees a 
filled fraction of at least $2b$. 

\subparagraph*{\textit{Case D.}}
This case is easy as well: The one medium item 
will be packed into 
the primary slot when presented, and at least one 
small item is 
packed into and stays in the secondary slot 
because any small item 
fits in beside any medium item. 
The total size of small and a medium item is at 
least $a+b$, 
leading to a competitive performance of $1/(a+b)$ 
or better.

\subparagraph*{\textit{Case E.}}
Since there are no medium items in this case, the 
secondary slot 
will maintain up to three minimal small items. If 
there are three 
small items that fit together, they will 
eventually be packed, 
yielding a filled fraction of at least $3a>a/b$. 
Otherwise, two small items will be packed, or only 
one if and only 
if it is the only one. If there were no tiny items, 
we could in 
both cases bound the competitivity by $b/a$, 
using the maximal and 
minimal possible size of a small item. 
However, repeating the argument of subcase B1, we 
may assume that 
all tiny items are packed and none discarded, 
otherwise the packing 
would freeze instantly with a filling of at 
least $1-a$. 
If the tiny items have a total size of $t$, our 
bound would 
therefore only improve to $(b+t)/(a+t)\le b/a$. 

\subparagraph*{\textit{Case F.}}
This case is quite simple again. If there is a 
medium item, it is 
always packed and kept to the end. Beside this 
one potential medium 
item, there are only tiny ones, the minimization in 
the slot will 
therefore not affect the result adversely. If all 
items of the 
instance fit into the knapsack together, they are 
all packed and 
the solution is optimal. Otherwise, the greedy 
packing of tiny 
items leaves a gap of less than $a$, ensuring a 
competitive factor 
of $1/(1-a)$ or better. 
\end{proof}

We now complement the upper bound of \cref{thm:prop_one_bit_upper} with a lower bound of $(1+\sqrt{17})/4=4/(\sqrt{17}-1)\approx 1.2808$. 
\begin{theorem}\label{thm:prop_one_bit_lower}
No algorithm for \PropRemKnap reading only a single 
advice bit can 
have a better competitive ratio than 
$(1+\sqrt{17})/4$.
\end{theorem}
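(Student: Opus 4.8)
The plan is to use the standard correspondence that an online algorithm reading a single advice bit is nothing but a pair $(\appalg_0,\appalg_1)$ of deterministic online algorithms together with an oracle that, for each instance, selects whichever of the two performs better on it. Thus it suffices to construct a finite family $\instset$ of \PropRemKnap instances such that for \emph{every} pair of deterministic online algorithms there is some $\inst\in\instset$ on which at least one of the two has competitive performance $\ge(1+\sqrt{17})/4$. As explained in the Preliminaries, it is enough to achieve this value in the limit of ratios, since the strict statement follows by uniformly scaling capacity and all item sizes.

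All instances in $\instset$ share a common prefix $\sigma$ of a few carefully sized items, followed by one of several short continuation gadgets. The role of $\sigma$ is to pin down the state of an arbitrary deterministic algorithm: the prefix items are chosen large enough that at most one (or a small bounded number) of them can reside in the knapsack simultaneously, so after processing $\sigma$ every deterministic algorithm is in one of a short, explicitly enumerated list of configurations (possible knapsack contents), plus the empty configuration, which is dominated by each of the others. Crucially, the sizes are picked so that these configurations are \emph{mutually incomparable}: because of the capacity bound, holding one prefix item genuinely forecloses flexibility that holding a different one would have offered, so no configuration dominates another across all continuations — this is exactly the obstruction that removability otherwise tends to destroy.

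Each continuation gadget is then tailored so that precisely one configuration can still be completed to a solution of ratio below $(1+\sqrt{17})/4$, while every other configuration is forced to ratio at least $(1+\sqrt{17})/4$ no matter how the algorithm plays afterward: the gadget's (near-)optimal solution requires a specific prefix item, which the ``wrong'' configurations have already irrevocably passed up, and the tight capacity prevents them from recovering with the gadget's own items. With (at least) three incomparable configurations and one gadget catering to each, a pigeonhole argument closes the proof: any two deterministic algorithms realize at most two of the three configurations after the prefix, so there is a gadget whose designated configuration is realized by neither; on the instance built from that gadget, both algorithms — and hence the one-bit algorithm — are at least $(1+\sqrt{17})/4$-competitive. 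All quantities in the construction are optimized so that the worst gap over all gadget/configuration pairs is exactly $\rho=(1+\sqrt{17})/4$, the positive root of $2\rho^2-\rho-2=0$ (equivalently $\rho(2\rho-1)=2$), which is the balancing point of the competing size constraints.

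The main obstacle is the joint design of prefix and gadgets. Each gadget must simultaneously be \emph{easy} for its own configuration (completable below $\rho$) and \emph{hopeless} for all the others ($\ge\rho$ against the best offline solution), and the difficulty is precisely that removability lets an algorithm discard any held item at will; the gadgets must therefore exploit the irrevocable \emph{passing} of prefix items together with a tight capacity rather than the retention of items. Verifying hopelessness amounts to checking, for every wrong configuration and every continuation of the relevant gadget, that no sequence of pack/discard decisions reaches ratio below $\rho$ — a finite but delicate case analysis — and then tuning all item sizes so that none of the resulting inequalities is slack, which is what pins the bound at the precise value $(1+\sqrt{17})/4$. (The matching one-bit upper bound of $\sqrt2$ from Theorem~\ref{thm:prop_one_bit_upper} serves as a sanity check: any construction forcing more than $\sqrt2$ for a one-bit algorithm would be erroneous.)
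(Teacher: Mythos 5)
Your high-level plan coincides with the paper's: a common prefix of pairwise-incompatible items that forces any deterministic algorithm into one of a few configurations, a pigeonhole over the two deterministic algorithms encoded by the single advice bit, and one continuation per configuration that rewards exactly that configuration. You even identify the correct defining polynomial $2\rho^2-\rho-2=0$. The problem is that the proposal stops exactly where the proof has to begin: you assert that a prefix and gadgets with all the required properties exist and ``can be tuned'' so that the worst ratio lands at $(1+\sqrt{17})/4$, but you never exhibit a single item size, never show that three mutually incomparable configurations with the claimed reward structure are simultaneously realizable, and never derive the polynomial from any concrete constraint. Since the entire content of the theorem is this numerical construction, what you have is a proof template, not a proof.

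For comparison, the actual construction is far smaller than the ``delicate case analysis'' you anticipate. Let $\psi=1/\rho=4/(1+\sqrt{17})$, so that $2\psi^2+\psi-2=0$, i.e.\ $\psi=2(1-\psi^2)$. The prefix consists of three items of sizes $\psi$, $\psi^2$, and $1-\psi^2+\eps$; any two of them exceed the capacity $1$, so at most one survives the prefix. Instance $I_1$ is the bare prefix (optimum $\psi$ by keeping the first item, second best $\psi^2$); $I_2$ appends one item of size $1-\psi^2$ (optimum $\psi^2+(1-\psi^2)=1$, second best $2(1-\psi^2)+\eps=\psi+\eps$); $I_3$ appends one item of size $\psi^2-\eps$ (optimum $1$, second best $\psi$). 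Each optimum requires retaining a different prefix item, two of the three instances share an advice string and hence share the retained item, so one of them is served with at best its second-best value; balancing $1/\psi$ against $1/(2(1-\psi^2))$ is precisely what yields $2\psi^2+\psi-2=0$ and the constant $(1+\sqrt{17})/4$ as $\eps\to0$. Until you supply this (or an equivalent) explicit instantiation together with the verification that the second-best values are as claimed, your argument does not establish the stated bound.
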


\begin{proof}
Let $\psi = 4/(1+\sqrt{17})$ 
and choose a positive $\eps<\psi\approx0.7808$. 
Let an algorithm for \PropRemKnap reading only a 
single advice bit 
be given. 
Consider the three instances $I_1$, $I_2$, and 
$I_3$ that all start 
with the same three items of sizes $x_1=\psi$, 
$x_2=\psi^2$, and 
$x_3=1-\psi^2+\eps$, which is the end of instance 
$I_1$ but 
followed by a last item of size $y_2=1-\psi^2$ 
for $I_2$ and of 
size $y_3=\psi^2$ for $I_3$. 
\begin{table}
\caption{A hard instance family for \PropRemKnap 
reading one advice 
bit; see the proof of Theorem~\ref{thm:prop_one_bit_lower}.}
\label{tab:prop_one_bit_upper}
\begin{tabular}{c@{\hspace{1.4em}}ccccccc@{\hspace{1.3em}}c@{\hspace{1.3em}}c}
\toprule
	&$x_1$&$x_2$&$x_3$&$y_2$&$y_3$&optimal&second 
	best&ratio\\
	\cmidrule(r){2-6}\cmidrule(lr{.9em}){7-9}
	$I_1$:&$\psi$&$\psi^2$&$1-\psi^2+\eps$&$$&&$\psi$&$\psi^2$&$\psi/\psi^2$\\
	$I_2$:&$\psi$&$\psi^2$&$1-\psi^2+\eps$&$1-\psi^2$&&$1$&$2(1-\psi^2)+\eps$&$1/(2(1-\psi^2)+\eps)$\\
	$I_3$:&$\psi$&$\psi^2$&$1-\psi^2+\eps$&&$\psi^2-\eps$&$1$&$\psi$&$1/\psi$\\
\bottomrule
\end{tabular}
\end{table}
For each $i\in\{1,2,3\}$, the instance $I_i$ has a 
unique optimal 
solution; it contains $x_i$ and, except for $i=1$, 
additionally 
$y_i$. 
Table~\ref{tab:prop_one_bit_upper} shows the total size 
for each of 
these optimal solutions and the second best 
solution. 

Because any two of these three items $x_1$, 
$x_2$, and $x_3$ sum up 
to over 1, the advice algorithm can keep at most 
one of them in the 
knapsack after the presentation of $x_3$. 
Moreover, since only one advice bit is given and 
the three 
instances are indistinguishable until after the 
decision on the 
third item has been taken, there are two instances 
for which the 
same item, if any, is kept in the knapsack for the 
presentation of 
the potential fourth item. 
This implies that the algorithm is suboptimal 
for at least one 
instance. The second best solutions for $I_1$, 
$I_2$, and $I_3$ 
fill up a fraction $\psi$, $2(1-\psi^2)+\eps$, and 
$\psi$, 
respectively. Thus, the competitive ratio of cannot 
be better than 
the minimum of $\psi/\psi^2$, $1/(2(1-\psi^2)+\eps)$, 
and $1/\psi$. 
Since $2(1-\psi^2)=\psi$, this means the 
competitivity is 
$1/(\psi+\eps)$ at best for arbitrarily small 
$\eps$. 
\end{proof}

Note again that an advice bit is at least as powerful as a random bit, 
hence \cref{thm:prop_one_bit_lower} also improves the 
best known lower bound 
of $5/4$ for one random bit due to Han et 
al.~\cite[Thm.~8]{HKM2015}. 

\subsection{Near Optimality with Constant Advice}\label{sec:prop_near_opt_upper}
Having seen how much advice is necessary for optimality and what the effect of a single advice bit can be, we now address the entire range in between. 
For this, we prove the following generalization of \cref{thm:prop_one_bit_lower}. 

\begin{theorem}\label{thm:prop_const_bit_lower}
Let an arbitrary integer $k>1$ be given. No 
algorithm for \PropRemKnap reading at most $\log 
k$ advice bits can achieve a better competitive 
ratio than $4/(3-2k+\sqrt{4k(k+1)-7})$.
\end{theorem}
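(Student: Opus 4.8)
The plan is to lift the three-instance argument behind Theorem~\ref{thm:prop_one_bit_lower} to a family of $k+1$ instances that are pairwise indistinguishable until after the relevant decisions have been made, using that reading at most $\log k$ advice bits amounts to having at most $k$ advice strings available. Write $\psi=\psi_k$ for the root in $(0,1)$ of $2\psi^2+(2k-3)\psi-2(k-1)=0$; a direct computation gives $1/\psi_k=4/(3-2k+\sqrt{4k(k+1)-7})$, and one checks that $\psi_k$ increases from $\psi_2=(\sqrt{17}-1)/4$ toward $1$, so that $2/3<\psi_k<1$ for every $k\ge2$. Fix a small $\eps>0$. The common prefix presents $k+1$ items $x_1>x_2>\dots>x_{k+1}$ of sizes $x_1=\psi$, $x_2=\psi^2$, and $x_i=\psi^2-(i-2)(1-\psi)$ for $3\le i\le k+1$, with $x_{k+1}$ nudged upward by $\eps$. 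The decisive identity---and the reason the quadratic governs the bound---is that with these uniform gaps of $1-\psi$ the two smallest items satisfy $x_k+x_{k+1}=2\psi^2-(2k-3)(1-\psi)$, which equals exactly $1$ precisely when $\psi=\psi_k$; this is the largest $\psi$ for which the nudge is only infinitesimal, and it is simultaneously where $x_{k+1}=\psi/2$ stays positive and $x_1=\psi$ stays below $1$, so all $k+1$ sizes genuinely lie in $(0,1]$. Since $x_i+x_j>1$ for all $i\neq j$, any online algorithm retains at most one prefix item after the prefix.

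The continuations copy the one-bit construction. Instance $I_1$ stops at the prefix; its unique optimum is $\{x_1\}$ with gain $\psi$. For $2\le i\le k+1$, instance $I_i$ appends one item of size $1-x_i$, so that $\{x_i\}\cup\{1-x_i\}$ fills the knapsack exactly and is the unique optimum---any other prefix item either overfills together with $1-x_i$ (if it exceeds $x_i$) or falls short of $1$ (if it is smaller)---hence $\opt(I_i)=1$. In particular, being optimal on $I_i$ forces the algorithm to hold $x_i$ at the moment the prefix ends. All $k+1$ instances agree through the prefix.

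Now the counting step. An algorithm reading at most $\log k$ advice bits exhibits at most $k$ distinct behaviors on the common prefix, hence leaves the prefix in at most $k$ distinct configurations (which prefix item, if any, it keeps); among the $k+1$ instances, two are therefore driven into the same configuration, and on at least one of them---say $I_i$---the algorithm does not hold $x_i$. It remains to bound the gain reachable from such a wrong configuration on $I_i$: if the retained item $x_\ell$ is larger than $x_i$, then $x_\ell$ and $1-x_i$ overfill the knapsack, so the gain is $\max(x_\ell,1-x_i)=x_\ell\le x_1=\psi$, where $x_\ell>1-x_i$ is just the pairwise-sum condition; if $x_\ell$ is smaller than $x_i$, the two co-pack but the gain is $1-(x_i-x_\ell)\le 1-(1-\psi)+O(\eps)=\psi+O(\eps)$, since consecutive gaps equal $1-\psi$; keeping nothing yields at most $\max_j(1-x_j)=1-x_{k+1}=1-\psi/2\le\psi$ because $\psi_k>2/3$; and on $I_1$ any wrongly kept item has size at most $\psi^2$, hence gain at most $\psi^2$ against an optimum of $\psi$. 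In every case the competitive performance on that instance is at least $1/(\psi+O(\eps))$, and letting $\eps\to0$ gives a competitive ratio of at least $1/\psi_k=4/(3-2k+\sqrt{4k(k+1)-7})$.

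The hard part will be the concluding case distinction together with the bookkeeping for the $\eps$-perturbation: one must verify that the single gap parameter $1-\psi$ is at the same time large enough that the gentlest wrong choice on $I_i$, namely keeping $x_{i+1}$, still caps the gain near $\psi$, and small enough that the $k+1$ prefix items fit into $(0,1]$ with room for the nudge that makes all $\binom{k+1}{2}$ pairwise sums strictly exceed $1$. That this window is nonempty for every $k\ge2$---and pinches shut at $k=1$, where $\psi_1=1/2$ forces $x_1+x_2=3/4<1$---is exactly what the quadratic defining $\psi_k$ and the hypothesis $k>1$ encode.
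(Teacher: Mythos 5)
Your proposal is correct and takes essentially the same route as the paper: the same $k+1$ instances with common prefix $x_1=\zeta$, $x_i=\zeta^2-(i-2)(1-\zeta)$ and an $\eps$-nudge on $x_{k+1}$ so that the two smallest items sum to $1+\eps$, the same pigeonhole over the $k$ available advice strings, and the same conclusion that every suboptimal continuation gains at most about $\zeta$ against an optimum of $1$ (resp.\ $\zeta^2$ against $\zeta$ on $I_1$). Your defining quadratic $2\psi^2+(2k-3)\psi-2(k-1)=0$ is the one actually consistent with the closed-form root $(3-2k+\sqrt{4k(k+1)-7})/4$; the paper's prose writes $-2(k+1)$, which is a typo.
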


\begin{proof}
We generalize the hard instance family from the 
proof of 
Theorem~\ref{thm:prop_one_bit_lower}. 
Let an arbitrary integer $k>1$ be given and 
define $\zeta$ as the 
positive root of $2\zeta^2+(2k-3)\zeta-2(k+1)$, 
namely 
$\zeta=(3-2k+\sqrt{4k(k+1)-7})/4$.
Consider $k+1$ instances that all start with the 
same $k+1$ items 
of the following, decreasing sizes: first $x_1=\zeta$, 
then 
$x_i=\zeta^2-(i-2)(1-\zeta)$ for every 
$i\in\{2,\dots,k\}$, and 
then $x_{k+1}=\zeta^2+(k-1)(1-\zeta)+\eps$ for an 
arbitrary $\eps$ 
satisfying $0<\eps<1-\zeta$. The instance $I_1$ 
ends immediately 
after these common items, whereas the instances 
$I_i$, for 
$i\in\{2,\dots,k+1\}$, presents one additional item 
of size 
$y_i=1-x_i$ as the final one.
There is a unique optimal solution for each 
instance: For $I_1$, it 
is to pack the first item of size $x_1=\zeta$. For 
$I_i$ with 
$i>1$, it is to pack the item of size $x_i$ and 
the last one of 
size $y_i=1-x_i$, which sum up to the optimal 
solution value $1$. 
Since there are only $\log k$ advice bits 
available to handle the 
$k+1$ instances, at least two instances $I_i$ and 
$I_j$ with $i<j$ 
are processed with the same advice string and 
thus the same 
deterministic algorithm. Consider this algorithm 
and the moment 
after seeing and taking decisions on the first 
$k+1$ items. 
It is impossible for the algorithm to have more 
than one of these 
common items packed since the two smallest of 
them already have a 
combined size of 
$x_k+x_{k+1}=2\zeta^2+(2k-3)\zeta-2k+3+\eps=1+\eps$. 
Now, if item $i$ is packed at the considered 
moment, the algorithm 
will perform suboptimally on instance $I_j$. 
Analogously, if item 
$j$ is packed, the performance on instance $I_i$ 
is suboptimal.

Now if suffices to check that the best 
suboptimal solution has a 
filling of at most $\zeta^2$ for $I_1$, at most 
$\zeta$ for $I_2$, 
\dots, $I_k$, and at most $\zeta+\eps$ for $I_{k+1}$. 
This leads to 
a performance ratio that is $\zeta/\zeta^2=1/\zeta$ or 
$1/(\zeta+1)$ at best, depending on the concrete 
algorithm, thus 
proving the theorem. 
See Table~\ref{tab:prop_constant_bit_upper_a} for an 
overview of 
the hard instance family. The best and second best 
solutions to all 
instances and their associated performances are 
listed in 
Table~\ref{tab:prop_constant_bit_upper_b}. 
\begin{table}
\caption{Hard instance family for \PropRemKnap reading at most $\log k$ advice bits, where $\zeta=(3-2k+\sqrt{4k(k+1)-7})/4$; see the proof of Theorem~\ref{thm:prop_const_bit_lower}.}
\label{tab:prop_constant_bit_upper_a}
\setlength{\tabcolsep}{10.5pt}
\begin{tabular}{l@{\hspace{1.4em}}c@{\hspace{1.6em}}c@{\hspace{1.9em}}Hc@{\hspace{1.9em}}c@{\hspace{2.4em}}c@{\hspace{2.4em}}l}
\toprule
&$x_1$&$x_2$&$x_3$&$\cdots$&$x_k$&$x_{k+1}$&$x_{k+2}=y_j$\\
\cmidrule(r{.9em}){2-8}
$I_1$:    &$\zeta$&$\zeta^2$&$\zeta^2-(1-\zeta)$&$\cdots$&$\zeta^2-(k-2)(1-\zeta)$&$\zeta^2-(k-1)(1-\zeta)+\eps$&None\\
$I_2$:    &$\zeta$&$\zeta^2$&$\zeta^2-(1-\zeta)$&$\cdots$&$\zeta^2-(k-2)(1-\zeta)$&$\zeta^2-(k-1)(1-\zeta)+\eps$&$1-x_2$\\
$\,\vdots$  &$\vdots$&$\vdots$&$\vdots$&$\vdots$&$\vdots$&$\vdots$&$\hphantom{1~}\vdots$\\
$I_{k+1}$:&$\zeta$&$\zeta^2$&$\zeta^2-(1-\zeta)$&$\cdots$&$\zeta^2-(k-2)(1-\zeta)$&$\zeta^2-(k-1)(1-\zeta)+\eps$&$1-x_{k+1}$\\
\bottomrule
\end{tabular}
\end{table}
\begin{table}
\caption{The values of the optimal and best 
suboptimal solutions 
for each instance in the family given in 
Table~\ref{tab:prop_constant_bit_upper_a} and the 
resulting 
competitive performance; see the proof of 
Theorem~\ref{thm:prop_const_bit_lower}.}
\label{tab:prop_constant_bit_upper_b}
\setlength{\tabcolsep}{7.7pt}
\begin{tabular}{l@{\hspace{.9em}}r@{\hspace{4em}}r@{\hspace{7.5em}}l}
\toprule
	&Optimal value&Best suboptimal 
	value\hspace*{-2.5em}&\hspace*{-2em}Best suboptimal 
	performance\\
	\cmidrule(lr){2-4}
	$I_1$:    
	&$x_1=\zeta$&$x_2=\zeta^2$&$\zeta/\zeta^2$\\
	$I_2$:    
	&$x_2+y_2=1$&$x_1=\zeta\phantom{{}^2}$&$1/\zeta\phantom{{}^2}$\\
	$I_3$:    
	&$x_3+y_3=1$&$x_2+y_3=\zeta\phantom{{}^2}$&$1/\zeta\phantom{{}^2}$\\
	$\,\vdots$  
	&$\vdots\hphantom{\,~\zeta}$&$\vdots\hphantom{\,~\zeta^2}$&$\hphantom{1\,}\vdots\hphantom{\,\zeta^2}$\\
	$I_k$:      
	&$x_k+y_k=1$&$x_{k-1}+y_k=\zeta\phantom{{}^2}$&$1/\zeta\phantom{{}^2}$\\
	$I_{k+1}$:&$x_{k+1}+y_{k+1}=1$&$x_k+y_{k+1}=\zeta-\eps\hspace{-1.25em}$&$1/(\zeta-\eps)$\\
\bottomrule
\end{tabular}
\end{table}
\end{proof}

We remark that 
\cref{thm:prop_const_bit_lower} and its analogue for \GenRemKnap instead of 
\PropRemKnap, \cref{thm:general_const_bit_lower}, improve upon 
the best known lower bounds implied 
by Han et al.'s results on the resource 
buffer model~\cite[Thms.~17~and~6]{HKMY2019}. 
In this model, the 
online algorithm may use a knapsack of some 
increased capacity 
$R>1$, but only until the instance ends, at which point it 
has to choose 
from the reserved items a selection that fits a 
knapsack of 
capacity one. 
A resource buffer of some natural size $R$ 
allows us to simulate 
any algorithm using up to $\log R$ advice bits: We 
think of the 
resource buffer as split into $R$ knapsacks of 
capacity $1$, 
allowing us to accommodate the items stored by 
the advice 
algorithm for every possible advice string 
simultaneously. 

Instantiating \cref{thm:prop_const_bit_lower} with $k=2^1$, $k=2^2$, and $k=2^3$, for 
example, we obtain 
the lower bounds 
\begin{align*}
4/(\sqrt{17}-1)\approx{}& 1.2808,\ 
&4/(\sqrt{113}-7)\approx{}& 1.1287,\ &&\text{ and }\ 
&4/(\sqrt{353}-15)\approx{}& 1.0630.
\intertext{
for one, two, and three advice bits, respectively, 
while the lower bounds provided by Han et 
al.~\cite[Thm.~6]{HKMY2019} for a resource buffer of 
size $R=2^k$ are
}
6/5={}& 1.2,
&10/9\approx{}& 1.1111,&&\text{ and }&18/17\approx{}& 1.0588.
\end{align*}
Clearly, the lower bound of 
Theorem~\ref{thm:prop_const_bit_lower} 
tends to $1$ for increasingly large but still 
constant advice. 

With our most surprising result for the 
proportional knapsack 
problem, Theorem~\ref{thm:prop_near_opt_upper}, we will 
prove that 
the true competitive ratio displays the same 
general behavior as 
the lower bound of 
Theorem~\ref{thm:prop_const_bit_lower}: For any 
given $\eps>0$, we can guarantee a competitive 
ratio of $1+\eps$ 
with a constant number of advice bits. 
It is of course also possible to derive more 
specific upper bounds 
for very few advice bits such as the following one.
\begin{theorem}\label{thm:prop_two_bits_upper}
There is a $4/3$-competitive algorithm for 
\PropRemKnap reading two 
advice bits. 
\end{theorem}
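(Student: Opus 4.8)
The plan is to refine the construction from the proof of Theorem~\ref{thm:prop_one_bit_upper}, now exploiting that two advice bits afford four strategies rather than two. I would retain its two robust ingredients unchanged. First, the threshold separating the \emph{tiny} items is set to $1/4$: packing tiny items greedily and freezing the whole knapsack the moment a tiny item fails to fit, or only tiny items are discarded in some step, always leaves a gap below $1/4$, hence a filling above $3/4$ and a competitive performance of at most $4/3$. Second, keeping just the first item of size above $3/4$ is $4/3$-competitive whenever the instance contains any such item, so one of the four strategies is dedicated to this case and the other three may assume the instance is free of items above $3/4$, exactly as Cases~B--F of Theorem~\ref{thm:prop_one_bit_upper} assume $|P_\riesig|=0$.

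For the three remaining strategies I would partition $(1/4,3/4]$ at several carefully chosen breakpoints — more of them than the four points $a<b<c<d$ of Theorem~\ref{thm:prop_one_bit_upper}, tuned so that every ratio quantity arising in the case analysis (the counterparts of $1/d$, $d/c$, $1/(2b)$, $1/(a+b)$, $1/(1-a)$, $b/a$) is at most $4/3$ — and design slot-based strategies in the style of Strategies One and Two: each maintains a constant number of slots holding minimal items of prescribed size classes with a fixed precedence order to resolve conflicts, on top of the greedy-and-freeze handling of tiny items. The guiding ideas are that (i) when the optimum's mass comes from two comparatively large items, a strategy maintains two minimal items of a class narrow enough that the two always fit and jointly exceed $3/4$; (ii) when it comes from three items, a strategy maintains three minimal items of a class whose items exceed $1/4$, so that whenever three of them fit they exceed $3/4$; and (iii) a third strategy keeps a minimal item of an intermediate class, possibly complemented by a minimal item of a smaller class, to cover an optimum consisting of one such item together with only tiny ones, or of tiny items alone. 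As in the one-bit proof, the oracle fixes an arbitrary optimal solution $S$, transmits the index of the strategy matching the structure of $S$, and applies small order-dependent exceptions — exactly as in Subcase~B2 — so that the minimal item of the relevant class is already in the knapsack when its partner appears, regardless of arrival order.

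It then remains to run a case distinction over the structure of $S$, mirroring Table~\ref{tab:caseconditions}, verifying in each case that the maintained items together with the greedily packed tiny items fill more than $3/4$ of the capacity, or that the computed solution is outright optimal. The step I expect to be the main obstacle is the joint design of the breakpoints and the precedence orders so that, on the one hand, four strategies genuinely suffice to cover every possible shape of $S$, and, on the other hand, in the awkward subcases — a packed item of one class being displaced by an item of another, or the arrival order dictating which strategy remains viable — the minimal-item bookkeeping still guarantees a filling above $3/4$. As for the $\sqrt2$ bound, the most delicate points will be the regime in which an item of size above $1/2$ must be complemented by a smaller one and the order-dependence between the strategy maintaining such an item and the one maintaining two middle items.
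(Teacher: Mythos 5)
Your outline has the right skeleton---a tiny/medium split at $1/4$, greedy packing of tiny items with the freeze argument bounding the gap by $1/4$, one of the four options reserved for items above $3/4$, and slot-based maintenance of minimal items of prescribed classes for the rest---but it stops exactly where the proof has to start. You never exhibit the breakpoints, never state the four strategies, and never run the case distinction; you explicitly defer ``the joint design of the breakpoints and the precedence orders so that four strategies genuinely suffice'' as the main obstacle. That design is the entire content of the theorem, so as written this is a plan, not a proof.

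Two concrete points of comparison with the paper's argument. First, no fine partition of $(1/4,3/4]$ is needed: the paper's four strategies are indexed by the structure of the \emph{medium} items (sizes in $[1/4,3/4]$) within a fixed optimal solution $S$, namely (1) $S$ has at most one medium item, or the instance has an item above $3/4$ (keep one maximal medium item, or the first huge item, plus greedy smaller items); (2) $S$ has at least three medium items, or the two minimal medium items of the instance sum to at least $3/4$ (maintain the three minimal medium items, falling back to two); (3) $S$ has two items from $[1/4,1/2]$ (maintain two maximal such items, which always fit together, plus greedy smaller items); (4) otherwise $S$ has exactly two medium items, one of size above $1/2$ (maintain a minimal medium item and a minimal item above $1/2$, which together exceed $1/4+1/2=3/4$). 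The only breakpoint beyond $1/4$ and $3/4$ is $1/2$. Second, your guiding idea (i)---two items of a class narrow enough that they ``always fit and jointly exceed $3/4$''---cannot cover the case where $S$ contains exactly two medium items whose combined size is well below $3/4$ (say $0.26+0.30$ plus tiny items): no pair from any class containing them exceeds $3/4$. The paper's strategy (3) handles this by keeping two \emph{maximal} items from $[1/4,1/2]$ and leaning on the greedily packed small items: either everything relevant fits, and the packing dominates $S$, or the freeze argument leaves a gap below $1/4$. Until you pin down the strategies and close such cases, the theorem is not established.
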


\begin{proof}
The algorithm operates in one of four modes, 
depending on the given 
advice. 

We make the following case distinction that 
primarily depends on 
how many items falling into the size interval 
$[1/4,3/4]$---we call 
them \emph{medium} items---appear in the optimal 
solutions.

\subparagraph*{\textit{Strategy One.}}
This strategy is chosen if there is any item 
larger than $3/4$ or 
if there is an optimal solution containing 
either no or only one 
medium item.
The algorithm maintains one maximal medium item 
while packing the 
smaller items greedily.
As an exception, when an item larger than $3/4$ 
appears, it is 
packed and kept to the end, discarding everything 
else. 

This procedure obviously produces 
$(3/4)$-competitive solution if 
there is an item larger than $3/4$. 
Otherwise, the knapsack will be filled optimally if 
all items fit 
into the knapsack together. In the remaining case, 
the greedy 
packing will leave a gap of at most $1/4$ because 
the algorithm 
will never displace a medium item in favor of a 
smaller one, thus 
removing only items smaller than $1/4$.

\subparagraph*{\textit{Strategy Two.}}
This strategy can be chosen if there is an 
optimal solution 
containing at least three medium items or if the 
two minimal medium 
items in the instance have a combined size of 
$3/4$ or more. 
The algorithm maintains as long as possible the 
minimal three 
medium items from the interval among everything 
seen so far. 
If a third medium item does not fit at some point, 
the algorithm 
switches to maintaining only the two minimal 
medium items. 
This clearly yields a filling of at least $3/4$.

\subparagraph*{\textit{Strategy Three.}}
This strategy works if there is an 
optimal solution 
containing two items from the size interval 
$[1/4,1/2]$. 
It maintains two maximal such items, which is 
always possible, and 
packs the smaller items greedily.
This either yields an optimal solution or one 
with a gap of at most 
$3/4$. 

\subparagraph*{\textit{Strategy Four.}}
This strategy is chosen in the remaining case. 
Specifically, we may now assume that every optimal 
solution to the 
given instance contains exactly two medium items, 
one of which has 
a size greater than $1/2$. 

In this case, the algorithm maintains on the one 
hand a minimal 
item and on the other hand a minimal item larger 
than $1/2$. 
Clearly, this fills the knapsack to a total size 
of more than 
$1/4+1/2=3/4$. 
\end{proof}

We now turn to our main result for the 
proportional knapsack 
problem, which complements 
Theorem~\ref{thm:prop_const_bit_lower} 
with an upper bound. 
\Cref{thm:general_log_bits_upper} 
will generalize this result
 to the general version where an item's size may differ from its 
 value, 
albeit with a far more complicated proof. 
To make it as easily understandable as possible, 
we first present here the proof 
for the simple variant, 
which introduces the idea of slots that 
are reserved for items with certain properties. 
This will serve as a useful foundation 
for the proof of the general variant, 
which is also making use of such a slot system, 
although as merely one besides many more components. 

\begin{theorem}\label{thm:prop_near_opt_upper}
For any $\eps>0$, there is a strictly 
$(1+\eps)$-competitive 
algorithm for \PropRemKnap reading a constant 
number of advice 
bits. 
\end{theorem}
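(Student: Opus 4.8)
The plan is to separate big items from small ones and to exploit the one feature that distinguishes \RemKnap from the classical problem: greedily maintained small items fill their allotted space almost perfectly. Fix a threshold $\delta=\delta(\eps)>0$, say $\delta=\eps/(3(1+\eps))$, and call an item \emph{big} if its size is at least $\delta$ and \emph{small} otherwise; at most $L:=\lfloor 1/\delta\rfloor$ big items can be packed simultaneously, so every solution has at most $L$ of them. Partition the range $[\delta,1]$ of big sizes into $\lceil 1/\gamma\rceil$ \emph{buckets} of width $\gamma$, where one may take $\gamma=\gamma(\eps)=\delta^{2}\eps/(2(1+\eps))$; write $\ell_\tau$ and $u_\tau=\ell_\tau+\gamma$ for the endpoints of a bucket $\tau$. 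The oracle fixes an optimal solution $S$ with big part $S_B$ and small part $S_s$ and transmits the \emph{profile} of $S_B$: the number $n_\tau\in\{0,\dots,L\}$ of items of $S_B$ lying in bucket $\tau$. This is a vector of $O(1/\gamma)$ entries each below $L+1$, hence $O((1/\gamma)\log L)=O_\eps(1)$ advice bits, a constant depending only on $\eps$.

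Given the profile, the algorithm reserves the budget $B:=\min\{1,\sum_\tau n_\tau u_\tau\}$ for big items and the remaining capacity $1-B$ for small ones. Small items are packed greedily into their budget and never evicted. For the big items, the algorithm keeps in each bucket $\tau$ at most the $n_\tau$ smallest bucket-$\tau$ items seen so far, within the budget $B$; whenever a new, smaller item of an already full bucket appears it replaces the largest kept item of that bucket, which never raises the current filling and so is always feasible, and whenever the budget $B$ would be exceeded the algorithm deletes the currently largest kept big item. The two parts never interfere, as each stays inside its own budget. The heart of the proof is the claim that the big part converges to the set $C^{*}$ consisting, for every bucket $\tau$, of the $n_\tau$ globally smallest bucket-$\tau$ items. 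Since $S_B$ itself contains $n_\tau$ items of bucket $\tau$, these $n_\tau$ smallest ones are no heavier, so $\size(C^{*})\le\size(S_B)\le B$; and since all members of a bucket differ by less than $\gamma$, we get $\size(C^{*})\ge\size(S_B)-\gamma L$.

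Granting this claim, the bound $\app(I)\ge\opt(I)/(1+\eps)$ follows by a short case distinction. If the greedy small packing never exhausts its budget $1-B$, then every small item, in particular every item of $S_s$, ends up packed, so $\app(I)\ge\size(C^{*})+\size(S_s)\ge\opt(I)-\gamma L$; a rounding loss occurs only when $S_B\neq\emptyset$, in which case $\opt(I)\ge\size(S_B)\ge\delta$, and then $\gamma L\le\delta\eps/(1+\eps)\le\opt(I)\,\eps/(1+\eps)$ turns the bound into $\app(I)\ge\opt(I)/(1+\eps)$. If instead the greedy packing does exhaust its budget, the small part alone fills more than $1-B-\delta$; using $\size(C^{*})\ge\size(S_B)-\gamma L$ and $B\le\sum_\tau n_\tau u_\tau\le\size(S_B)+\gamma L$, the total filling then exceeds $1-2\gamma L-\delta\ge 1-2\delta$, which is at least $1/(1+\eps)\ge\opt(I)/(1+\eps)$ by the choice of $\delta$. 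The case $S_B=\emptyset$ is subsumed in both branches, as then $C^{*}=\emptyset$ and there is no rounding loss.

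The main obstacle is establishing the central claim, because the adversary may present oversized members of a bucket before its small members and thereby momentarily push the kept big set above its budget, by at most $\gamma L$, since within every bucket the $n_\tau$ smallest items seen so far exceed their counterparts in $C^{*}$ by less than $\gamma$ each. Here the key observation, to be proved by an exchange argument, is that whenever $\sum_\tau n_\tau u_\tau\le 1$ an overflow of the budget $B=\sum_\tau n_\tau u_\tau$ can only be caused by a kept item strictly larger than the item of $C^{*}$ in its position, and in fact the globally largest kept item is then such a surplus item; deleting it therefore never discards a member of $C^{*}$, and since every member of $C^{*}$ upon arrival either fills a deficit of its bucket or displaces a larger item of the same bucket, the kept set stabilises at exactly $C^{*}$. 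The one genuinely delicate situation is the borderline regime $\sum_\tau n_\tau u_\tau>1$, equivalently $\size(S_B)>1-\gamma L$, in which the budget is clipped to $1$, no room is left for small items, and the surplus argument no longer forces deletion of a surplus item; there one argues separately, exploiting that $\opt(I)$ is then within $\gamma L$ of $1$, so that it suffices for the algorithm to keep a suitably reduced big configuration that provably fits and still fills more than $1-2\gamma L$ of the knapsack. Handling this regime cleanly, for instance by having the oracle detect it and transmit a slightly reduced profile, or by replacing the keep-the-smallest rule with keeping a maximum-cardinality sub-profile that fits, is where the remaining care is needed.
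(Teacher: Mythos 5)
Your high-level decomposition (a big/small threshold $\delta$, greedy packing of small items, constant-size rounding information about the big part of a fixed optimal solution, and a final accounting that loses only $O(\eps)$) matches the paper's, but the advice you transmit is strictly weaker than what the paper sends, and this is exactly where the proof breaks. You send the \emph{unordered} profile $(n_\tau)_\tau$ of $S_B$; the paper sends the classes of the big optimal items \emph{in their order of appearance}. You correctly isolate the delicate regime $\sum_\tau n_\tau u_\tau>1$ (when $\sum_\tau n_\tau u_\tau\le 1$ the budget is never even threatened, since at most $n_\tau$ kept items per bucket, each of size at most $u_\tau$, sum to at most $B$), but the difficulty there is not a matter of ``remaining care'': the unordered profile is information-theoretically insufficient in that regime, so no eviction rule, reduced profile, or maximum-cardinality sub-profile can close the gap. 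Concretely, consider $J_1=(0.6,\ 0.4+\gamma/2,\ 0.4)$ and $J_2=(0.6,\ 0.4+\gamma/2,\ 0.6-\gamma/2)$, with sizes placed in bucket interiors so that $0.6$ and $0.6-\gamma/2$ share a bucket and likewise $0.4$ and $0.4+\gamma/2$. Both instances have optimal value $1$ (items $1$ and $3$ in $J_1$, items $2$ and $3$ in $J_2$), both induce the same profile, hence the same advice and the same behavior on the identical two-item prefix. After item $2$ the algorithm can hold at most one of the first two items. If item $1$ is still held, it scores at most $0.6$ on $J_2$; if item $1$ has been discarded, it scores at most $0.8+\gamma/2$ on $J_1$. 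So on one of the two instances the ratio is at least roughly $5/4$, for every $\eps$ and every bucket width.

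The paper escapes this because its advice distinguishes $J_1$ (class of $0.6$ announced first, then class of $0.4$) from $J_2$ (the reverse order), and its algorithm is built around that order: one slot is opened per big optimal item, sequentially, each reserved for the announced class and maintaining the minimal matching item seen so far; a short induction then shows that when the $i$-th big optimal item arrives, the previously opened slots hold items no larger than their optimal counterparts, so the new item is guaranteed to fit. To salvage your write-up, replace the profile by the ordered class sequence and the budget/keep-the-smallest-per-bucket mechanism by this sequential slot discipline; your treatment of the small items, your bucket-rounding estimates, and your final case distinction then survive essentially unchanged.
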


\begin{proof}
We describe such an algorithm called 
\textsc{PropPack}; see 
Algorithm~\ref{alg:prop_near_opt_upper} for a pseudo-code 
implementation. We 
begin by describing the advice communicated to 
\textsc{PropPack} 
with a constant number of bits, then explain how 
the algorithm 
operates on this advice, prove that it is correct 
and terminates, 
and finally analyze its competitive ratio.

\subparagraph*{\textit{Notions and Notation.}}
Without loss of generality, we assume that all items have size at most 1 and that $\eps\le 1/2$. We define the constant $K=\lceil\log_{1-\eps/2}\eps/2\rceil$. 

Let an instance with $n$ items be given. 
Denote the items in the order of their appearance in the instance by $1,2,\dots,n$ and denote the size of item $i$ by $\size(i)$. 
We divide the $n$ items into \emph{small} and \emph{big} ones, with $\delta=(1-\eps/2)^K$ serving as the dividing line: 
$C_\klein=\{\,i\mid\size(i)\le\delta\,\}$ and $C_\gross=\{\,i\mid\delta<\size(i)\,\}$.
We further partition the big items into the subclasses 
$
C_k=\{\,i\mid(1-\eps/2)^k<\size(i)\le(1-\eps/2)^{k-1}\,\}\text{
 for 
}k\in\{1,\dots,K\}
$
.
To alleviate the notation, we will often refer to $C_k$ as class $k$ and to $C_\klein$ as class $0$. 
We also use this convention when writing $C(i)$ to indicate the class to which item $i$ belongs: We have  $C(i)\in\{0,\dots,K\}$, with $C(i)=0$ meaning that $i\in C_\klein$ and $C(i)=k\neq0$ meaning that $i\in C_k$.

The oracle chooses an arbitrary but fixed optimal solution $S\subseteq\{1,\dots,n\}$. 
We denote the partition classes that are naturally induced by this solution by
$S_\klein=S\cap C_\klein$, $S_\gross=S\cap C_\gross$, and $S_k= S\cap C_k$ for $k\in\{1,\dots,K\}$.
Let $m=|S_\gross|$ be the number of big items in the optimal solution and denote them by $i_1<\ldots< i_m$ in order of appearance. 

\subparagraph*{\textit{Constant Advice}.}
The oracle communicates to the algorithm a tuple $(b_1,\dots, b_m)$ with the classes of the big items in the chosen optimal solution in order of appearance; that is, we have $b_j=C(i_j)$ for each $j\in\{1,\dots,m\}$. 
We remark that this tuple needs to be encoded in a self-delimiting way. A constant number of bits suffices for this because $b_j$ is bounded by the constant $K$ for every $j\in\{1,\dots,m\}$ and $m$ is bounded by the constant $1/\delta$. 
The latter bound is an immediate consequence of the fact that $\size(S_\gross)\le 1$ and that any big item has a size larger than $\delta$. 

\subparagraph*{\textit{Algorithm Description.}}
The algorithm \textsc{PropPack} proceeds in $m$ phases as follows. 
In every phase, the algorithm opens a new virtual \emph{slot} within the knapsack that can store exactly one item at a time; multiple items in succession are allowed, however. The slot opened in phase $i$ will accommodate items belonging to class $b_i$ exclusively; we say that items from this class \emph{match} slot $i$. Slots are never closed, thus there are exactly $m$ of them in the end. Small items are generally packed in a greedy manner and discarded one by one whenever necessary to pack a big item.  

In the first phase, the algorithm rejects all big items until one of class $b_1$ appears. As soon as this is the case, said item is packed into the first slot, ending the first phase. 

In the second phase, the algorithm opens the 
second slot to pack a matching item, that is, one 
of class $b_2$. It waits for the first item from 
this class that fits into the knapsack alongside 
the item in the first slot. As soon as such an 
item appears, it is packed and the phase ends. In 
the meantime, whenever an item of class $b_1$ 
appears during the second round, the algorithm 
substitutes it for the one stored in the first 
slot if and only if this reduces the size of the 
stored item. 

In general, phase $i$ begins with the opening of slot $i$, which is reserved for items of class~$b_i$. The phase continues until an item appears that both matches the newly opened slot and fits in beside the items currently stored in the previously opened and filled slots without exceeding the capacity. Then this item is packed into the new slot, which ends the phase. 
During the entire phase, the algorithm maintains in all filled slots the smallest matching items seen so far: Whenever the algorithm is presented with a big item  that either it belongs to a class other than $b_i$ or does not fit in alongside the items in the previously opened slots, then the new item replaces a largest item in the matching open slots, unless the new item itself is even larger.
 
The entire time, even after the last phase has terminated, small items are packed greedily and discarded one by one whenever this is necessary to make room for a big item according to the description above. 
Moreover, we may assume that, whenever a new item has been packed into the knapsack, the algorithm sorts the items in the matching open slots in increasing order. This sorting is not necessary for the algorithm to fulfill its duty, but it facilitates the proof by induction below.

\subparagraph*{\textit{Termination of All Phases.}}
We need to show that \textsc{PropPack} does in fact finish all $m$ phases; that is, all $m$ slots will be filled with a matching item without ever exceeding the knapsack capacity. 
Consider the big items of the optimal solution, 
which we denote by 
$u_1<\dots<u_m$ in their order of appearance. 
To ensure the termination of all phases, we prove by induction 
over $i\le m$ 
that, after processing item $u_i$, the first $i$ 
slots store items 
with a total size of $\size(u_1)+\dots+\size(u_i)$ 
or less. 

We may start from $i=0$ as the trivial, if 
degenerate, base case. 
For the induction step, assume the hypothesis for 
$i<m$ and observe 
that no item in a slot is ever replaced by a 
larger one. Therefore, 
the items in the first $i$ slots still have a 
total size of at most 
$\size(u_1)+\dots+\size(u_i)$ when $u_{i+1}$ is 
presented.
 
There are now three possibilities. If slot $i+1$ has 
remained 
closed up to this point, it is now opened and 
filled with 
$u_{i+1}$, which fits in because  
$\size(u_1)+\dots+\size(u_{i+1})\le\size(S_\gross)\le1$.  
Otherwise, slot $i+1$ is already storing an item: If 
said item is 
larger than $\size(u_{i+1})$, then $u_{i+1}$ 
replaces either this 
item or one that is at least as large. During the 
subsequent 
sorting, $u_{i+1}$ is then moved to slot $i+1$ or 
one of the slots 
from $1$ to $i$, which may force some items from 
slots $1$ through 
$i$ into higher slots but never beyond slot 
$i+1$. The third 
possibility is that slot $i+1$ contains an item 
of size at most 
$\size(u_{i+1})$ already. 
We immediately obtain the induction claim for 
$i+1$ in all three 
cases.

\subparagraph*{\textit{Competitive Analysis.}}
We still denote by $S$ the optimal solution that served as the basis for the given advice, by $T$ the final output of the online algorithm \textsc{PropPack} (Algorithm~\ref{alg:prop_near_opt_upper}), and the respective partition classes by $S_\klein$, $S_\gross$, $S_k$ and $T_\klein$, $T_\gross$, and $T_k$.  

Since \textsc{PropPack} opens one slot for each big item in the 
optimal solution $T$ and fills it with an item from the same 
subclass, as proved above, we have $|S_k|=|T_k|$ for every 
$k\in\{1,\dots,K\}$. Moreover, the sizes within a subclass $C_k$ 
vary by a factor of at most $1-\eps/2$; this means that we can 
bound both $\size(S_\gross)$ and $\size(T_\gross)$ from below by 
$L=\sum_{k=1}^K|S_k|(1-\eps/2)^k$ and from above by $L/(1-\eps/2)$. 
We conclude $\size(T_\gross)\ge \size(S_\gross)\cdot(1-\eps/2)$. 

Furthermore, since small items are packed greedily and only discarded one by one whenever necessary to make room for the big items, we will not lose much from their side either. If the presented small items have a total size of at most $1-L/(1-\eps/2)$, none is ever discarded. In this case, we have $\size(T_\klein)\ge \size(S_\klein)$ and thus immediately $\size(T)\ge \size(S)\cdot(1-\eps/2)$. If small items are discarded, however, the worst case is the following type of instance:  It starts with only small items of the largest possible size $\delta$, some of which are then discarded to accommodate big items with sizes right at the upper limit for the classes indicated by the advice, leaving a gap of almost $\delta$, follows up with slightly smaller big items that are in the optimal solution and would not have lead to any discarded small items, and finally presents big items at the lower end of the size span, replacing all previously packed big items. 

Even in this worst case, the algorithm remains $(1-\eps/2)$-competitive on the big items and detracting the largest possible loss of $\delta$ on the small items yields $\size(T)\ge\size(S)\cdot(1-\eps/2)-\delta$.
By the definition of $\delta$ and $K$ and due to the simple fact that $\size(S)$ is at most $1$, we have $\delta=(1-\eps/2)^K\le \eps/2\le \size(S)\cdot \eps/2$. This implies $\size(T)/\size(S)\ge 1-\eps$, as desired. 
\end{proof}

\begin{algorithm}[H]
	\caption{{}\vphantom{$0^{0^0}$}\textsc{PropPack}}
	\label{alg:prop_near_opt_upper}
	\medskip
	\textbf{Parameter: } Any  $\eps\in(0,1/2]$.

	\medskip
	\textbf{Online Input:} A sequence $I=(1,\dots,n)$ of $n$ items with sizes $(s_1,\dots,s_n)$.

	\medskip
	\textbf{Online Output:} 
	A $(1+\eps)$-competitive packing $T=T_\klein\cup 
	T_\gross$.

	\medskip
	\textbf{Advice: } The sequence $B=(b_1,\dots,b_m)$, 
	where $m$ 
	is the number of big items in a fixed 
	optimal solution and 
	$b_j$ is the class of the $j$th big item 
	appearing in it.

	\medskip
	\textbf{Algorithm:}
\medskip
	\begin{algorithmic}[1]
		\State $k \gets \Next(B)$ \Comment{Initialize $k$ to class of first big item to be packed.}
		\State $T_\klein \gets \emptyset$ \Comment{Initialize $T_\klein$, set of packed small items, to the empty set.}
		\State $T_\gross \gets \emptyset$ \Comment{Initialize $T_\gross$, set of packed big items, to the empty set.}
		\For{$i$ \In $I$} \Comment{For each new item in order of appearance do the following:}
			\If{$C(i)=0$} \Comment{If the new item is small, then check if it \dots}
				\If{$\size(T_\klein\cup T_\gross\cup 
				\{i\})\le 1$}\Comment{\dots{}\;fits in beside everything currently packed; \dots}
					\State $T_\klein\gets T_\klein\cup 
					\{i\}$ \Comment{\dots{}\;if it does, then pack it.}
				\EndIf
			\ElsIf{$C(i)=k$ \And 
			$\size(T_\gross\cup\{i\})\le1$}   \Comment{If big item of advised class can be fit in, \dots}
				\While{$s(T_\klein\cup 
				T_\gross\cup\{i\})>1$} \Comment{\dots{}\;then, until it actually fits, \dots}
					\State $\Pop(T_\klein)$ \Comment{\dots{}\;greedily discard small items one by one.}
				\EndWhile
				\State $T_\gross \gets T_\gross\cup\{i\}$ \Comment{Now that it actually fits, pack the new big item.}
				\label{line:fillnewslot}
				\State $k \gets \Next(B)$ \Comment{Update $k$ to class of the next item advised to be packed.}
			\Else \Comment{Among big new item and kept ones of same class, remove largest one and \dots}
				\State $T_\gross\gets 
				(T_\gross\cup\{i\})\setminus 
				\arg\max\{s(j)\mid j\in \{i\}\cup 
				(C_{C(i)}\cap 
				T_\gross)\}$ \Comment{\dots{}\;pack the rest.}
			\EndIf
		\EndFor
		\State \Return $T_\klein\cup T_\gross$ \Comment{Return the current solution after processing the entire input.}
\medskip
	\end{algorithmic}
\end{algorithm}

\section{Results for General Removable Knapsack}
First, we note that all lower bounds for the 
proportional removable 
knapsack problem carry over to the general 
removable knapsack 
problem, in particular \cref{thm:prop_opt_lower}. 

Iwama and Zhang~\cite{IZ2010} have shown that the 
competitive ratio of \GenRemKnap is unbounded 
without advice.  
This can be seen 
using an interactive instance that starts with 
an item $(1,1)$ and 
then presents items $(\eps^2,\eps)$ repeatedly, up to 
$1/\eps^2$ 
times, until one is packed, at which point the 
instance ends. 

The following two theorems show \GenRemKnap's
competitivity for one 
advice bit to be 
exactly $2$.

The existence of a $2$-competitive algorithm already follows from 
a result by Han et 
al.~{\cite[Thm.~9]{HKM2015}}, who proved the statement 
even for a single random bit instead of an advice bit. 
We prove \cref{thm:general_one_bit_upper} by describing a 
concrete advice algorithm. 

\begin{theorem}\label{thm:general_one_bit_upper}
There is a $2$-competitive algorithm for \GenRemKnap 
reading only a single advice bit. 
\end{theorem} 

\begin{proof}
The single advice bit indicates whether the 
instance contains an 
item worth at least half of the optimal solution 
value. 
If so, the algorithm greedily packs the most 
valuable item.
Otherwise, it packs in a yield-greedy manner while 
ignoring any 
items larger than $1/2$. If some optimal solution 
contains some 
item larger than $1/2$, then the rest of this 
solution is smaller 
than $1/2$ and worth at least half of the optimal 
solution value. 
The yield-greedy algorithm achieves at least 
this value: Either it 
packs all items up to size $1/2$ or, when having 
to discard such an 
item, leaves a gap smaller than $1/2$ while the 
rest is filled with 
the best possible yield.
\end{proof}

We now provide the matching lower bound for \cref{thm:general_one_bit_upper}.

\begin{theorem}\label{thm:general_one_bit_lower}
No algorithm for \GenRemKnap reading only 
a single advice 
bit can 
have a competitive ratio better than $2$.
\end{theorem} 

\begin{proof}
Fix an arbitrary positive $\eps<1/2$ such that 
$k=1/\eps$ is an 
integer. 
We describe an adversarial instance family on 
which no algorithm 
with a single advice bit can achieve a 
competitive ratio better 
than $2(1-\eps)$. 

The instance will present some subset of the 
following items:
\begin{align*}
&&&&&x_i\quad=&(1-i\eps^3,\ &2-i\eps)&&&&&&\text{ for 
}i\in\{0,1,\dots,k\},&&&&&&&&\\
&&&&&x_i'\quad=&(i\eps^3,\ &2-i\eps+\eps)&&&&&&\text{ for 
}i\in\{1,2,\dots,k\},\text{ and }&&&&&&&&\\
&&&&&y_j\quad=&(\eps,\ &4\eps)&&&&&&\text{ for 
}j\in\{1,2,\dots,k\}.&&&&&&&& 
\end{align*}

The exact subset and order of presentation 
depends on the operating 
advice algorithm and will be explained below.

We begin by making some simple observations. 
The items 
$x_0,x_1,\dots,x_k$ 
decrease 
in both size and value. The smallest of them has 
still size 
$1-\eps^2$, thus it is impossible to fit two of 
them into the 
knapsack 
together. The same is true for combining any item 
$x_i$ with any 
item $y_j$, together the exceed the knapsack 
capacity. 
Finally, an item $x_i'$ fits together with $x_i$ 
perfectly, but not 
with any of the items $x_0,x_1,\dots,x_{i-1}$. 

We arrange the items along the axes of a 
$(k+1)\times k$ grid as 
shown in \cref{fig:grid}.
Every instance in the hard instance family can 
be represented by a 
directed path in this grid that starts at the 
top left corner, only 
moves down or 
right from there, and stops at the latest when 
reaching either the 
bottom or right border of the grid. 
\Cref{fig:grid} shows one possible path.

We can read a given path as follows.
All instances start by presenting the two items 
$x_0$ and $y_1$ 
corresponding to the corner $(x_0,y_1)$. 
When a new coordinate along the $x$-axis or 
$y$-axis is reached, 
the corresponding item is presented. 
Thus, the path moving down means presenting the 
next item in the 
sequence $x_0,x_1,\dots,x_k$, and analogously for 
moving to the right and the sequence 
$y_1,y_2,\dots,y_k$. 
If the path ends at $(x_i,y_j)$ with $i<k$ and 
$j<k$---that is, 
before reaching the grid's bottom or right 
border---then the 
instance 
concludes with $x_k'$ as an additional final item. 
If the path reaches the bottom or right end, the 
instance ends 
without such an additional item.  

\begin{figure}[ht]
\centering
\begin{tikzpicture}[xscale=1.1,yscale=-1.1,
every node/.style={font=\small}]
\tikzset{
    dot/.style 2 args={fill, circle, inner sep=1pt, 
    label={#1:\scriptsize #2}}
}
\node at (-4,0) {};

\draw[very thin,dotted] (0.01,0.01) grid (4.99,4.99);

\draw[very thin] (0,0) grid (2.18,2.18);

\draw[very thin] (0,2.82) grid (2.18,3.18);
\draw[very thin] (3.82,2.82) grid (5,3.18);
\draw[very thin] (2.82,0) grid (3.18,2.18);
\draw[very thin] (2.82,3.82) grid (3.18,5);

\draw[very thin] (2.82,2.82) grid (3.18,3.18);

\draw[very thin] (3.82,0) grid (5,2.18);
\draw[very thin] (0,3.82) grid (2.18,5);

\draw[very thin] (3.82,3.82) grid (5,5);

\foreach \j [count=\enum from 0] in {1,2,3,j,k-1,k} \node 
at 
(\enum,-.3) 
{$y_{\j}$};
\foreach \j [count=\enum from 0] in {=,=,\cdots,\cdots,=,=} 
\node 
at 
(\enum+.5,-.3) 
{$\j$};
\node at (6.2,-.3) {$(\eps,4\eps)$};

\foreach \i [count=\enum from 0] in 
{
(1,2)=x_{0\hphantom{{}-1}},
(1-\eps^3,2-\eps)=x_{1\hphantom{{}-1}},
(1-2\eps^3,2-2\eps)=x_{2\hphantom{{}-1}},
(1-i\eps^3,2-i\eps)=x_{i\hphantom{{}-1}},
(1-\eps^2+\eps^3,1+\eps)=x_{k-1},
(1-\eps^2,1)=x_{k\hphantom{{}-1}}} 
\node[anchor=east] at (-.11,\enum) 
{$\i$};

\foreach \i in {2,3} \node at 
(-1,\i+.5) 
{$\vdots$};

\draw[ultra thick,marrow=triangle 45] (0.01,0) -- (1,0);
\draw[ultra thick,marrow=triangle 45] (1,0) -- (1,1);
\draw[ultra thick,marrow=triangle 45] (1,1) -- (1,2);
\draw[ultra thick,marrow=triangle 45] (1,2) -- (2,2);
\draw[ultra thick] (2,2) -- (2.18,2);
\draw[ultra thick,dotted] (2.18,2) -- (2.82,2);
\draw[ultra thick] (2.82,2) -- (3,2);
\draw[ultra thick] (3,2) -- (3,2.18);
\draw[ultra thick,dotted] (3,2.18) -- (3,2.82);
\draw[ultra thick] (3,2.82) -- (3,3.18);
\draw[ultra thick,dotted] (3,3.18) -- (3,3.82);
\draw[ultra thick] (3,3.82) -- (3,4);
\draw[ultra thick,marrow=triangle 45] (3,4) -- (3,4.99);

\draw[thick] (0,0) -- (2.18,0);
\draw[thick,dotted] (2.18,0) -- (2.82,0);
\draw[thick] (2.82,0) -- (3.18,0);
\draw[thick,dotted] (3.18,0) -- (3.82,0);
\draw[thick] (3.82,0) -- (5,0);

\draw[thick] (0,0) -- (0,2.18);
\draw[thick,dotted] (0,2.18) -- (0,2.82);
\draw[thick] (0,2.82) -- (0,3.18);
\draw[thick,dotted] (0,3.18) -- (0,3.82);
\draw[thick] (0,3.82) -- (0,5);

\draw[thick] (0,5) -- (2.18,5);
\draw[thick,dotted] (2.18,5) -- (2.82,5);
\draw[thick] (2.82,5) -- (3.18,5);
\draw[thick,dotted] (3.18,5) -- (3.82,5);
\draw[thick] (3.82,5) -- (5,5);

\draw[thick] (5,0) -- (5,2.18);
\draw[thick,dotted] (5,2.18) -- (5,2.82);
\draw[thick] (5,2.82) -- (5,3.18);
\draw[thick,dotted] (5,3.18) -- (5,3.82);
\draw[thick] (5,3.82) -- (5,5);

\node[fill, circle, minimum size=2.5mm, inner sep=0mm] 
at (0,0) {};
\node[fill, circle, minimum size=2.5mm, inner sep=0mm] 
at (3,5) {};

\end{tikzpicture}
\caption{The grid described in the proof of 
\cref{thm:general_one_bit_lower}. The directed path 
represents the instance that presents the items 
$x_0,y_1,y_2,x_1,x_2,y_3,y_4,\dots,y_j,x_3,x_4,\dots,x_k$ in 
this 
order.}
\label{fig:grid}
\end{figure}

We now describe how the path is determined by 
the actions of the 
advice algorithm.
Note that an algorithm with a single advice bit 
can be interpreted 
as two deterministic algorithms with the advice 
bit determining 
which algorithm is executed on any given instance. 
From $(x_i,y_j)$ with $i<k$ and $j<k$, the path 
continues 
as follows. 
\begin{description}
\item[Case 1.] If both algorithms have discarded 
$y_j$, then it 
goes to the 
right. 
\item[Case 2.] 
If one algorithm has $y_j$ in its reserve and 
the other $x_i$, 
the path continues downward. 
\item[Case 3.] 
If one algorithm has $y_j$ in its reserve, but 
neither 
has kept $x_i$, then the path stops.  
\end{description}

We make two observations. 
First, at 
any point $(x_i,y_j)$, the two algorithms may have 
$y_j$ in their 
reserve, but no other item from $y_1,y_2,\dots,y_k$. 
This is 
because the path is moving right---and thus 
the instance presenting an item in this 
sequence---only after 
its predecessor has been discarded by both 
algorithms. 

Second, when arriving at $(x_i,y_j)$, one of the 
algorithms may 
still have $x_{i-1}$ in its reserve, but neither 
algorithm will 
have kept 
any of the previous items $x_0,\dots,x_{i-2}$. This 
is because the 
path 
was able to move down to the current 
$x$-coordinate only if one 
algorithm has had $x_{i-1}$ in its reserve and 
the other some item  
$y_{j'}$, which excludes any third item from 
$x_0,\dots,x_{i-2}$.

We consider the three ways that a path 
representing an instance 
can end: at the right border, at the bottom edge, or 
anywhere else. 

\begin{description}
\item[Case 1.] 
If the path ends at the right border, the optimal 
solution consists 
of the entire sequence $y_1,y_2,\dots,y_k$ with 
total size 
$k\cdot\eps=1$ and total value $k\cdot4\eps=4$. 
The two deterministic algorithms, in contrast, have 
discarded all 
items in this optimal solution but possibly 
$y_k$. Among $y_k$ and
$x_0,x_1,\dots,x_k$, which might have been presented, 
no two fit 
into a 
knapsack of capacity 1 together, thus the best 
feasible 
solution for the advice algorithm 
is taking the single most valuable item $x_0$ 
with a value of $2$. 
The competitive performance of the algorithm is 
$4/2=2$ in 
this case.

\item[Case 2.]
If the path reaches the bottom, this means that 
the last move was 
downward; thus one algorithm has kept $x_{k-1}$ in 
its reserve and 
the other $y_j$. Therefore, only the three 
singleton solutions 
consisting of $x_{k-1}$, $x_k$, and $y_j$, 
respectively, are 
attainable by the advice algorithm. Among these, the 
first option 
is the best with a value of $1+\eps$. The optimum 
would have been 
to 
keep $x_0$ with a value of $2$, however, resulting 
in a competitive 
performance of $2/(1+\eps)$ or worse.

\item[Case 3.]
Finally, the path may stop at some point $(x_i,y_j)$ 
with $i<k$ and 
$j<k$. In this 
case, 
both algorithms have discarded $x_i$ and, as 
observed in the 
beginning, all previously presented items except 
for $x_{i-1}$ and 
$y_j$. 
Thus, when the final item $x_i'$ is 
presented, the best option for the advice algorithm 
is the 
singleton solution with the item $x_{i-1}$ of 
value $2-(i-1)\eps$.  
The optimal solution, in contrast, can combine the 
two items $x_i$ 
and $x_i'$ of complementing sizes and a total 
value of 
$2(2-i\eps)+\eps=2(2-(i-1)\eps)-\eps$. 
The resulting competitive performance is 
$2-\eps/(2-(i-1)\eps)\le 
2-\eps/(2-(k-1)\eps)\le 2-\eps$.
\end{description}

Overall, the advice algorithm's competitive ratio 
cannot be better 
than 
\[\min\big\{2,\frac2{1+\eps},2-\eps\big\}\ge 2-2\eps,\]
which tends to $2$ for decreasing $\eps$, thus 
proving the theorem.
\end{proof}

Advice being more powerful than randomness, 
\cref{thm:general_one_bit_lower} also closes 
the remaining gap for barely random algorithms 
by lifting the 
previously best known lower bound by Han et 
al.~\cite[Thm.~12]{HKM2015} from $1+1/e\approx1.367$ to 
$2$. 

The analysis of the hard instance presented in 
the proof of 
\cref{thm:general_one_bit_lower} could be adapted to the 
case of more than one advice bit. 
Two advice bits would mean that the oracle can provide to the algorithm one out of four advice strings instead of the two advice strings possible with one bit.
We may also consider an intermediate advice algorithm limited to three advice 
strings, which corresponds to $\log 3$ advice bits. 
Such an algorithm cannot 
achieve 
a competitive ratio 
better than $2/\Phi=4/(1+\sqrt5)\approx 1.2361$ since it 
is forced to use 
one of the three advice strings to keep the most 
valuable item 
$x_0$, one to pack the 
items $y_1,\dots,y_k$ yield-greedily, and one to keep 
the $x_j$ 
with the value 
$\Phi\approx 1.618$, resulting in a competitive ratio 
of 
$2/\Phi=2\Phi/(1+\Phi)$. 

This approach deteriorates too quickly, however. 
Adapting 
\cref{thm:prop_const_bit_lower} to the 
case of \PropRemKnap is the better choice; this 
results in \cref{thm:general_const_bit_lower}, which 
already yields a better bound in 
the case of three advice strings, that is, for 
$k=3$. For increased accessibility, we first provide an instantiation 
for the case of a single advice bit, namely $k=2$. 

\begin{theorem}\label{thm:general_one_bit_lower_obsolete}
No algorithm for \GenRemKnap reading only 
a single advice 
bit can 
have a better competitive ratio than 
$(1+\sqrt{3})/2\approx 
1.36603$.
\end{theorem}

\begin{proof}
The proof is similar to the one of 
Theorem~\ref{thm:prop_one_bit_lower} but takes 
advantage of the 
fact that the sizes and values of the items can 
be chosen 
independently.
Let $\eta = (\sqrt{3}-1)/2\approx 0.366025$ and let an 
algorithm 
for general removable knapsack reading only a 
single advice bit 
be given. 
Consider the three instances displayed in 
Table~\ref{tab:general_one_bit_upper}, which all start 
with 
the same three items.  
\begin{table}[h]
\caption{Hard instance for an algorithm for \GenRemKnap 
reading one 
advice bit, where $\eta=(\sqrt3-1)/2$ satisfies 
$1/(2\eta)=1+\eta=(1+\sqrt3)/2\approx1.36603$; see the proof of \cref{thm:general_one_bit_lower_obsolete}.}
\label{tab:general_one_bit_upper}
\begin{tabular}{c@{\hspace{1.9em}}c@{\hspace{1.5em}}c@{\hspace{1.5em}}c@{\hspace{1.5em}}c@{\hspace{1.4em}}c@{\hspace{2.2em}}cccc}
\toprule
	&$x_1$&$x_2$&$x_3$&$x_4$&$x_4'$&optimal&suboptimal&ratio\\
	\cmidrule(r{2.1em}){2-6}\cmidrule(r){7-9}
	$I_1$:&$(1,1)$&$(0.9,2\eta)$&$(0.8,\eta)$&&&$1$&$2\eta$&$1/2\eta$\\
	$I_2$:&$(1,1)$&$(0.9,2\eta)$&$(0.8,\eta)$&$(0.1,1-\eta)$&&$1+\eta$&$1$&$1+\eta$\\
	$I_3$:&$(1,1)$&$(0.9,2\eta)$&$(0.8,\eta)$&&$(0.2,1)$&$1+\eta$&$1$&$1+\eta$\\
\bottomrule
\end{tabular}
\end{table}

For each $i\in\{1,2,3\}$, the instance $I_i$ has a 
unique optimal 
solution; it contains $x_i$ plus the last fourth 
item if it exists. 
Table~\ref{tab:general_one_bit_upper} shows the total 
size for each 
of these optimal solutions and the second best 
solution. 

Because the sizes of any two of these three 
items $x_1$, $x_2$, and 
$x_3$ sum up to over 1, the algorithm can have at 
most one of them 
in the knapsack after being offered $x_3$. 
Moreover, since we have only one advice bit but 
three instances, 
there are two instances for which the algorithm 
has 
packed the same item right before the potential 
presentation of the 
fourth 
item. 
This implies that the algorithm is suboptimal 
for at least one 
instance, thus its competitive ratio cannot be 
better than 
$1/(2\eta)=1+\eta=(1+\sqrt{3})/2$. 
\end{proof}

\begin{theorem}\label{thm:general_const_bit_lower}
Let an arbitrary integer $k>1$ be given. No 
algorithm for 
\GenRemKnap 
reading at most $\log k$ advice bits can achieve 
a better 
competitive ratio than $1/2+\sqrt{1/4+1/k}$.
\end{theorem}

\begin{table}
\caption{A hard instance family for \GenRemKnap 
reading at most $\log k$ 
advice bits; see the proof of 
Theorem~\ref{thm:general_const_bit_lower}. Only the 
values of the 
items are 
given, using $\xi=(3-2k+\sqrt{4k(k+1)-7})/4$; the sizes 
can be 
chosen arbitrarily 
satisfying $1\ge 
\size_1>\size_2>\dots>\size_k>\size_{k+1}>1/2$ and 
$\size_{k+2}=1-\size_j$ in instance $I_j$.}
\label{tab:general_constant_bit_upper_a}
\setlength{\tabcolsep}{10.5pt}
\begin{tabular}{l@{\hspace{1.9em}}c@{\hspace{2.2em}}c@{\hspace{2.2em}}Hc@{\hspace{2.1em}}c@{\hspace{2.1em}}c@{\hspace{2.1em}}l}
\toprule
&$\val_1$&$\val_2$&$\val_3$&$\cdots$&$\val_k$&$\val_{k+1}$&~~$\val_{k+2}$\\
\cmidrule(r){2-8}
$I_1$:    
&$1$&$1/\xi$&$1/\xi-(\xi-1)$&$\cdots$&$1/\xi-(k-2)(\xi-1)$&$1/\xi-(k-1)(\xi-1)$&0\\
$I_2$:    
&$1$&$1/\xi$&$1/\xi-(\xi-1)$&$\cdots$&$1/\xi-(k-2)(\xi-1)$&$1/\xi-(k-1)(\xi-1)$&$\xi-\val_2$\\
$\,\vdots$  
&$\vdots$&$\vdots$&$\vdots$&$\vdots$&$\vdots$&$\vdots$&$\hphantom{1~}\vdots$\\
$I_{k+1}$:&$1$&$1/\xi$&$1/\xi-(\xi-1)$&$\cdots$&$1/\xi-(k-2)(\xi-1)$&$1/\xi-(k-1)(\xi-1)$&$\xi-\val_{k+1}$\\
\bottomrule
\end{tabular}
\end{table}

\begin{table}
\caption{The values of the optimal and best 
suboptimal solutions 
for each instance in the family given in 
Table~\ref{tab:general_constant_bit_upper_a} and the 
resulting 
competitive performance; see the proof of 
Theorem~\ref{thm:general_const_bit_lower}.}
\label{tab:general_constant_bit_upper_b}
\setlength{\tabcolsep}{7.7pt}
\begin{tabular}{lr@{\hspace{3.4em}}r@{\hspace{6.9em}}l}
\toprule
	&Optimal value&Best suboptimal 
	value\hspace*{-2.5em}&\hspace*{-2em}Best suboptimal 
	performance\\
	\cmidrule(lr){2-4}
	$I_1$:    &$\val_1=1$&$\val_2=1/\xi$&$\xi$\\
	$I_2$:    
	&$\val_2+\val_2'=\xi$&$\val_1=\val_3+\val_2'=1$&$\xi$\\
	$I_3$:    
	&$\val_3+\val_3'=\xi$&$\val_1=\val_4+\val_3'=1$&$\xi$\\ 
	$\,\vdots$ 
	&$\vdots\hphantom{\,~\xi}$&$\vdots\hphantom{\,~\xi}$&$\vdots\hphantom{\,\xi}$\\
	$I_k$:      
	&$\val_k+\val_k'=\xi$&$\val_1=\val_k+\val_{k-1}'=1$&$\xi$\\
	$I_{k+1}$:&$\val_{k+1}+\val_{k+1}'=\xi$&$\val_{k+1}+\val_k'=1$&$\xi$\\
\bottomrule
\end{tabular}
\end{table}

\begin{proof}
Let an arbitrary integer $k>1$ be given and 
define $\xi$ as the 
unique positive solution of $1/\xi=k(\xi-1)$, 
namely 
$\xi=1/2+\sqrt{1/4+1/k}$.
Consider $k+1$ instances that all start with the 
same $k+1$ items 
of the following, decreasing values: $\val_1=1$ and 
$\val_i=1/\xi-(i-2)(\xi-1)$ for every 
$i\in\{2,\dots,k+1\}$. Note 
that $\val_{k+1}=\xi-1$.  The sizes can be chosen 
arbitrarily 
satisfying $1\ge 
\size_1>\size_2>\dots>\size_k>\size_{k+1}>1/2$. 

The instance $I_1$ ends immediately after these 
common items, 
whereas the instance $I_i$, for $i\in\{2,\dots,k+1\}$, 
presents a 
complement to item $i$, namely an item of size 
$\size_i'=1-\size_i$ 
and value $\val_i'=\xi-\val_i$ as the final one.
There is a unique optimal solution for each 
instance: For $I_1$, it 
is to pack the first item, which has value 
$\val_1=1$. For $I_i$ 
with $i>1$, it is to pack item $i$ and its 
complement, which sum up 
to the optimal solution value $\xi$. 
Since there are only $\log k$ advice bits 
available to handle the 
$k+1$ instances, at least two instances $I_i$ and 
$I_j$ with $i<j$ 
are processed with the same advice string and 
thus by the same 
deterministic algorithm. Consider this algorithm 
and the moment 
after seeing and taking decisions on the first 
$k+1$ items. 
It is impossible for the algorithm to have more 
than one of these 
common items packed since all of these items are 
larger than $1/2$. 
Now, if item $i$ is packed at the considered 
moment, the algorithm 
will perform suboptimally on instance $I_j$. 
Analogously, if item 
$j$ is packed, the performance on instance $I_i$ 
is suboptimal. 

Now if suffices to check that the best 
suboptimal solution has a 
value of at most $1/\xi$ for $I_1$ and at most 
$1$ for the other 
instances. This leads to a performance ratio of 
$\xi$, thus 
proving the theorem. 
See Table~\ref{tab:general_constant_bit_upper_a} for an 
overview of 
the 
hard instance family. The best and second best 
solutions to all 
instances and their associated performances are 
listed in 
Table~\ref{tab:general_constant_bit_upper_b}. 
\end{proof}

We point out again that \cref{thm:general_const_bit_lower} 
slightly improves 
over the lower bounds known from the resource buffer 
model by Han 
et al.~\cite[Thm.~6]{HKMY2019}. 
Specifically, we can choose $k=2^1$, $k=2^2$, and $k=2^3$, 
for example, 
we obtain the lower bounds 
\begin{align*}
\frac{1+\sqrt3}2\approx{}& 1.3660,\quad
&\frac{1+\sqrt2}2\approx{}& 1.2071,\quad&&\text{ and }\quad
&\frac{1+\sqrt{3/2}}2\approx{}& 1.1124.
\intertext{
for one, two, and three advice bits, respectively. 
The corresponding lower bounds by Han et al.\ for a resource buffer of 
size $R=2^k$ are
}
4/3\approx{}& 1.3333,
&6/5={}& 1.2,&&\text{ and }&10/9\approx{}& 1.1111.
\end{align*}

We now move on to the core result of this paper, 
proving that a constant amount of 
advice bits is 
sufficient to reach a near-optimal competitive 
ratio not only for 
the proportional but even for the 
general removable knapsack problem. 
This will complete the picture of the global 
advice behavior of the 
online knapsack problem with removability 
outlined in  
\cref{fig:prop_schematics_combined}.

We first point out that algorithm \textsc{PropPack} 
(Algorithm~\ref{alg:prop_near_opt_upper}) generally does not work 
on 
instances where the value of an item can vary 
independently of its 
size, as seen by the following counterexample: 
Assume that $1/2$ lies in the interior of some size class and choose an $\eps>0$ such that $1/2+\eps$ and $1/2-\eps$ are still in the same class. Present two items $(1/2+\eps,2)$ and $(1/2,1)$. 
If the algorithm picks the first one, $(1/2,2)$ is presented as the last item; otherwise, the instance ends with the item $(1/2-\eps,1)$. In both cases, the algorithm achieves a total value of $2$, whereas the optimum is $3$. The advice does not help us to distinguish the two cases, it only tells us that the optimal solution contains two items from the class. 
Clearly, we have to adapt the algorithm to take the value of the items into account somehow. A major obstacle is that the online algorithm has no bound on the values of appearing items, thus the algorithm has no way of reconstructing the constantly many value classes used by the oracle just from the parameter $\eps$. 

Moreover, the proof of Theorem~\ref{thm:prop_near_opt_upper} cannot be adapted for the general case in any simple way. 
Using only classes based on size, it is impossible for the algorithm to know, when maintaining an item in slot, how to balance minimizing the size against maximizing the value. 
On the one hand, if the size is not minimized, then the excess size may prevent other slots from being filled.  
On the other hand, not maximizing the values, the algorithm may incur an arbitrarily high loss because the potential values of items cannot be bounded. 
This is also the reason why simple value classes do not work either. 
The algorithm does not know the maximal value occurring in the instance until it has ended and can therefore not use it as a reference point, in contrast to the size classes that can be chosen relative to the known knapsack capacity. 

A first step toward solving these issues is the definition of dynamic value classes 
that are anchored to both the value of the first item 
appearing in the instance and to the optimal 
solution value. The latter is of course also unknown to the algorithm until the instance ends. 
However, we are able to define our classes with some additional properties that 
enable our algorithm to compute at any point useful provisional bounds on the optimal solution value. 
These bounds will either turn out to be valid or the algorithm is able to notice 
that they are off just in time to adjust and take a fresh start before having lost too much due to bad decisions. 
The adversary may foil the algorithm over and over, forcing it to abandon its plans and adjust the bounds 
arbitrarily often. 

To properly deal with these repeated resets, we develop a \emph{level system}. 
One major challenge is to square the level system with some sort of \emph{slot system} 
as used by the algorithm for the proportional case. 
We manage to do this by introducing the concept of a \emph{virtual algorithm}, 
which has the special, even though only imagined, capability of keeping one 
item in a \emph{splitting slot} and use arbitrary fractions of the item stored in it. 
We then describe an actual algorithm that tries to equal the idealized performance 
of the virtual algorithm without making use of the splitting slot. 
While it cannot quite achieve this, it will fare well enough in the end. 
Having one algorithm emulating another, we are going 
to prove the claimed competitivity in two stages, 
first for the virtual version and then for the actual algorithm. 

This split analysis presents several further challenges, for example, a desynchronization 
of the current phase and the number of slots filled by the algorithm, which coincided in the proportional case.
The necessary adaptations entail a number of further challenges, for example a judicious handling of the \emph{paltry} items, which are worth almost nothing individually, yet may be too numerous to neglect. 
In fact, the algorithm will need to partition the items not only by their value but simultaneously by their size as well. 

Overcoming these and a few other obstacles, we are able to prove our final theorem. 

\begin{theorem}\label{thm:general_log_bits_upper}
For any $\eps>0$, there is a strictly 
$(1+\eps)$-competitive 
algorithm for \GenRemKnap reading a constant number 
of advice bits. 
\end{theorem}

\begin{proof}
Proving this theorem requires far more effort 
than what was necessary for its proportional counterpart \cref{thm:prop_near_opt_upper}, 
where an item's value is always identical to its size. 
Having explained already why any straightforward adaption of the 
substantially simpler approach for the proportional variant is 
impossible, 
we now provide a high-level outline of the proof. 
Then, we introduce 
some notions and notation necessary to describe 
what advice the oracle is communicating to the 
algorithm, and show that encoding this advice is possible with a 
constant number of bits. 
Then, we describe the algorithm and how it is using the advice; 
see Algorithm~\ref{alg:general_log_bits_upper} for an 
implementation in pseudo-code. 
Finally, we use two proofs by induction to show that the 
algorithm's 
online output successfully maintains some properties, which help us 
to conclude the proof by bounding the competitive ratio.

\paragraph*{\textit{Outline}}
As announced, we first provide a high-level outline of the 
workings of the advice algorithm.

The oracle chooses an arbitrary optimal solution $S$ and, 
based on its value 
$\val(S)$, partitions 
the items of the instance into \emph{precious} 
and \emph{paltry} 
ones; 
the paltry ones are those worth less than 
$\eps_\paltry\cdot\val(S)$ for 
a suitable 
$\eps_\paltry>0$. 
The precious items are further split into 
finitely many classes 
such that the item 
values within any class are at most some factor 
$1-\eps_\spread$ apart 
from each other. 

The advice will encode exactly the classes 
of the precious items 
from the optimal 
solution $S$ in their order 
of appearance; the goal is to ensure that the 
algorithm packs just 
as many precious 
items from each class as the optimal solution 
does, thus achieving a competitive factor of  
$1/(1-\eps_\spread)$ on these items. 
Assuming that the algorithm knows the exact 
value ranges of each 
class, it can achieve 
this using a system of \emph{slots}, which 
will be filled with precious items in two stages: first just  
\emph{virtually}---assuming the algorithm were able to do certain 
things that are in fact impossible---and then also \emph{actually} 
at some point. Each virtual filling of a slot starts a new 
\emph{phase} of the algorithm. 

However, the algorithm knows only the target value but 
nothing about the 
size of the items 
belonging into each slot; it is necessary to 
prove that despite 
this, the algorithm is 
able to fill the slots in the right order without 
blocking important 
items yet to come. 

And there is another problem: It is even 
impossible for the oracle 
to communicate to the 
algorithm the exact value 
range of each class with a constant amount of advice since 
there is no bound on 
the potential values 
occurring in an instance.
Instead, the value ranges will be described in 
relation to the value 
of the first item of 
the instance, and merely modulo some constant 
factor. 
The algorithm will then operate under the 
assumption that the first 
item is a precious 
one, in which case all of the above will work out. 
Since the algorithm cannot know for sure which 
items are precious, 
it divides them into 
\emph{presumably precious} and \emph{provenly 
paltry} ones according to some computations based 
on the advice and the instance seen so far. 
If the algorithm's assumption is mistaken, it is able 
to recognize 
this just in time by 
continually comparing the best solution 
realizable with the already presented items---whether they have 
been 
accepted or not---to a rather intricate estimate for the 
optimal solution 
value $\val(S)$. 
Once 
the algorithm discovers its mistake, it resets 
with a revised set 
of assumptions 
on the value ranges; 
we say that the algorithm \emph{levels up}.   
This is done such that the algorithm can go 
through arbitrarily 
many levels, resetting and taking a fresh start as 
often as 
necessary without incurring more than a 
negligible value loss. 

The algorithm also needs to take care of the 
paltry items, which 
might constitute a 
considerable part of the optimal solution if 
there are sufficiently 
many of them. 
The algorithm is packing the paltry items in a 
somewhat inhibited greedy manner that 
optimizes the value-to-size ratio, which we also 
call \emph{yield}. 
The volume taken up by the paltry items 
will be restricted sufficiently to guarantee that 
the precious 
items can always fill their slots, but not as 
severely as to lose 
too much value on the paltry items. 
The right volume restrictions in each phase of 
the algorithm are 
communicated via 
a constant amount of advice as well. 
Again, this cannot happen 
directly, since the right 
volume range might be infinitesimally small. 
Instead, the volume is controlled indirectly, 
via bounds not on some size but instead on the value 
that is provided by the paltry 
items packed before filling the current slot 
with a precious item. 

Communicating the necessary volumes in this 
way is possible 
only up to some 
precision $\eps_\round$, and an overestimation could mean the 
loss of a crucial precious item. 
If we always round down, then this cannot happen, but 
the algorithm might reject some paltry item 
it should have 
kept for a selection of 
maximum yield. This is negligible if it happens 
only once, but we 
cannot tolerate 
taking such a loss in every phase, with every new 
volume bound. The 
solution is to analyze the situation using 
a special \emph{splitting slot}, which can accommodate one paltry 
item at the time outside of the knapsack. 
We imagine the splitting slot lending us from 
the item stored in it any desired 
fraction at any time---just always the same fraction of the value 
and size.  
We refer to the item currently stored in the splitting slot as the 
\emph{split item}. 
We will consider an algorithm that maintains a split item of 
highest yield after the remaining paltry items kept in the 
knapsack. 
This imaginary algorithm is an antecedent to our real 
advice algorithm, which builds on it but cannot actually split any 
items of course; we refer to the purely hypothetical precursor as 
the 
\emph{virtual version} of our \emph{actual algorithm}.

The actual algorithm will mimic the virtual version as closely as 
possible and deliver a result that is only marginally worse. 
Whenever the virtual version splits an item, the actual algorithm 
needs to decide whether to discard this item or store it 
completely. 
The challenge is to take the right decisions to allow for all slots 
to be filled in time and also 
avoid an undue 
accumulation of losses by passing on too many 
split items.

The advice helps the actual algorithm by indicating for every 
phase whether an item 
stored in the splitting slot by the virtual version is to be packed 
or discarded. This is done in such a way that 
in the end, the actual algorithm will have relinquished only 
the value of a single paltry item, namely the one kept in the 
splitting slot when the instance ends. 

Packing entire items from the splitting slot comes with problems on 
its own; these paltry items might block for the actual algorithm 
some precious items that are packed by the virtual version. 
This problem is addressed by further advice to the algorithm on how 
to prioritize the packing of precious versus paltry items in each 
phase. 
This advice, telling the algorithm when to \emph{actualize} a 
virtual packing of an item, is based on the solution that the 
virtual version would eventually produce if it existed. 
The virtual version does not depend on the actual algorithm and has 
no need for the part of the advice on actualization, 
which avoids any circular reasoning.  

There are a few more technical 
issues to be dealt with, for example the special 
case that the precious items contribute only marginally to the 
value of the optimal solution. This undermines the estimates for the 
optimal solution value, is thus flagged by a 
dedicated advice bit $b_\purelypaltry$, and handled by switching 
from the elaborate value-based limits that dampen the general 
yield-greedy strategy to a simpler size-sensitive strategy.

\begin{figure}
\begin{tikzpicture}[xscale=.64]
\newcommand{\myheight}{.22}
\newcommand{\mywidth}{*13.8}
\newcommand{\myadjacent}{.35}
\newcommand{\mygap}{.0072\mywidth}
\newcommand{\closedopen}[3]{
\draw[fill=#3] (#2-\mygap,-\myheight) 
-- 
(#1,-\myheight) -- 
(#1,\myheight) -- 
(#2-\mygap,\myheight);
\draw[fill=#3] (#2-\mygap,-\myheight) 
arc 
(-atan(\myheight/\myadjacent):atan(\myheight/\myadjacent):{sqrt(\myheight^2+\myadjacent^2)});
}
\newcommand{\openclosed}[3]{
\draw[fill=#3] (#1+\mygap,-\myheight) 
-- 
(#2,-\myheight) -- 
(#2,\myheight) -- 
(#1+\mygap,\myheight);
\draw[fill=#3] (#1+\mygap,\myheight) 
arc 
(180-atan(\myheight/\myadjacent):180+atan(\myheight/\myadjacent):{sqrt(\myheight^2+\myadjacent^2)});
}

\openclosed{-4}{-3}{white}
\openclosed{-3}{-2}{black!10}
\openclosed{-2}{-1}{white}
\openclosed{-1}{0}{white}
\openclosed{0}{1}{white}
\openclosed{1}{2}{black!10}
\openclosed{2}{3}{white}
\openclosed{3}{4}{white}
\openclosed{4}{5}{white}
\openclosed{5}{6}{black!10}
\openclosed{6}{7}{white}
\openclosed{7}{8}{white}
\openclosed{8}{9}{white}
\openclosed{9}{10}{black!10}
\openclosed{10}{11}{white}
\openclosed{11}{12}{white}
\openclosed{12}{13}{white}
\openclosed{13}{14}{black!10}
\openclosed{14}{15}{white}
\openclosed{15}{16}{white}

\node at (-4.5,0) {$\cdots$};
\node at (-3.5,0) {$V_{-3}$};
\node at (-2.5,0) {$V_{-2}$};
\node at (-1.5,0) {$V_{-1}$};
\node at (-.5,0) {$V_0$};
\node at (0.5,0) {$V_1$};
\node at (1.5,0) {$V_2$};
\node at (2.5,0) {$V_3$};
\node at (3.5,0) {$V_4$};
\node at (4.5,0) {$V_5$};
\node at (5.5,0) {$V_6$};
\node at (6.5,0) {$V_7$};
\node at (7.5,0) {$V_8$};
\node at (8.5,0) {$V_9$};
\node at (9.5,0) {$V_{10}$};
\node at (10.5,0) {$V_{11}$};
\node at (11.5,0) {$V_{12}$};
\node at (12.5,0) {$V_{13}$};
\node at (13.5,0) {$V_{14}$};
\node at (14.5,0) {$V_{15}$};
\node at (15.5,0) {$V_{16}$};
\node at (16.5,0) {$\cdots$};

\draw[decoration={brace}, decorate] 
(-4+\mygap/2,.3) node {} -- (0,.3);
\draw[decoration={brace}, decorate] 
(0+\mygap/2,.3) node {} -- (4,.3);
\draw[decoration={brace}, decorate] 
(4+\mygap/2,.3) node {} -- (8,.3);
\draw[decoration={brace}, decorate] 
(8+\mygap/2,.3) node {} -- (12,.3);
\draw[decoration={brace}, decorate] 
(12+\mygap/2,.3) node {} -- (16,.3);

\node at (-2,.6) {$W_{0}$};
\node at (2,.6) {$W_1$};
\node at (6,.6) {$W_2$};
\node at (10,.6) {$W_3$};
\node at (14,.6) {$W_4$};

\draw[->] (2,-.7) -- (2,-.3);
\node at (2,-1) {$v_1$};
\draw[->] (6.7,-.7) -- (6.7,-.3);
\node at (6.7,-1) {$v_i$};
\draw[->] (11.4,-.7) -- (11.4,-.3);
\node at (11.4,-1) {$v(S)$};

\draw[decoration={brace}, decorate] 
(-5+\mygap/2+\mygap,1.0) node {} -- 
(4-\mygap,1.0);
\node[align=center] at (-0.5,1.8) 
{{}\\paltry and\\provenly paltry};

\draw[decoration={brace}, decorate] 
(4+\mygap/2+\mygap,1.0) node {} -- 
(8-\mygap,1.0);
\node[align=center] at (6,1.8) 
{paltry but\\presumably\\precious};

\draw[decoration={brace}, decorate] 
(8+\mygap/2+\mygap,1.0) node {} -- 
(12-\mygap,1.0);
\node[align=center] at (10,1.8) 
{{}\\{}\\precious};

\end{tikzpicture}
\caption{A schematic illustration 
showing some of 
the infinitely many \emph{classes} 
$V_k$ and 
\emph{comboclasses} $W_k$ used in 
the proof of \cref{thm:general_log_bits_upper}. 
The $x$-axis shows the value 
ranges of the items contained in each 
class; it 
has 
logarithmic scale. We have $0$ 
infinitely far to the left and $\infty$ 
infinitely 
far to the right. Each class 
$V_k$ covers 
values that spread across a factor of 
$1-\varepsilon_\spread$, which 
is the unit 
length in the logarithmic scale. The 
comboclasses are $K=4$ 
units wide, and there are 
$K=4$ modulo classes. The modulo 
class $M_2$ is shaded 
in gray; this is the modulo class 
containing the 
first item, which has value $v_1$. 
The scale is 
shifted such that the following three 
properties 
are satisfied: The value $v_1$ of the first item lies 
right at the top end of its class 
(i.e., the right end in this illustration), 
the class of the first item is part of the 
comboclass $W_1$, 
and the optimal solution value $\val(S)$ lies somewhere  
in the highest class of its comboclass, namely 
in the modulo class $M_0$.
The position of the optimal solution 
value determines what we call 
\emph{precious} and \emph{paltry}. 
The highest value achievable with the 
items seen so far, represented by $v_i$ in the example, 
determines what 
we call 
\emph{presumably precious} and 
\emph{provenly paltry}.}
\label{fig:classes}
\end{figure}

\paragraph*{\textit{Notions and 
Notation}}\label{par:notionsandnotation}
We make the same assumptions as in the proof of 
\cref{thm:prop_near_opt_upper} for the proportional 
problem: The 
knapsack has capacity 
$1$, all items have size at most $1$, and $\eps\le 
1/2$. 
Again, we denote the items in the order of their 
appearance in the 
instance by $1,2,\dots,n$. We write $\size_i=\size(i)$ 
and 
$\val_i=\val(i)$ for the size and value of item 
$i$, respectively. 

To make to proof more understandable, we use the four constants 
$\eps_\klein$, 
$\eps_\paltry$, 
$\eps_\spread$, and 
$\eps_\round$, 
which all depend only on the given parameter $\eps$ but are 
used in different roles. They need to satisfy several 
inequalities; a concrete list of possible choices is 
$\eps_\klein=\eps/2^3$, 
$\eps_\paltry=\eps^2/2^5$, 
$\eps_\spread=\eps^4/2^{14}$, and $\eps_\round=\eps^6/2^{20}$.

The oracle begins by computing an arbitrary 
optimal solution $S$.  
In contrast to the proof for the proportional analogue of our 
theorem, not just the advice but also the 
constructed classes 
depend on $S$ and the instance as well.  
Specifically, the oracle uses the value $\val_1$ of 
the first item 
in the instance and the optimal solution value 
$\val(S)$ to 
partition all items into the bi-infinite 
sequence of value classes
\[
V_k = \Big\{\,i\in\{1,\dots,n\}\ \Big|\ 
\Big\lceil\log_{1-\eps_\spread}\frac{\val_1}{\val_i}\Big\rceil+
 K - 
\Big(\Big\lceil\log_{1-\eps_\spread}\frac{\val_1}{\val(S)}\Big\rceil\bmod{K}\Big)=
k\,\Big\} 
\]
for every integer index $k$, with 
$K=\big\lceil\log_{1-\eps_\spread}\eps_\paltry\big\rceil+1$ being a 
constant for any parameter $\eps$. 
\begin{claim}
These classes are constructed such that they 
simultaneously satisfy the following 
four properties, which are straightforward to 
verify and will 
prove crucial later on.
\begin{enumerate}
\item
Each class 
contains only 
items whose values lie within an interval 
spanning a factor of 
$1-\eps_\spread$ from the upper to the lower interval end. 
\item 
The value $v_1$ of the first item marks 
the upper end 
point of the value interval of the 
class containing this first item. 
\item 
The first item is contained in one of the classes $V_1,\dots,V_K$.
\item 
The index of the class whose interval 
contains the 
optimal solution value $\val(S)$ is a multiple of 
$K$. 
\end{enumerate}
\end{claim}
\begin{proof}
For the first property, it we observe that the item value 
$\val_i$ only occurs once in the entire definition of $V_k$, namely 
within 
the first summand  
$\lceil\log_{1-\eps_\spread}(\val_1/\val_i)\rceil$ as a factor in 
the argument of the logarithm. 
Since 
$\log_{1-\eps_\spread}(\val_1/\val_i)=(\log_{1-\eps_\spread}\val_1)-\log_{1-\eps_\spread}\val_i$,
 this term decreases by $1$ if $\val_i$ is replaced by
$\val_i'=\val_i(1-\eps_\spread)$. Thus the values of 
items in any given class can span factor of up to $1-\eps_\spread$ 
but never more. 

To verify the second property, we observe that 
$\log_{1-\eps_\spread}(\val_1/\val_i)$ is 
zero for $\val_i=\val_1$ but positive whenever $\val_i>\val_1$ and 
thus $\val_1/\val_i<1$. Note that the logarithm base 
$1-\eps_\spread$ is smaller than $1$. 

The third property follows again due to the first summand vanishing 
for $\val_i=\val_1$. This leaves the class index $k$ equal to 
$0+K-j$, where 
$j=\lceil\log_{1-\eps_\spread}(\val_1/\val(S))\rceil$ is an integer 
taken modulo $K$, meaning that we have $j\in\{0,1,\dots,K-1\}$ and 
thus $k\in\{1,\dots,K\}$. 

Finally, for the fourth and last property, it suffices to set 
$\val_i=\val(S)$ 
and see that the minuend becomes identical to the first summand, 
except for being taken modulo $K$, meaning that they combine to 
a multiple of $K$. 
\end{proof}

For every $k\in\{0,\dots,K-1\}$, we define the 
\emph{modulo class} 
$M_k=\bigcup_{j=-\infty}^{\infty}V_{jK-k}$, consisting of every 
$K$th class. 
There are exactly $K$ distinct modulo classes, namely 
$M_0,M_1,\dots,M_{K-1}$; each corresponds to one of 
the possible 
values of the 
modulo term in the definition of $V_k$. 
We could also have defined $M_k$ as 
$\bigcup_{j=-\infty}^{\infty}V_{jK+k}$, but using $V_{jK-k}$ 
instead of $V_{jK+k}$ will turn out to be more convenient 
later on. 

Beside the $K$ modulo classes we define an infinite number of 
\emph{comboclasses}. 
For every integer $k$, comboclass $W_k=\bigcup_{j=0}^{K-1}V_{kK-j}$ 
comprises $K$ consecutive classes. 
This definition is again optimized for notational convenience in 
the remainder of this paper, 
which is also the reason for including into definition of $V_k$ the 
middle summand $K$. 
For example, we can now reformulate the third and fourth property 
more succinctly as follows: 
Item $1$ is contained comboclass $W_1$, and $\val(S)$ 
belongs to modulo class $M_0$.

For any item $i$, we write $V(i)$, $M(i)$, and 
$W(i)$ to indicate 
the class, modulo class, and comboclass in which 
item $i$ lies. 
Thus, $V(i)$ is the integer $k$ such that $i\in 
V_k$, analogously 
$W(i)$ is the integer $k$ satisfying $i\in W_k$, 
and finally $M(i)$ is the integer $k$ satisfying $i\in M_k$.
Note that by their definitions these three terms satisfy the 
general 
relation $V(i)=W(i)\cdot K-M(i)\label{eq:classrelation}$ for any 
item $i$.

An item can be at most as valuable as the entire 
optimal solution.
The comboclass that would contain such a potential item, 
namely the one with index $W(\val(S))$, 
is called \emph{precious}. 
The $K$ 
classes comprising the precious comboclass and 
all the 
items contained in it are called \emph{precious} 
as well. The 
remaining 
items are called \emph{paltry}. 
In the natural way, we denote the sets of 
precious and paltry items 
by 
$V_\precious$ and $V_\paltry$, 
respectively.

We now refer to \cref{fig:classes}, which illustrates the value 
ranges of the classes, comboclasses, and modulo classes and shows 
which are considered paltry and precious, respectively. The notions 
of \emph{provenly paltry} and \emph{presumably precious}, which 
also appear in this figure, will only be 
defined during the discussion of 
the so-called \emph{level system} further down. 

Using \cref{fig:classes}, it is also simple to verify the 
following. 
The highest 
class that could 
contain any paltry item is $V_{(W(\val(S))-1)K}$, and it is 
separated from 
$V_{W(\val(S))K}$, whose value 
range contains $\val(S)$, by $K-1$ classes. 
Since the value range 
of each of these classes spans a factor of 
$1-\eps_\spread$, 
it follows that any paltry item is worth less than 
$(1-\eps_\spread)^{K-1}\val(S)\le\eps_\paltry\val(S)$,
where the inequality holds by the definition of $K$. 
This fact will be used later on.  

We denote by $u_1<\ldots<u_m$ the precious items 
appearing in the 
optimal 
solution 
$S$ in the order as they are 
presented during the given instance. These items 
partition the 
paltry items $V_\paltry$ into $m+1$ subclasses, 
depending on their 
relative appearance time: 
For $j\in\{0,\dots,m\}$, we let 
$V_{\paltry,j}=\{\,i\in V_\paltry\mid u_j<i<u_{j+1}\,\}$
denote the paltry items appearing after $u_j$ 
and before $u_{j+1}$, 
where we use the auxiliary definitions $u_0=0$ 
and $u_{m+1}=n+1$ for two valueless items 
for notational convenience. 

Finally, we partition the items--- independently of the distinction 
between paltry and precious---into the \emph{big} 
ones of size at least $\eps_\klein$, 
denoted 
by 
$V_\gross$, and the \emph{small} ones below this threshold, 
denoted by 
$V_\klein$. 

\paragraph*{\textit{Advice Content}}
In analogy to the proof of 
Theorem~\ref{thm:prop_near_opt_upper} 
for the proportional problem variant, where it is sufficient to 
define a 
finite number of 
instance-independent, 
size-based classes, we would like for the oracle 
to communicate 
through its advice the 
class of each precious item in the optimal 
solution $S$ in 
appearance order. 
This is impossible, however, since the number of 
classes is 
unbounded this time around. 
Instead, we restrict ourselves to conveying the 
information merely 
modulo the constant $K$. We do so by encoding 
modulo classes---of 
which there 
are only finitely many---for all precious items 
into a tuple 
$(M(u_1),\dots,M(u_m))$. 

In addition, the 
advice tells the algorithm into which 
modulo class the first 
item of the instance is falling; that is, it 
encodes $M(1)$. 

Furthermore, for every $j\in\{0,1,\dots,m\}$, the oracle 
encodes 
$\lfloor\val(S\cap V_{\paltry,j})/\val(S)\rfloor_{\eps_\round}$
 for some sufficiently small $\eps_\round>0$; this 
 is the fraction 
 of the 
optimal solution value that is due to paltry items 
appearing in the instance after $u_j$ but before 
$u_{j+1}$, rounded down 
to 
the nearest multiple of $\eps_\round$. We denote 
this fraction by $f_j$. 

There is also one special advice bit $b_\purelypaltry$, telling the 
algorithm whether packing only paltry items yields a 
$(1-\eps_\purelypaltry)$-competitive solution.  

The advice described so far is everything needed for the virtual 
version of our algorithm to work. 
The actual algorithm makes use of the following additional advice, 
which depends on the online output of the virtual version:
For every $j\in\{1,\dots,m\}$, the advice contains a tuple 
$a_j=(a_{j,1},a_{j,2},\dots,a_{j,m_j})$ of numbers from 
$\{0,1,2,\dots,m\}$. Here, the numbers $1,2,\dots,m$ are the 
indices of the $m$ slots 
used by the algorithm for storing precious items. 
The number $0$ is taken as the index of the splitting slot. 
The numbers of the tuple $a_j$ are determined by what the virtual 
algorithm does with the items appearing in phase $j$. 
More precisely, it only matters what the algorithm does with those 
items that will be part of the eventual virtual solution $S'$, 
excluding the one item remaining in the splitting slot in the end. 
If there are no such items during phase $j$, then the tuple $a_j$ 
is empty. 
Otherwise, consider the first such item of phase $j$. 
It will be part of the eventual solution $S'$, meaning that must be 
packed by the virtual algorithm upon appearance. 
The first number of the tuple $a_j$ indicates the slot into which 
this item is stored, namely slot $a_{j,1}$. 
The next number of the tuple is analogously determined by the next 
item that is packed during phase $j$ and a full part of the 
eventual virtual solution, and so on.  
In summary, tuple $a_j$ contains the indices of the slots filled during phase $j$ with items that will be a complete 
part of the eventual virtual solution, and the indices are sorted in 
the order these slots were filled with said items. 

\paragraph*{\textit{Constant Advice}}
We prove that a constant amount of advice 
suffices to encode the 
information just described. 

We assume without loss of generality that the following, 
separately discussed pieces of information are combined 
into the advice string in such a way that they can be 
unambiguously retrieved. 
This is achieved by means of a suitable self-delimiting encoding 
that increases the advice length by at most a constant factor. 

The number $m$ of precious items in the optimal 
solution, and thus also in the virtual solution, is 
bounded from above by the constant $1/\eps_\paltry$ since 
every precious 
item is worth at least 
$\eps_\paltry\cdot \val(S)$ and the total value of packed precious 
items cannot 
exceed 
$\val(S)$. 

There are only constantly many modulo 
classes, namely 
$K=\big\lceil\log_{1-\eps_\spread}\eps_\paltry\big\rceil+1$; 
a constant amount of advice therefore suffices  
to communicate the modulo classes of the 
$m$ precious items in the optimal solution and 
that of the first item 
of the instance. 

For every $j\in\{0,1,\dots,m\}$, we can encode $f_j$ 
via 
$\lfloor(1/\eps_\round)\val(S\cap\bigcup_{k=0}^jV_{\paltry,k})/\val(S)\rfloor$;
 this is the fraction to 
be approximated divided by $\eps_\round$ and 
rounded down to the 
nearest integer. These integers are clearly bounded by 
the constant 
$1/\eps_\round$. 
Since both $1/\eps_\round$ and 
$m$ are constants, this information is also 
encodable within a constant amount of advice. 

Finally, there are the $m+1$ tuples $a_0,a_1,\dots,a_m$, each 
containing at most $m+1$ numbers ranging from $0$ to $m$. 
A number of advice bits cubic in $m$, but still constant for any 
given parameter $\eps$, is 
enough to store this information too. 

\paragraph*{\textit{Algorithm Description}}
\subparagraph*{A Single Special Switch.}
As a first simple special case, we consider what the algorithm does 
when the bit $b_\purelypaltry$ is set to $1$, meaning that packing 
only paltry items in a yield-greedy fashion would lead to a 
$1/(1-\eps_\purelypaltry)$-competitive solution. 
In this case, the algorithm does not directly implement this 
strategy---being unsure what items are in fact paltry, it is in 
fact unable to do so---but instead runs a yield-greedy strategy on 
the small items, discarding any big item without consideration. 
When analyzing the algorithm later on, we 
will show that this still secures a solution of sufficient value. 

For the remainder of this description, we assume that 
the 
switch $b_\purelypaltry$ is set to zero. 

\subparagraph*{Computing the Classes.}
Immediately after seeing the first item, the 
algorithm uses the 
advice-given information about the modulo class 
of this item and the fact that 
its value $v_1$ lies right at 
the upper end 
of the value interval of its class to 
reconstruct the precise value 
intervals of every class $V_k$. 
This is possible thanks to the four properties of these value 
classes already proved above:

Once the exact value range covered by the class containing the 
first item is known to the algorithm, the first property can be 
used to achieve the same for all other classes.
The remaining three properties help the algorithm retrieve all 
information about the value interval of the class of the first 
item. 
The second property fixes the upper end of this interval to $v_1$. 
The third property tells us that $v_1$ is contained in one of the 
classes $V_1,V_2,\dots,V_K$ of comboclass $W_1$. 
The fourth property determines which of these $K$ classes it is by 
fixing the modulo class of $\val(S)$ and thus also the one of 
$\val_1$, which is included in the advice given to the algorithm. 

As a consequence, the algorithm is able to 
compute $V(i)$, $M(i)$, and $W(i)$ for any 
item $i$ of the instance as soon as it appears. 
We emphasize that, nevertheless, it is still unable 
to tell for 
sure whether an item is precious 
or paltry as long as the instance has not ended. 
Instead, the algorithm partitions the items seen 
so far into 
\emph{presumably precious} and \emph{provenly 
paltry} ones. 
This partition may change over time and is based 
on the following 
\emph{level system}. 

\subparagraph*{The Level System.} 
At any time, there is exactly 
one comboclass 
considered presumably 
precious; all lower comboclasses are provenly 
paltry. 
The designation as presumably precious or 
provenly paltry is naturally inherited by the 
classes and items 
contained in a comboclass. 
For every integer $\ell$, we say that the 
algorithm is at 
\emph{level} $\ell$ if 
$W_\ell$ is the presumably precious comboclass at 
this moment. 
The algorithm starts at level $1$, which means that the 
comboclass containing the 
first 
item of the instance is considered presumably 
precious in the beginning. 
While processing its input, the algorithm may level 
up many times, 
but 
the level is never 
decreasing: 
What has been recognized as presumably precious 
may become provenly paltry at a later 
point, but never the other way around.
There are exactly two events that can trigger 
the raise to a new 
level. 

\begin{enumerate}
\item The first one is the appearance of an item with 
a value above the 
value range of the comboclass that is currently 
seen as presumably 
precious. 
In this case, the comboclass containing this new 
item becomes the 
presumably precious one. 

\item 
The algorithm might also be triggered into leveling up by the 
arrival of a 
provenly paltry item, but only under certain conditions. 
Namely, the algorithm is always keeping track of what 
the optimal solution 
value achievable with the items presented so far, 
including the rejected 
ones, would have been. 
This only hypothetically realizable value is 
continually compared 
against some upper estimate 
$U_\ell$ for 
the optimal solution value $\val(S)$; the exact 
estimate is 
explained and examined in 
the next subsection. 
As soon as the estimate $U_\ell$ for $\val(S)$ is 
exceeded by the 
hypothetically achievable solution value, the 
algorithm enters the 
next level: 
The presumably precious comboclass becomes 
provenly paltry and the 
comboclass immediately above it becomes the new 
presumably 
precious one. 
\end{enumerate}

We will explain the precise purpose for employing exactly these two 
triggers when analyzing the competitive ratio of the algorithm. 
Essentially, the first one protects precious items from being 
discarded due to procrastinated level changes, while the second 
trigger's 
paramount duty is to prevent too many paltry items from being 
misjudged as precious, which can also lead to them being unduly 
discarded. 

Each of the two triggers alone would already ensure that the 
algorithm eventually reaches a level such 
that the presumably 
precious comboclass is in fact the precious one. 
We call the level for which this happens the 
\emph{final level} and refer to all previous ones as \emph{the lost 
levels} for a reason that will become clear later on.

\subparagraph*{Estimating the Optimal Solution Value.} 
The upper estimate for $\val(S)$ mentioned before is 
\[U_\ell=v_1\sum_{j=1}^m(1-\eps_\spread)^{V(1)+M(u_j)-\ell 
K}/(1-\eps_\round(1+1/\eps_\paltry)-\sum_{p=0}^mf_p),\]
 where $\ell$ is the index of the comboclass 
 currently considered 
 presumably 
precious by the algorithm. The modulo class index 
$M(u_j)$, the 
class index $V(1)$, 
and the rounded fractions 
$f_p=\lfloor\val(S\cap 
V_{\paltry,p})/\val(S)\rfloor_{\eps_\round}$
of the 
optimal solution 
value contributed by paltry items from phase $p$ are all 
retrieved from the 
advice. 
When calculating the estimate above, the algorithm 
assumes to have 
reached its 
final level; that is, 
it operates under the assumption that the 
presumably precious 
comboclass $W_\ell$ is in 
fact the precious one. 
Under this assumption, $U_\ell$ is indeed a valid 
upper estimate 
for $\val(S)$ as we will show now. 

We have $\val(S)=\val(S\cap V_\precious)+\val(S\cap 
V_\paltry)$ 
and begin by analyzing the first of the two summands. 
It is the value 
contributed to the optimal solution by the $m$ 
precious items contained in 
it. 
The algorithm can use the information about the 
modulo classes of 
these $m$ items to place the value of each of 
them into an interval 
spanning a factor of $1-\eps_\spread$.  
Specifically, we start with the already established general 
relation 
$V(i)=W(i)K-M(i)$, 
apply it to $i=u_j$, 
and use the assumption $W(u_j)=\ell$ of the 
current comboclass 
being the precious one to 
obtain $V(u_j)=\ell K-M(u_j)$. 
Since $v_1$ lies at the upper endpoint of 
class $V(1)$, we 
can compute the upper end of the value range of 
class $V(u_j)$ as 
$v_1/(1-\eps_\spread)^{V(u_j)-V(1)}=v_1(1-\eps_\spread)^{V(1)+M(u_j)-\ell
K}$ precisely. Note that the given advice 
and the value of 
the first 
item are in tandem indeed sufficient to 
calculate this number. It 
is an upper bound on 
$\val(u_j)$ such that the true value cannot be smaller than 
a factor $1-\eps_\spread$ of it, provided that the algorithm's 
assumption 
$W(u_j)=\ell$ holds 
true. If the 
algorithm is mistaken, however---that is, if $\ell 
<W(u_j)$---then 
the estimate is far too 
low; namely, it would be exactly a factor 
$(1-\eps_\spread)^{(W(u_j)-\ell)K}\le(1-\eps_\spread)^K$ of what 
it otherwise would have been. 
If the algorithm has not yet reached the final level, 
the estimate for 
$\val(u_j)$ is 
therefore at most a fraction 
$(1-\eps_\spread)^{K-1}\le \eps_\paltry$ of the true 
value. As mentioned before, 
the inequality holds by the definition of $K$. 
All of the assertions above for any single 
precious item naturally 
carry over to the sum
\[t=\sum_{j=1}^mv_1(1-\eps_\spread)^{V(1)+M(u_j)-\ell 
K}.\] 
In particular, the total value $\val(S\cap 
V_\precious)$ of all precious 
items in the 
fixed optimal solution $S$---the first of the two summands to be 
examined---lies in the interval 
$[(1-\eps_\spread)t,t]$ during the 
final level. 

Using the advice information 
$f_p=\lfloor\val(S\cap 
V_{\paltry,p})/\val(S)\rfloor_{\eps_\round}$ on the value of the 
paltry items in the optimal solution appearing during each phase 
$p\in\{0,1,\dots,m\}$---given as fractions of the optimal solution 
value, rounded down 
to the nearest multiple of $\eps_\round$---the 
algorithm  
can 
narrow 
down the second summand $\val(S\cap V_\paltry)$ to 
$g\cdot \val(S)$ 
for some factor $g\in 
[\sum_{p=0}^mf_p,\sum_{p=0}^m(f_p+\eps_\round)]\subseteq 
[\sum_{p=0}^mf_p,\eps_\round(1+1/\eps_\paltry)+\sum_{p=0}^mf_p]$,
 where 
we have used once more the bound $m\le1/\eps_\paltry$, derived via 
the total value of all precious items. 
Together with the bounds found for 
the first summand in the previous paragraph we obtain the 
following two inequalities, in which the optimal solution value 
$\val(S)$ occurs three times: 
\[(1-\eps_\spread)t+\val(S)\sum_{p=0}^mf_p\le\val(S)\le 
t+\val(S)\big(\eps_\round(1+1/\eps_\paltry)+\sum_{p=0}^mf_p\big).\]

From this we obtain for $\val(S)$ a lower bound 
of $L_\ell=(1-\eps_\spread)t/(1-\sum_{p=0}^mf_p)$ and the presumed 
upper bound 
$U_\ell=t/(1-\eps_\round(1+1/\eps_\paltry)-\sum_{p=0}^mf_p)$, 
which is 
an actual 
upper 
bound only during the last level. 
The upper and lower bound lie a factor 
\begin{align*}
\frac{U_\ell}{L_\ell}={}&\frac{1-\sum_{p=0}^mf_p}{(1-\eps_\spread)(1-\eps_\round(1+1/\eps_\paltry)-\sum_{p=0}^mf_p)}\\
\le{}&\frac{1-\sum_{p=0}^mf_p}{1-\eps_\spread-\eps_\round(1+1/\eps_\paltry)-\sum_{p=0}^mf_p}\\
={}&1+\frac{\eps_\spread+\eps_\round(1+1/\eps_\paltry)}{1-\eps_\spread-\eps_\round(1+1/\eps_\paltry)-\sum_{p=0}^mf_p}
\end{align*} apart. 
Note that this quotient no longer depends on 
$\ell$.
Furthermore,we may assume that 
$\sum_{p=0}^mf_p<1-\eps_\purelypaltry+\eps_\paltry$. 
This is because 
otherwise the algorithm 
could attain a fraction 
$\sum_{p=0}^mf_p-\eps_\paltry\ge 1-\eps_\purelypaltry$ of the 
optimal 
solution 
value by packing paltry items exclusively in an otherwise purely 
yield-greedy 
manner uninhibited by any value limit; 
this special 
case is covered by the dedicated advice bit $b_\purelypaltry$, 
which, when set to 
$1$, switches the restrictions applied to the algorithm's 
yield-greedy 
approach from a value limit to a size limit, as already explained 
at the beginning of this proof. 
We can therefore use 
$1-\sum_{p=0}^mf_p>\eps_\purelypaltry-\eps_\paltry$ 
to 
obtain 
\begin{align*}
\frac{U_\ell}{L_\ell}
\le{}&1+\frac{\eps_\spread+\eps_\round(1+1/\eps_\paltry)}{\eps_\purelypaltry-\eps_\paltry-\eps_\spread-\eps_\round(1+1/\eps_\paltry)}\\
={}&\frac{\eps_\purelypaltry-\eps_\paltry}{\eps_\purelypaltry-\eps_\paltry-\eps_\spread-\eps_\round(1+1/\eps_\paltry)}\\
\intertext{and thus}
\frac{L_\ell}{U_\ell}\ge{}&\frac{\eps_\purelypaltry-\eps_\paltry-\eps_\spread-\eps_\round(1+1/\eps_\paltry)}{\eps_\purelypaltry-\eps_\paltry}\\
={}&1-\frac{\eps_\spread+\eps_\round(1+1/\eps_\paltry)}{\eps_\purelypaltry-\eps_\paltry}\\
\ge{}&1-2\eps_\spread/\eps_\purelypaltry,
\end{align*}
where, for the 
last inequality, we used $\eps_\spread\ge 
\eps_\round(1+1/\eps_\paltry)$,
which is satisfied by our choices of 
$\eps_\spread=\eps^4/\eps^{14}$, $\eps_\round=\eps^6/2^{20}$, 
and $\eps_\paltry=\eps^2/2^5$.  
In other words: During the final level the algorithm's 
estimate 
$U_\ell$ is indeed an 
upper 
bound on $\val(S)$ exceeding the real value by 
at most 
$U_\ell\cdot 2\eps_\spread/\eps_\purelypaltry$.
The algorithm can thus derive from any hypothesized upper bound 
$U_\ell$ 
for $\val(S)$ an actual lower bound 
$U_\ell(1-2\eps_\spread/\eps_\purelypaltry)$.
We will put this insight to use when analyzing 
the competitive ratio of the algorithm. 

\subparagraph*{The Slot System.}
In the precious part of the knapsack, the 
algorithm uses a 
\emph{slot 
system} to store 
presumably precious items. 
There are $m$ \emph{slots}, each of which can accommodate 
one such item, and all of them are \emph{empty} in the 
beginning. They will be 
\emph{filled} one by one, each with exactly one 
item. The items in 
filled slots may also 
be exchanged at some point. 
In contrast to what happens in the simpler 
algorithm for 
the proportional version of the problem, slots might also 
be emptied again when the algorithm decides to level up. 
The second difference is that there are now two stages to filling a 
slot.
Any slot first has to be filled \emph{virtually} before this 
filling is \emph{actualized} at some point. The precise meaning of 
these two notions is explained later on. 
The order in which the slots are filled virtually is determined by 
the modulo classes of the precious items $u_1,\dots,u_m$ in the  
optimal solution: 
Slot $j$ must be filled virtually 
after slots 
$1$ through $j-1$ and before the other 
slots. As soon as this condition is met, slot $j$ is filled 
virtually with the first \emph{matching item}, that is, the first 
item from the 
same modulo class as $u_j$.
The order in which $m$ slots receive an actualized filling is 
encoded in the tuples $a_0,a_1,\dots,a_m$ given 
by the advice. 
A filling may be actualized long after or right when it has 
happened virtually, just never before that. 

When exactly a slot is filled virtually, instead of just in which 
order, is described in the following paragraph; when exactly the 
fillings are actualized is explained after the discussion on 
packing the paltry part of the knapsack.

\subparagraph*{The Phase Progression.}
We say that the algorithm is in \emph{phase} $p$ 
if slot $p$ has 
been filled virtually but not yet slot $p+1$. 
Since all precious items are discarded when the 
algorithm is 
leveling up, all slots are empty again at the 
start of a new level, 
implying that the algorithm's phase is reset to 
$0$ as well. 
In general, the current phase $p$ ends as soon as the 
instance ends or presents a 
presumably precious item $i$ 
satisfying the following two conditions. 
On the one hand, it is from the same modulo class 
as 
$u_p$---recall that we can compute $M(i)$ for 
any given item and 
know $M(u_p)$ from the 
advice. On the other hand, the virtual algorithm can 
fit it into the 
precious part of the 
knapsack beside the items 
already present in the other slots, even if just virtually, and the 
packed paltry items, 
including the currently used fraction of any split item. 
If 
both of these conditions are 
met, said item $i$ is 
virtually packed into slot $p$. For any other presumably 
precious item $i$, 
it is first 
checked whether there are \emph{matching slots}---that is, slots 
reserved for items from modulo class $M(i)$---filled 
with items larger 
than 
$i$. If this is the case, a largest of these 
items is evicted, either just virtually or actually, and 
item $i$ takes its 
place. Otherwise, the presumably precious item $i$ 
is discarded.

\subparagraph*{Packing the Paltry Part.}
The paltry items in the instance are in 
principle packed in a 
yield-greedy manner, that is, optimizing the ratio 
of value to size 
on the paltry items. 
If there were no precious items, this would 
already guarantee a 
sufficient approximation to the optimal solution 
value; we would 
lose at most the value of a single paltry item. 
If there are precious items, however, then we need 
to restrict the 
reservation of paltry items such that they never 
block a crucial 
precious item from filling an empty slot. But too 
much restriction 
is bad as well; if we always favor the 
packing of precious 
items matching an empty slot over accruing 
paltry items, then we 
might discard too many paltry items; after all, it 
is impossible to 
give 
exact bounds on the sizes of the optimal 
precious items, and the 
paltry items can be infinitesimally small. 

The right type of restriction keeps the total 
size of the reserved 
paltry items low enough to allow for filling the 
next empty slot 
when the right precious items appears but not 
more than a constant 
fraction below the optimal amount. 
Directly encoding the right volume bound into 
the advice is 
impossible, however, since the requisite precision grows 
indefinitely with decreasing item sizes.
Instead, we can limit the value that we should be 
accumulating in the form of 
paltry items in each phase. This does not 
immediately bound the 
size occupied by these items---we have no 
knowledge about the 
necessary yield---but will always do so just in time to 
fill all regular slots with precious items. 

The value goal for each phase is indicated with 
a precision of 
$\eps_\round$ and always rounded down. Additionally, this value 
goal is indicated relative to $\val(S)$, for which the algorithm 
has a lower bound $L_\ell$ that is too small by at most a factor of 
$2\eps_\spread/\eps_\purelypaltry$ as seen above. 
Since there are at most $1/\eps_\paltry+1$ phases, 
the 
value lost due 
to this rounding can be bounded by  
\begin{align*}\label{eq:paltry}
(1/\eps_\paltry+1)(\eps_\round+2\eps_\spread/\eps_\purelypaltry)\val(S)\le
(2\eps_\round+4\eps_\spread/\eps_\purelypaltry)\val(S)/\eps_\paltry,
\end{align*}
which we can still tolerate, as shown by the competitive 
analysis in the end of this proof.  
However, the situation could be exacerbated by the 
fact that when 
discarding a single paltry item to get below the 
rounded value 
threshold, we could incur a loss of up to 
$\eps_\paltry\val(S)$, the 
maximum value of a paltry item, instead of only 
$(\eps_\round+1+1/\eps_\paltry)\val(S)$ in every phase. There are 
up to 
$1/\eps_\paltry+1$ phases, so this could sum up 
to a total loss just shy of $\val(S)$. 
In the following, we introduce what might be called a virtual 
version of our algorithm as a 
tool to circumvent this problem. 

\subparagraph*{The Virtual Algorithm.}
Beside the actual advice algorithm producing our online solution, 
we consider a 
\emph{virtual version} of the algorithm that has one huge, if 
imaginary, advantage: It may use a so-called \emph{splitting slot}. 
The splitting slot can store one paltry item at any point, and the 
virtual algorithm may then split off from the item currently stored 
there any 
desired fraction to be used in the packing of the knapsack. 
The fraction used of the \emph{split item} is allowed to change at 
any moment.  
However, the algorithm must always take the same fraction for the 
value and the size of an item, meaning that 
the split-off part retains the yield of the full item. 

In other words, the algorithm can, for an 
arbitrary positive $r\le1$, treat 
an item $i$ of size $\size_i$ and value $\val_i$ in the splitting 
slot 
as though it were an item of size 
$r\cdot\size_i$ and value, 
$r\cdot\val_i$
and the 
$r$ can be adjusted arbitrarily at any time. 

This enables the virtual algorithm to hit the previously described 
target value 
precisely and persistently, once it could have exceeded it, by 
maintaining in the 
splitting slot a paltry item 
of highest yield after those actually kept outside the splitting 
slot, and always taking just the 
right fraction to stay at the value limit as long as possible. 
The unused fraction is assumed to neither take up any space nor contribute any value. 

There are two ways for an item to be forced out of the splitting 
slot.  
If items of higher yield---or identical yield but smaller 
ones---and sufficient total value appear, the 
present split item will become useless to the virtual algorithm, 
which then discards it for good. 
If, however, there is a phase change accompanied by an increase of 
the value limit, 
then the algorithm may decide to retrieve the entire item from the 
splitting slot, pack it in 
the knapsack regularly, 
and store an item of lower yield---or identical yield 
but lower size---in the splitting slot instead. 

Having described the behavior of the virtual version of the 
algorithm, which uses the imagined capabilities of the splitting 
slot, we now explain how the actual algorithm acts based on the 
decisions taken by the virtual version. 

\subparagraph*{The Actual Algorithm.}
The actual algorithm is essentially trying to trace every  
action of 
its virtual version. 
However, it is not allowed to split any items and thus faced 
with a binary decision 
whenever the virtual algorithm stores an item in the splitting 
slot: 
either pack said item as a whole 
or discard it completely. 
Both choices come with their own set of problems. 

It is the simpler option for the algorithm to discard any split 
item, as it then can still imitate the virtual version in every 
other respect, by operating as though any fraction from the split 
item were available at any time. In particular, the precious items 
would be handled absolutely identically by both versions of the 
algorithm. However, if too many split items kept by the virtual 
algorithm are discarded, then their total value could accumulate to 
a substantial part the optimal solution value. 
Thus it will be necessary to store split items under certain 
circumstances. 

This opens up the possibility for the second type of error, 
entailing even more serious 
consequences. If the actual algorithm packs an item 
in its entirety instead of only the fraction used 
by the virtual version, then there is less space left 
for precious items needed to fill up all slots in the end. 
The naïve approach of just following the virtual version in 
packing a precious item whenever feasible---that is, unless packed 
paltry items, including the used fraction of the split item, would 
need to be discarded---does not work here. 
This could entice the algorithm into filling a slot with a precious 
item much larger than what this slot will contain in the end, thus 
preventing the packing of rather large but still important paltry 
items into the 
splitting slot. 

Instead, the algorithm uses the advice tuples $a_0,a_1,\dots,a_m$ 
to navigate the two issues in such a way that its final output is 
in fact identical to that of the virtual version, except for 
missing the one item remaining in the splitting slot at the end.

Specifically, the advice tuple 
$a_j=(a_{j,1},a_{j,2},\dots,a_{j,m_j})$ is used in phase $j$ as 
follows. 
The algorithm does not pack any precious items or split items by 
default. 
Once the virtual algorithm packs an item into slot $a_{j,1}$, 
however, the actual algorithm tries to follow suit. 
It might be unable to do so if it has stored items of a larger total 
size due to its inability to split items. 
If it can fill a new item into slot $a_{j,1}$ alongside with 
the virtual version, however, then from this moment on, the 
algorithm starts actually packing into slot $a_{j,1}$ whatever the 
virtual version packs into it; we say that slot $a_{j,1}$ is now 
\emph{actualized}. 
After actualizing slot $a_{j,1}$, the algorithm starts monitoring 
slot $a_{j,2}$, which might already be filled virtually at this 
point, but is not yet filled actually. Whenever a new item---assume 
that it is precious for the moment---is 
stored into $a_{j,2}$ by the virtual algorithm, the actual 
algorithm observes if it could do the same. As soon as such an 
opportunity arises, it is taken, and slot $a_{j,2}$ is now also 
actualized. This continues until the tuple is exhausted, which 
means that no more slots will be actualized in this phase. 
If the advice tells the algorithm to actualize the splitting slot, 
which has the number $0$, with a paltry item, this all works 
identically, with two exceptions:
First, if the virtual algorithm updates an already actualized item 
in the splitting slot, 
then all slots that were actualized after the splitting slot in the 
current phase are 
virtualized again, meaning that the actual algorithm discards the 
items stored in them. 
And second, the splitting slot is reverted to be virtual at the 
beginning of each 
new phase. Nevertheless, an item already stored in the splitting 
slot at the beginning of a phase remains there until it is replaced 
by a future item. 

We have now fully described our algorithm, whose pseudo-code is 
found in \cref{alg:general_log_bits_upper}. 
Note that both the value-restricted and the size-based 
yield-greedy procedure used for 
packing the paltry items are implemented by the function 
\textsc{Greedy}. 

Turning now to the analysis of the algorithm, 
we will first show that the algorithm performs sufficiently in the 
special mode where items are filtered based on their size.
Then we prove that the virtual version eventually fills all slots 
and always attains the value goal on the paltry items, resulting in 
a total value that surpasses our final goal with more than 
$\eps_\paltry\val(S)$ to spare. 
Finally, we prove that the actual 
algorithm can successfully re-create the solution by the virtual 
version up to potentially one single paltry item, which we can 
forego without problem due to the leeway afforded by the virtual 
version.

\paragraph*{\textit{Success of Size-Based Strategy.}}
We first consider the rather simple case that the special bit 
$b_\purelypaltry$ is set to $1$. The oracle does this if and only 
if a purely yield-greedy approach on the paltry items, ignoring all 
precious ones, is guaranteed to achieve a competitive ratio of 
$1/(1-\eps_\purelypaltry)$ on the given instance. 
However, as mentioned before, the algorithm does not know for sure 
which items will turn out to be paltry in the end; thus it is 
implementing a size-based strategy instead in this case, packing 
only small items and dismissing the big ones.

Since the big items all have size at least $\eps_\klein$, at most 
$1/\eps_\klein$ of them fit into any feasible solution, in 
particular the $1/(1-\eps_\purelypaltry)$-competitive solution that 
would 
have been produced by greedily packing only paltry items. 
Omitting from this solution, which contains only paltry items, the 
up to $1/\eps_\klein$ big items decreases its value by less than 
$\eps_\paltry/\eps_\klein$. This means that we now have a solution 
containing only small items, worth at least a fraction 
$1-\eps_\purelypaltry-\eps_\paltry/\eps_\klein$ of the optimal 
solution 
value. Running a yield-greedy strategy on the small items either 
reproduces this solution perfectly or it leaves, as soon as a 
packed item is discarded, an unfilled gap smaller than 
$\eps_\klein$. 
Since the yield is optimized for the filled part of the knapsack, 
this is a solution worth at least a fraction 
$(1-\eps_\purelypaltry-\eps_\paltry/\eps_\klein)(1-\eps_\klein)\ge 
1-\eps_\purelypaltry-\eps_\paltry/\eps_\klein-\eps_\klein$ of the 
optimal 
solution value, which for our choices of $\eps_\klein=\eps/2^3$
and 
$\eps_\paltry=\eps^2/2^5$ is greater than $1-\eps/2\ge 1/(1-\eps)$. 
We 
can therefore assume $b_\purelypaltry=0$ from now on.

\paragraph*{\textit{Success of Virtual Version.}}
Assuming that the algorithm already starts in its final 
level---which is 
proved to be possible without loss of generality in its own section 
on leveling up---we show that the virtual version completes all 
phases and hits its value goal on the paltry items precisely in 
each phase. 
Recall that after reaching the final level, the notions of 
\emph{paltry} and 
\emph{precious} 
coincide with \emph{provenly paltry} and \emph{presumably 
precious}, 
respectively. 

Denote by $\val_{\paltry,j}=\val(S\cap V_{\paltry,j})$ and $\size_{\paltry,j}=\size(S\cap 
V_{\paltry,j})$ the total value and size, respectively, of the items in optimal solution $S$ 
appearing after $u_j$ but before $u_{j+1}$.
Recall that the algorithm computes $f_j\cdot L_\ell$---where $f_j=\lfloor\val(S\cap 
V_{\paltry,j})\rfloor_{\eps_\round}$ is given directly by the advice and $L_\ell$ is the 
current lower bound on $\val(S)$---as a fairly accurate lower bound on $\val_{\paltry,j}$. 
Denote this lower bound by $\val'_{\paltry,j}=f_j\cdot L_\ell$, and denote by 
$\size'_{\paltry,j}$ the proportional lower bound on $\size_{\paltry,j}$, namely 
$\size'_{\paltry,j}=\size_{\paltry,j}\cdot 
\val'_{\paltry,j}/\val_{\paltry,j}$. 
The desired statement follows from a two-part claim by 
induction 
over 
$p\in\{0,1,\dots,m+1\}$, where we use once more our auxiliary 
definition of two imaginary items $u_0=0$ and 
$u_{m+1}=n+1$.

\subparagraph*{Induction Claim.} 
\begin{enumerate}
\item 
After the decision on item $u_p$, the first $p$ slots are filled, 
whether just virtually or 
already actualized, 
with items of size at most 
$\sum_{j=1}^p\size(u_j)$. 
\item 
Consider the paltry items in the knapsack when $u_p$ is presented, 
including the used 
fraction of any split item. Among these, select items in a manner  
greedy for high yield and secondarily for low 
size---allowing for any items to be split, regardless 
of whether they are in the splitting slot---and stop only if 
the total value reaches $\sum_{j=0}^{p-1}\val'_{\paltry,j}$. 
The value of this selection is in fact exactly 
$\sum_{j=0}^{p-1}\val'_{\paltry,j}$ and its size at most 
$\sum_{j=0}^{p-1}\size'_{\paltry,j}$. 
\end{enumerate}

\subparagraph*{Base Case.} 
The base case $j=0$ is trivially true:
Phase $0$ begins 
with the decision on the auxiliary item $u_0=0$, which does not fit 
any of the slots, which thus all remain empty. 
The second part claim is empty as well due to 
$\sum_{j=0}^{-1}\val'_{\paltry,j}=\sum_{j=0}^{-1}\size'_{\paltry,j}=0$.
 
\subparagraph*{Induction Step.}
Let an arbitrary $p\in\{0,1,\dots,m\}$ be given. 
We assume the claim for all $p'\in\{0,1,\dots,p\}$ as our 
induction hypothesis and use it to prove the statement for $p+1$. 
By the first part of the hypothesis, the algorithm is at least in 
phase $p$ when the items between $u_p$ and $u_{p+1}$ are presented. 
Therefore, the value limit for packing paltry items is already 
$\sum_{j=0}^{p}\val'_{\paltry,j}$ or higher at the beginning of 
this period of the instance. 
The set $V_{\paltry,p}$ of paltry items presented during this 
period contains a set of items, namely $S\cap V_{\paltry,p}$, with 
total size $\val_{\paltry,p}$ and size $\size_{\paltry,p}$. 
A yield-greedy online packing choosing from these items with a 
value limit 
$\val'_{\paltry,p}$ achieves, if splitting items arbitrarily is 
allowed, a selection with a value of exactly $\val'_{\paltry,p}$ 
and---since the selection has the highest possible yield---size at 
most $\size'_{\paltry,p}$. 
We know from the second part of the hypothesis that at the 
beginning of the considered period the knapsack has already 
contained paltry items allowing for a selection of value 
$\sum_{j=0}^{p-1}\val_{\paltry,j}$ and size at most 
$\sum_{j=0}^{p-1}\size_{\paltry,j}$. 
The combination of this selection with the previously described one 
has value $\sum_{j=0}^p\val'_{\paltry,j}$ and some size which is 
at most 
$\sum_{j=0}^p\size'_{\paltry,j}$. Our algorithm will therefore also 
achieve at the end of the considered period a selection of this same
value and some size that might be even lower---since it has to 
chance to use some items in $V_{\paltry,p}\setminus S$ of 
potentially greater yield---but never 
higher. 
This concludes the proof of the second part of the claim.

For the first part, it suffices to consider the case that the first 
$p+1$ slots do not yet contain items of size at most 
$\sum_{j=1}^{p+1}\size(u_j)$ when item $u_{p+1}$ is presented. 
By the induction hypothesis, the first $p$ slots already contained 
items of 
size at most $\sum_{j=1}^{p}\size(u_j)$ after the decision on 
$u_j$. 
This remains true until the instance ends since items in a 
slot are only ever replaced by smaller ones. 
If slot $p+1$ is already filled when $u_{p+1}$ is presented, then 
the item in this slot either has size at most $\size(u_{p+1})$, or 
it is replaced by $u_{p+1}$, both of which mean that the induction 
claim for $p+1$ is satisfied. 
Finally, if slot $p+1\le m$ is still empty upon the presentation of 
$u_{p+1}$, then we can combine the first part of our induction 
hypothesis and the already proven second part for $p+1$ to bound 
the total size of all packed items at this moment by 
\[\sum_{j=0}^p(\size(u_j)+\size'_{\paltry,j})\le 
\sum_{j=0}^p(\size(u_j)+\size_{\paltry,j})\le 
\sum_{j=0}^{m}(\size(u_j)+\size_{\paltry,j})-\size(u_{p+1})
=\size(S)-\size(u_{p+1}).\] 
Since $S$ is a feasible solution, item $u_{p+1}$ will fill slot 
$p+1$, which concludes the proof of the second part of the claim,  
the induction step from $p$ to $p+1$, and thus the entire induction.

\paragraph*{\textit{Success of Actual Algorithm}}
Having seen that the virtual version of the algorithm performs 
sufficiently well---the exact competitive analysis follows later 
on---we now prove that the actual algorithm successfully mimics 
this behavior in the following sense: 
Despite deviating from the virtual online solution by 
non-negligible values at some times, the actual algorithm produces 
a final output that is identical to that of the virtual solution 
$S'$, except for the omission 
of one single paltry item, namely the one remaining in the 
splitting slot when the instance ends. 
We will see that the actual algorithm, which uses exactly the 
same phases and value 
limits as the virtual version, can use the advice to actualize  any 
item that is part of the eventual solution immediately when is 
appears. 
The first key toward this is the following observation. 
During any given phase, the value limit used by the algorithm 
remains fixed. 
Hence, any item that is split at any point during a given 
phase either remains in the splitting slot until the end of this 
phase or---if it leaves the splitting slot before the phase has 
ended---it is discarded in favor of 
items of greater yield or identical yield but larger size, as 
dictated by the yield-greedy approach for paltry items. 
This means that the paltry items remaining in the knapsack at the 
end of a phase, excluding the one in the 
splitting slot at this point, have never been in the splitting slot 
during the entire phase.
By not actualizing the splitting slot, the algorithm will therefore not miss during this 
phase any paltry item from the eventual virtual solution $S'$ but potentially the one item 
that is in the splitting slot at the end of the phase. Moreover, 
actualizing the splitting slot only 
upon appearance of this one item suffices to not miss any paltry item from $S'$ in this 
phase.
The algorithm can therefore focus on actualizing the precious items and paltry items that 
are in the splitting slot at the end of some phase---although it
does not know beforehand which of the paltry items will turn out to 
be of this kind.  

The second key, addressing this issue, is to actualize the 
slots in the right order, which is made possible by the 
advice. 
Denote by $w_1,w_2,\dots,w_{m'}$ those items from the virtual 
solution $S'$, excluding the 
item remaining in the splitting slot in the end, that are either precious 
or have been in the splitting slot at the end of some phase, 
in the order as they appear during the instance. 
These are exactly the items that the algorithm might miss by not actualizing them. 
We inductively prove that the actual algorithm actualizes the items  
$w_1,w_2,\dots, w_{m'}$.

\subparagraph*{Induction Claim.} 
Immediately after the decision on $w_i$, the algorithm has 
actualized the items 
$w_1,w_2,\dots,w_i$. 

\subparagraph*{Induction Basis.}
Using once more the auxiliary definition of an imaginary item $w_0=u_0=0$ of zero value and 
size, the claim for $i=0$ is empty. 

\subparagraph*{Induction Step.}  
Our induction hypothesis is that the items $w_1,w_2,\dots,w_i$ are already packed and 
actualized immediately after the decision on $w_i$. These items will never be replaced by 
the actual algorithm because the virtual version does not do so either since they are part 
of its eventual solution $S'$. 

The advice tells us into which slot the virtual algorithm will pack 
$w_{i+1}$ in the current 
phase. 
Whenever the virtual version packs an item into this slot, the 
algorithm will try to 
actualize it. 
We only need to prove this to be possible without exceeding the 
given capacity. 

If the oracle advises the algorithm to not actualize any item in the splitting slot in the 
current phase, this is trivial: The actual algorithm never actualizes anything not also 
fully packed by the virtual version, which never exceeds the 
knapsack capacity. 

The more difficult case is that the virtual algorithm puts into the splitting slot during 
this phase some paltry item $i'$ that will be part of the eventual 
virtual 
solution in its entirety rather than just as a split item. 
In the moment when this item is put into the splitting slot, the remaining 
packed paltry items have a total size that will never be 
undercut during the rest of the instance. 
This is because the only way to decrease the total size 
of the packed paltry items is to discard some of them, which would 
be possible only if the 
item of lower priority in the splitting slot were discarded 
beforehand 
by the virtual algorithm, which it cannot do since $i'$ is part of 
the eventual solution. 
This means that $S'$ contains paltry items of some total size at least as large the size of 
paltry items packed when $i'$ appears, plus the entire item $i'$, plus all the items 
$w_1,w_2,\dots,w_{m'}$. Thus the total size of $S'$ is at least what the total size in the 
knapsack becomes for the actual algorithm when packing $i'$ 
fully upon appearance, in addition to those the items 
from $w_1,w_2,\dots,w_{m'}$ that have 
appeared earlier and those yet to appear. 
It might happen that the algorithm has actualized the slot for a 
precious item $w_{j'}$ before it has appeared, which could mean 
that this slot contains an item larger than $w_{j'}$, 
which in turn could block the replacement of the item in the splitting slot by one that has 
higher yield, but is also larger. 
However, recall that in this case the algorithm just re-virtualizes the slots to be 
actualized after the splitting slot in the current phase, dismissing all precious items in 
them. 
Therefore, the algorithm can and will indeed always actualize $w_{i+1}$ upon appearance.

\paragraph*{\textit{Competitive Analysis}}
We analyze the competitive ratios achieved on 
the precious items 
and the 
paltry items separately.
During this, we assume that the algorithm starts in 
the final level and thus never discards previously packed 
items due to leveling 
up; this assumption is justified afterward. 

We still denote by $S$ the optimal solution picked 
by the oracle to 
compute the advice, and let 
$T$ denote the final output of the online 
algorithm. 
\subparagraph*{Precious Part.}
It follows from the termination of all phases in 
the last level and the actualization of all slots 
that the precious part of 
the online solution $T$ contains exactly as many 
precious items in 
each value class as 
$S$ does. Each class spans a value interval whose 
lower and upper 
end spread a factor of exactly 
$1-\eps_\spread$, thus we know that $\val(T\cap 
V_\precious)\ge 
(1-\eps_\spread)\val(S\cap 
V_\precious)$. 
The loss on the precious items is therefore at 
most a fraction 
$\eps_\spread$ of the optimal solution value.  

\subparagraph*{Paltry Part.}
For the paltry items, we have already bounded the 
lost value to a fraction 
$(2\eps_\round+4\eps_\spread/\eps_\purelypaltry)/\eps_\paltry$ 
of the optimal solution when 
discussing how the paltry part of the knapsack 
is packed using the 
splitting slot.

\subparagraph*{Leveling Up.}
It remains to show that the potentially 
unbounded number of resets, 
which occur whenever the algorithm is leveling 
up, are only 
causing a negligible value loss. 

The first trigger causing a level change was 
the appearance of an item 
worth more than the items considered presumably precious so far. 
This trigger ensures that the algorithm never mistreats a precious 
item as paltry since doing so could mean missing a precious item 
that should have filled a slot. If a precious item is not already 
recognized as such upon arrival, the algorithm immediately levels 
up to rectify this and pack the item if necessary.

The second trigger for leveling up was that the 
algorithm can 
construct, from the items seen so far---whether 
they have been accepted or not---a 
valid solution whose value exceeds the 
upper bound $U_\ell$ 
on the optimal solution value $\val(S)$. 
The newest item is part of 
any such hypothetical solution---otherwise the 
level would have 
risen earlier---and when excluding it, the remaining 
possible 
solutions are all 
worth less than the current upper bound $U_\ell$. 

When a level change is occurring---independent of whether it was 
caused by the first or second trigger---this means that the 
estimate 
$U_\ell$ has in fact been far 
too low: 
It has been at most a fraction 
$(1-\eps_\spread)^{K-1}\le\eps_\paltry$ of what it is 
becoming now, as already noted in the section on notions and 
notation. 
Moreover, we have proved, when explaining how the optimal 
solution value $\val(S)$ is estimated, that we can 
derive from any upper 
bound $U_\ell$ a guaranteed lower 
bound 
$L_\ell=U_\ell(1-2\eps_\spread/\eps_\purelypaltry)$.
 
Combining the three facts from the last two paragraphs, 
we see that the 
value of any 
solution that can be constructed from the items 
to be discarded 
with the level change is worth at most 
$\eps_\paltry\val(S)/(1-2\eps_\spread/\eps_\purelypaltry)$.
If the final optimal 
solution contains any of these items, then 
omitting them means 
losing at most a fraction 
$\eps_\paltry/(1-2\eps_\spread/\eps_\purelypaltry)<
2\eps_\paltry$ 
of the overall value, where the last inequality is satisfied for 
our choices of $\eps_\purelypaltry=\eps/2^3$, 
$\eps_\paltry=\eps^2/2^5$, and 
$\eps_\spread=\eps^4/2^{14}$. 
While we might incur 
such a loss 
of a factor bounded by $2\eps_\paltry$ with every new level change, 
the 
aggregate loss of 
the previous resets is fortunately also diminishing in relation to 
the estimate for $\val(S)$ with every update of this estimate to a 
higher value. 
Hence the total fractional 
value loss caused by the potentially unbounded number of resets of 
the algorithm 
can be bounded by 
$\sum_{k=1}^\infty 
(2\eps_\paltry)^k=2\eps_\paltry/(1-2\eps_\paltry)\le
4\eps_\paltry$.
Deducting this fraction from the value of the 
final online 
solution, we are justified in assuming without loss of 
generality that 
the algorithm already starts at the final level. 

\Cref{alg:general_log_bits_upper} implements all of this by 
resetting the 
input sequence to its initial state upon each level change, and 
then 
acting out what it would have done, had it been at the current 
level from the beginning. For all items encountered at past levels 
this is a mere simulation since all previously packed items are 
discarded during a reset. To allow for this simulation, the 
algorithm maintains a set $J_\seen$ of all items seen so far and 
writes them off as lost whenever the level changes. 
The lost items are remembered by storing during each reset the 
current $J_\seen$ as $J_\lost$. This set is then omitted from the 
online output to keep to real. 

\subparagraph*{Conclusion.}
Combining the three types of potential losses 
discussed above---the one on the precious part, the one on the 
paltry part, 
and the one due to leveling up repeatedly---to a 
worst-case scenario, 
we are still left with at least a 
fraction
$1-\eps_\spread-(2\eps_\round+4\eps_\spread/\eps_\purelypaltry)/\eps_\paltry-4\eps_\paltry$
 of the optimal solution value, which is greater than $1-\eps/2\ge 
 1/(1+\eps)$, 
 concluding the proof. 
\end{proof}

\begin{algorithm}[H]
	\caption{{}\textsc{ProPack}}
	\label{alg:general_log_bits_upper}
	\medskip
	\textbf{Parameter:} Any  $\eps\in(0,1/2]$. 

	\medskip
	\textbf{Online Input:} A sequence 
	$I=((s_1,v_1),\dots,(s_n,v_n))$ with the sizes and 
	values of $n$ items.

    \medskip
	\textbf{Online Output:} A packing 
	$T=(T_\paltry\cup\bigcup_{j=0}^m\{\,t_j\mid 
	b_{\actual,j}=1\,\})\setminus J_\lost$ 
	that never exceeds the knapsack 
    capacity and is 
	$(1+\eps)$-competitive in the 
	end. 

	\medskip
	\textbf{Advice:} 
	\begin{itemize}
	\item The modulo class $M(v_1)$ of the first 
	item in the 
	instance, 
	\item the tuple $(M(u_1),\dots,M(u_m))$, where $u_j$ is the 
	$j$th precious item in the
	fixed optimal solution $S$ in order of appearance, and $M(u_j)$ 
	its modulo class, 
	\item thus implicitly $m$,
	\item for every $j\in\{0,\dots,m\}$ the rounded 
	fraction 
	$f_j=\lfloor\val(S\cap 
	V_{\paltry,j}/\val(S)\rfloor_{\eps_\round}$
	 of the optimal solution value 
	$\val(S)$ due to paltry 
	items appearing after $u_j$ but before $u_{j+1}$, 
\item dedicated to one special case, an advice bit\\
	$b_\purelypaltry=
	\begin{cases}1&\text{if greedily packing only paltry items
	is $1/(1-\eps_\purelypaltry)$-competitive}\text{ and}\\
	0&\text{otherwise, and}
	\end{cases}$
\item for every $j\in\{1,\dots,m\}$ a tuple 
$a_j=(a_{j,1},a_{j,2},\dots,a_{j,m_j})$ containing the numbers of 
the slots---with $0$ representing the splitting 
slot---filled during phase $j$ with an item that is entirely part 
of the eventual virtual solution $S'$, in the order in which these 
fillings occur.
	\end{itemize}

	\medskip
	\textbf{Algorithm:}
	\begin{algorithmic}[H]
		\State $\eps_\klein\gets 
		\eps/2^3;\quad\eps_\paltry\gets 
		\eps^2/2^5$ \Comment{Initialize several 
		constants \dots}
		\State $\eps_\spread\gets\eps^4/2^{14};\quad
		\eps_\round\gets\eps^6/2^{20}$ \Comment{\dots{}\;for the given parameter $\eps$.}
		\medskip
		\For{$j\in\{0,1,\dots,m\}$} \Comment{Initialize the 
		slots, where slot $0$ is the splitting slot, as empty \dots}
			\State $t_j\gets 0$ \Comment{\dots{}\;by filling them with an imaginary 
			item $0$ of size and value zero, \dots}
			\State $b_{\actual,j}\gets0$ \Comment{\dots{}\;and 
			initialize the state of the filling to be virtual only.}
		\EndFor
		\State $I_\backup\gets I$ 
		\Comment{Store a copy of the input.}
		\State $T_\paltry \gets \emptyset$ 
		\Comment{Initialize to set of all completely packed paltry 
		items 
		to 
		the empty set.}
		\State $J_\seen \gets \emptyset$; \quad $J_\lost \gets 
		\emptyset$ \Comment{Initialize the 
		sets 
		of seen and lost items to the empty 
		set.}
		\State $p \gets 0$   \Comment{Initialize the 
		phase, 
		equal to the number of 
		virtually filled slots, to zero.}
		\State $q\gets 0$  \Comment{Initialize counter for 
		the number of slot actualized during current phase to $0$.}
		\State $\ell\gets 1$  \Comment{Initialize the 
		level, showing 
		what the algorithm 
		considers precious, to $1$.}
		\medskip	
		\Function{Greedy}{$T',\val_\limit,\size_\limit,i_\split$} 
		\Comment{Greedily reduce the 
		potential packing \dots}
			\State $T' \gets T'\cup\{i_\split\}$ 
			\Comment{\dots{}\;passed to the function, including the split item 
		$i_\split$,  
			\dots}
			\State $T' \gets \{i \in T'\mid \size(i)< 
			\size_\limit\,\}$ 
			\Comment{\dots{}\;after discarding anything of size at 
			least $\size_\limit$, 
			\dots}
			\medskip
			\While{$\size(T')>1$\textbf{ or 
		\	}$\val(T')>\val_\limit$} 
			\Comment{\dots{}\;to a 
			valid solution of value at most 
			$\val_\text{limit}$.}
				\State $T'_\textnormal{LowYield} \gets 
				\argmin\{\,\val(j)/\size(j)\mid 
				j\in T'\,\}$ \Comment{Find 
				items of  
				lowest yield.}
				\State $i_\split \gets 
				\Pop(\argmin\{\,\size(j)\mid 
				j\in 
				T'_\textnormal{LowYield}\,\})$ 
				\Comment{Take a smallest 
				among them.}
				\State $T'\gets 
				T'\setminus\{i_\split\}$  
				\Comment{Separate out from $T'$ the item for the 
				splitting slot.}
			\EndWhile
			\State \Return $(T',i_\split)$ 
			\Comment{Return the 
			reduced solution and 
			the current split item.}
		\EndFunction
        \medskip
		\Function{VirtualSize}{} 
		\Comment{Returns virtual solution size, including used part 
		of split 
		item.}
			\State 
			$\size_\temp\gets\size(T_\paltry\cup\bigcup_{j=1}^m\{\,t_j\})$
			 \Comment{Size of current virtual solution, excluding 
			split item.}
			\State $r\gets\min\{1, (1-\size_\temp)/\size(t_0)\}$ 
			\Comment{Calculate the fraction $r$ of split item to 
			be used.}
			\State \Return 
			$\size(T_\paltry\cup\bigcup_{j=1}^m\{\,t_j\})+r\cdot\size(t_0)$
			\Comment{Return the complete virtual solution.}
		\EndFunction
		\medskip
		\Function{Actual}{} 
		\Comment{Returns online solution of 
		packed paltry items and items in \dots}
			\State \Return 
			$T_\paltry\cup\bigcup_{j=0}^m\{\,t_j\mid 
			b_{\actual,j}=1\,\}$ 
			\Comment{\dots{}\;actualized slots, ignoring level losses.}
		\EndFunction
		\algstore{longalgorithm}
	\end{algorithmic}
\end{algorithm}

\addtocounter{algorithm}{-1}
\begin{algorithm}[H]
	\caption[Test]{{}\textsc{ProPack} (Continuation)}
	\begin{algorithmic}[1]
		\algrestore{longalgorithm}
		\medskip
		\While{$I\neq\emptyset$} \Comment{As long as 
		the instance 
		has 
		not ended, \dots}
			\medskip
			\State $i\gets \Pop(I)$ 
			\Comment{\dots{}\;take 
			the next item in the input sequence.}
			\medskip
			\If $b_\purelypaltry=1$ \Comment{If it is advised to 
			just pack small items of size at most $\eps_\klein$,  
			\dots}
					\State 				
					$(T_\paltry,t_0)\gets\textsc{Greedy}(T_\paltry\cup\{i\},\infty,\eps_\klein,t_0)$
					 \Comment{\dots{}\;then do so greedily, 
					 \dots}
					\State \textbf{continue} \Comment{and continue 
					this 
					way until the instance ends.}
			\EndIf
			\medskip
			\State $J_\seen\gets J_\seen\cup\{i\}$ 
			\Comment{Update 
			the set of 
			items seen so far.}
			\State $b_\levelup\gets 0$ 
			\Comment{Initialize 
			the Boolean checking 
			for a level change to zero.}
			\medskip
			
			\If{$W(i)>\ell$} \Comment{If index of 
			new item's 
			comboclass is greater than current 
			level, \dots}
				\State $\ell\gets W(i)$ \Comment{\dots{}\;then this 
				index becomes the new level.}
				\State $b_\levelup\gets 1$ 
				\Comment{Flag the level change.}
			\EndIf
			\medskip
			\State \LeftComment*{Compute estimated 
			upper bound $U$ 
			for $\val(S)$ based 
			on advice and current level $\ell$:}
			\State $U\gets 
			v_1\sum_{j=1}^m(1-\eps_\spread)^{V(1)+M(u_j)-\ell
K}/(1-\eps_\round(1+/\eps_\paltry)-\sum_{p'=0}^mf_{p'})$
			\medskip
			\State \LeftComment*{Compute optimal 
			solution value 
			attainable with items 
			seen at the current 
			level:}
			\State $\val_\text{max}\gets \max\{\,\val(S') 
			\mid S' 
			\subseteq J, 
			\size(S')\le 1\,\}$ 
			\medskip
			\If{$v_{\max}>U$} \Comment{If this value 
			exceeds the 
			upper bound, \dots}
				\State $\ell\gets \ell+1$ \Comment{\dots{}\;then 
				increase the level by one 
				\dots} 
				\State $b_\levelup\gets 1$ 
				\Comment{\dots{}\;and 
				flag the level 
				change.}
			\EndIf
			\medskip
			\If{$b_\levelup=1$} \Comment{If 
			new item has 
			triggered a level change, then  
			\dots}
				\State $p\gets 
				0$; $q\gets 0$; $T_\paltry\gets\emptyset$; 
				$t_0\gets 0$
				 \Comment{\dots{}\;reset the 
				 algorithm by re-initializing \dots}
				\medskip
				\For{$j\in\{0,1,\dots,m\}$} \Comment{\dots{}\;almost 
				everything, including \dots}
					\State $t_j\gets 0; b_{\actual,j}\gets0$ 
					\Comment{\dots{}\;the slots and their 
					actualization states, \dots}
				\EndFor
				\medskip
				\State $J_\lost\gets J_\seen$; 
				 \Comment{\dots{}\;excepting only the level and 
				 items 
				 seen so far, and,
				 \dots}
				\State 
				$I\gets I_\backup$				
				\Comment{\dots{}\;after restoring the input sequence to the 
				original 
				state, 
				\dots}
				\State \Continue \Comment{\dots{}\;re-start the algorithm, remembering the level and 
				all items lost.}
			\EndIf
			\medskip
			\If $W(i)<\ell$ \Comment{If the new item is 
			provenly paltry, 
			then calculate \dots}
				\State $L\gets 
				U(1-2\eps_\spread/\eps_\purelypaltry)$
				\Comment{\dots{}\;a lower 
				bound on $\val(S)$ close to $U$, and \dots}
				\State $\val_\maxpaltry\gets L\cdot 
				\sum_{p'=0}^pf_{p'}$ 
				\Comment{\dots{}\;use it to 
				derive an 
				upper bound on the 
				\dots}
				\medskip
				\Statex \Comment{\dots{}\;targeted total value 
				of paltry 
				items to be packed before $u_{p+1}$ is presented.}
			\medskip
				\State $t_0'\gets t_0$ \Comment{Store item in 
				splitting slot to 
				see if it will be changed 
				\dots}
				\State $(T_\paltry,t_0)\gets 
				\textsc{Greedy}(T_\paltry\cup\{i\},\val_\maxpaltry,\infty,t_0)$
				 \Comment{\dots{}\;by the update.}
				\medskip
				\If $t_0'\neq t_0$ \textbf{and} $0\in 
				\{a_{p,q},\dots,a_{p,m_p}\}$ 
				\Comment{If so, and splitting slot is to be \dots}
					\medskip
					\While $a_{p,q}\neq0$ \Comment{\dots{}\;actualized or has been, then 
					re-virtualize 
					all \dots}
						\State $b_{\actual,q}\gets0$; $q\gets q-1$ 
						\Comment{\dots{}\;recent slots down to the splitting slot.}
					\EndWhile
				\EndIf
				\medskip
				\If $a_{p,q}=0\textbf{ 
				and }
				\size(\textsc{Actual})+\size(t_0)\le 1$ \Comment{If 
				advised and possible, \dots}
					\State $b_{\actual,0}\gets1$ \Comment{\dots{}\;then (re-)actualize 
					the virtual slot, and \dots}
					\State $q\gets q+1$ \Comment{\dots{}\;increment 
					the counter of 
					slots actualized in the phase.}
				\EndIf
			\EndIf
			\medskip
			\algstore{longalgorithmtwo}
	\end{algorithmic}
\end{algorithm}

\addtocounter{algorithm}{-1}
\begin{algorithm}[H]
	\caption[Test]{{}\textsc{ProPack} (Continuation)}
	\begin{algorithmic}[1]
		\algrestore{longalgorithmtwo}
			\medskip
			\If $W(i)=\ell$ \Comment{If the newly arriving item is 
			presumably precious \dots}
			\medskip
				\If $M(i)=M(u_{p+1})$ 
				\Comment{\dots{}\;and 
				also matches the 
				next empty slot \dots}
					\medskip
					\If $\textsc{VirtualSize}+\size_i\le 1$ 
					\Comment{\dots{}\;without  
					exceeding the knapsack capacity, \dots }
						\State $j\gets p+1$ 
						\Comment{\dots{}\;then 
						store the number of the slot to be filled 
						virtually, 
						\dots}
						\State $p\gets p+1$
						\Comment{\dots{}\;enter the next phase, \dots}
						\State $q\gets 0$
						\Comment{\dots{}\;and 
						accordingly reset the actualization 
						counter.}
					\EndIf 
			\medskip
				\Else \Comment{If new item 
				does not match next 
				empty slot, find among all \dots}
					\State 
					$J_\matches\gets\{\,j'\in\{1,2,\dots,m\}\mid 
					M(t_{j'})=M(i)\,\}$\Comment{\dots{}\;matching slots \dots}
					\State 
					$j\gets\Pop(\arg\max\{\,s(t_{j'})\mid
					j'\in J_\matches\,\})$ \Comment{\dots{}\;one 
					with a 
					largest item.} 
					\State $t_{j}\gets i$
					\Comment{Fill the new item 
					 into the right slot, replacing any present 
					 item.}
				\EndIf
				\medskip
				\If $a_{p,q}=j\textbf{ 
				and }
				\size(\textsc{Actual})+\size(t_j)\le 1$ \Comment{If 
				advised and possible, \dots}
					\State $b_{\actual,0}\gets1$ \Comment{\dots{}\;then actualize 
					the new filling and \dots}					
					\State $q\gets q+1$ \Comment{\dots{}\;update the actualization counter.}
				\EndIf
			\EndIf
			\medskip
		\EndWhile
\State \Return $\textsc{Actual}\setminus J_\lost$ \Comment{Return 
the solution, omitting any items from the lost levels.}
\medskip
	\end{algorithmic}
\end{algorithm}
\bigskip
\bibliography{references}

\end{document}